\documentclass[11pt]{article}

\usepackage{authblk} % Add this package for handling multiple affiliations

\usepackage[margin=1in]{geometry}
\usepackage{setspace}
% \setstretch{1.1} % linespacing

%\pdfoutput=1

\usepackage[makeroom]{cancel}
\usepackage{xcolor}
\usepackage{graphicx}
\usepackage{amsmath}
\usepackage{amssymb}
\usepackage{xspace}
\usepackage[small]{subfigure}
\usepackage[numbers,compress]{natbib}
\usepackage[hyperfootnotes=false]{hyperref}
\usepackage{tcolorbox}

\usepackage{physics}
\usepackage{braket}
\usepackage{tikz}
\usetikzlibrary{calc}
\usepackage[normalem]{ulem}
\usepackage{cleveref}

\linespread{1.1}

\newlength{\fighskip} \fighskip=2pt
\newlength{\figvskip} \figvskip=3pt

\newcommand*{\figbox}[2]{{
  \def\figscale{#1}
  \def\arraystretch{0.8}
  \arraycolsep=0pt
  \begin{array}{c}
    \vbox{\vskip\figscale\figvskip
      \hbox{\hskip\figscale\fighskip
        \includegraphics[scale=\figscale]{#2}}}
  \end{array}}}

\usepackage{mciteplus} 
\usepackage{dcolumn}
\usepackage{bm}
\usepackage{verbatim}
\usepackage{amscd}
\usepackage{amsfonts}
\usepackage{setspace}
\usepackage{amsthm}
\usepackage{enumerate}
\usepackage{mathtools}

\theoremstyle{plain}
\newtheorem{theorem}{Theorem}
\theoremstyle{plain}
\newtheorem{lemma}{Lemma}
\newtheorem{fact}{Fact}
\newtheorem{corollary}[lemma]{Corollary}

\DeclareMathOperator{\ED}{ED}

\DeclareMathOperator{\CAP}{Cap}

\DeclareMathOperator{\Area}{Area}

\DeclareMathOperator{\N}{\mathrm{\textbf{N}}}

\DeclareMathOperator{\EPR}{EPR}

\DeclareMathOperator{\dist}{dist}
\newcommand{\prob}[1]{\mathbb{P}\left(#1\right)}
\newcommand{\EX}[1]{\mathbb{E}\left(#1\right)}
\newcommand{\defeq}{\stackrel{\text{def}}{=}}
\newcommand{\calM}{\mathcal{M}}
\newcommand{\st}{\text{s.t.}}

\usepackage[commandnameprefix=ifneeded]{changes}  
\usepackage{xparse}
\NewDocumentCommand{\BY}{o m}{%
  \IfNoValueTF{#1}%
    {\todo[inline, color=purple!40]{\textbf{Beni:} #2}}% set default to inline 
    {\todo[color=purple!40, fancyline]{\textbf{Beni:} #2}}% % default
}
\NewDocumentCommand{\ZL}{o m}{%
  \IfNoValueTF{#1}%
    {\todo[inline, color=blue!40]{\textbf{Zhi:} #2}}% set default to inline 
    {\todo[color=blue!40, fancyline]{\textbf{Zhi:} #2}}% % default
}
\definechangesauthor[name={Zhi}, color=blue]{ZL}

\graphicspath{{Fig/}}

\begin{document}

\title{
%%%%%%report number%%%%%%%%
\vspace{-70pt}
\hfill
{\normalsize YITP-25-15}\\
\hfill
{\normalsize RUP-25-8}\\
\vspace{55pt}
%%%%%%%%%%%%%%%%%%%%%%%%%%%
\bf 
Tripartite Haar random state has no bipartite entanglement}
\author[1,2]{Zhi Li\thanks{Current affiliation: IBM Quantum. zli@ibm.com}}
\author[3,2,4]{Takato Mori\thanks{takato.mori@yukawa.kyoto-u.ac.jp}}
\author[2]{Beni Yoshida\thanks{byoshida@perimeterinstitute.ca}}
\affil[1]{\em \small National Research Council Canada, Waterloo, Ontario N2L 3W8, Canada}
\affil[2]{\em \small Perimeter Institute for Theoretical Physics, Waterloo, Ontario N2L 3W8, Canada}
\affil[3]{\em \small Department of Physics, Rikkyo University, %\protect\\
3-34-1 Nishi-Ikebukuro, Toshima-ku, Tokyo 171-8501, Japan}
\affil[4]{\em \small Yukawa Institute for Theoretical Physics, Kyoto University, \protect\\
Kitashirakawa Oiwakecho, Sakyo-ku, Kyoto 606-8502, Japan} 
\date{}

\maketitle

\begin{abstract}
We show that no EPR-like bipartite entanglement can be distilled from a tripartite Haar random state $|\Psi\rangle_{ABC}$ by local unitaries or local operations when each subsystem $A$, $B$, or $C$ has fewer than half of the total qubits.  
Specifically, we derive an upper bound on the probability of sampling a state with EPR-like entanglement at a given EPR fidelity tolerance, showing a doubly-exponential suppression in the number of qubits.
Our proof relies on a simple volume argument supplemented by an $\epsilon$-net argument and concentration of measure.
Viewing $|\Psi\rangle_{ABC}$ as a bipartite quantum error-correcting code $C\to AB$, this implies that neither output subsystem $A$ nor $B$ supports any non-trivial logical operator.
We also establish general constraints on the structure of tripartite entanglement in Haar random states, showing that W- or GHZ-like entanglement cannot be distilled and that nontrivial global symmetries are absent.
Finally, we discuss a physical interpretation in the AdS/CFT correspondence, indicating that a connected entanglement wedge does not necessarily imply bipartite entanglement, contrary to a previous belief.

\end{abstract}

\tableofcontents

\section{Introduction}

Quantum entanglement lies at the heart of many fundamental questions in quantum physics. 
However, the study of strongly entangled quantum systems poses significant challenges, as analytical and numerical approaches are often limited.
Haar random states provide useful insights into the physical properties of certain many-body quantum systems.  
These states, characterized by their entanglement properties averaged over random ensembles, provide a powerful tool for understanding complex quantum systems with some degree of analytical tractability.
In condensed matter physics, Haar random states (and unitaries) have proven instrumental in understanding dynamical properties, as in eigenstate thermalization hypothesis (ETH)~\cite{Srednicki:1994mfb}, scrambling dynamics~\cite{Hayden:2007cs}, and random matrix theory~\cite{mehta2004random}.
In the AdS/CFT correspondence, Haar random states and unitaries serve as minimal toy models of a quantum black hole~\cite{Page:1993df, Hosur:2015ylk, Cotler:2016fpe}, and Haar random tensor networks serve as toy models obeying the Ryu-Takayanagi formula at the AdS scale at the leading order~\cite{Pastawski:2015qua, Hayden:2016cfa}. 
Also, the properties of Haar random states have played a central role in quantum information theory, as many protocols and fundamental questions rely on these states or on approximations that mimic their behavior~\cite{4262758}.
These examples represent only a fraction of the widespread applications of Haar randomness, which has become an essential concept across diverse areas of modern physics.

Bipartite entanglement of a Haar random state $|\Psi\rangle_{AB}$, when the total system is bipartitioned into two complementary subsystems $A$ and $B$, is well understood~\cite{Page:1993df, Lubkin:1978nch, Lloyd:1988cn}. 
In particular, $|\Psi\rangle_{AB}$ contains $\approx \min(n_A,n_B)$ copies of approximate EPR pairs shared between $A$ and $B$ up to some local unitary (LU) transformations $U_A\otimes U_B$. 
However, the nature of \emph{tripartite entanglement} in a Haar random state $|\Psi\rangle_{ABC}$, where the system is divided into three complementary subsystems $A,B,C$, remains much less understood.
While various entanglement measures and properties of tripartite Haar random states have been studied extensively, the most fundamental question remains open: whether two subsystems $A$ and $B$ in $|\Psi\rangle_{ABC}$ exhibit bipartite EPR-like entanglement or not.

In this paper, we prove that no EPR-like entanglement can be distilled between two subsystems by local unitary (LU) transformations or local operations (LO) when each subsystem $A,B,C$ has fewer than half of the total qubits. 
Specifically, we derive an upper bound with doubly-exponential suppression (in terms of $n$) on the probability of sampling a quantum state with bipartite entanglement. 
Hence, we show that quantum entanglement in a tripartite Haar random state $|\Psi\rangle_{ABC}$ is non-bipartite, despite the fact that the reduced mixed state $\rho_{AB}$ possesses a large amount of quantum (non-classical) correlations (e.g., the mutual information $I(A:B)\sim O(n)$ and the logarithmic negativity $E_{N}(A:B)\sim O(n)$). 
We also discuss an application of our results in the context of quantum error-correcting codes.
Viewing $|\Psi\rangle_{ABC}$ as a random encoding isometry $C \to AB$ with input $C$ and output $AB$, our results imply that each output subsystem $A$ or $B$ supports no logical operator of the code if $n_C<n_A+n_B$ and $|n_A-n_B|<n_C$.

We will also discuss the implications of our results in the context of the AdS/CFT correspondence.
Namely, our results on LU- and LO-distillability suggest that a connected entanglement wedge does not necessarily imply the presence of EPR-like entanglement, contrary to previous beliefs. 
Furthermore, our results on logical operators lead to a surprising prediction concerning entanglement wedge reconstruction: they suggest the possible existence of extensive bulk regions whose degrees of freedom cannot be reconstructed on $A$ or $B=A^c$ when the boundary is bipartitioned into $A$ and $B$. 

\begin{figure}
\centering
a)\includegraphics[width=0.22\textwidth]{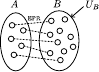}
b)\includegraphics[width=0.28\textwidth]{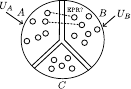}
\caption{a) Bipartite Haar random states with $|A|<|B|$. EPR pairs can be distilled by applying a local unitary $U_B$. 
b) Tripartite Haar random states. Can EPR pairs be distilled by local unitary rotations or local operations?
}
\label{fig_2vs4}
\end{figure}

\subsection{Entanglement in Haar random states}

Consider an $n$-qubit Haar random state $|\Psi_{AB}\rangle$ where qubits are bi-partitioned into two complementary subsystems $A$ and $B$ with $n_A$ and $n_B = n-n_A$ qubits respectively. 
For $n_A < n_B$, we have 
\begin{align}
\mathbb{E}\  \Big\Vert \rho_A - \frac{1}{2^{n_A}}I_A \Big\Vert_1 \lesssim 2^{(n_A-n_B)/2},
\end{align}
where $\mathbb{E}$ represents the Haar average\footnote{
Choosing a Haar random state means that one picks a quantum state uniformly at random from the set of all the $n$-qubit pure states. In particular, a Haar measure can be characterized as a unique probability distribution that is left- and right-invariant under any unitary operators. 
}.
This result is often referred to as Page's theorem~\cite{Page:1993df, Lubkin:1978nch, Lloyd:1988cn}, suggesting that $A$ is nearly maximally entangled with a $2^{n_A}$-dimensional subspace in $B$. 
Namely, there exists a local unitary $I_A \otimes U_B$ acting exclusively on $B$ such that
\begin{align}
(I_A \otimes U_B)|\Psi\rangle_{AB} \approx {|\text{EPR}\rangle^{\otimes n_A}}_{AA'}\otimes |\text{something}\rangle, \qquad |\text{EPR}\rangle = \frac{1}{\sqrt{2}}(|00\rangle + |11\rangle), \label{eq:Page}
\end{align}
where $A'\subseteq B$ and $|A|=|A'|$ with $|R|$ representing the number of qubits in a subsystem $R$. 
Hence, $n_A$ approximate EPR pairs can be distilled between $A$ and $B$ by applying some local unitary transformations without using measurements or classical communications. 
Such unitary transformations can be explicitly constructed by unitarily approximating the Petz recovery map~\cite{Yoshida:2017non}. 
It is worth noting that EPR pairs can be distilled from a single copy of $|\Psi_{AB}\rangle$ without considering the asymptotic (many-copy) scenario.

Next, consider a tripartite $n$-qubit Haar random state $|\Psi_{ABC}\rangle$ on $A$, $B$, and $C$.
We will focus on regimes where each subsystem occupies less than half of the system and thus satisfies
\begin{align}
S_{R} \approx  n_R \qquad \big(0 < n_R < \frac{n}{2}\big) 
\end{align}
for $R=A,B,C$. 
In particular, we will be interested in the asymptotic limit of large $n$. 
Two subsystems, say $A$ and $B$, have a large amount of correlations as seen in the mutual information
\begin{align}
S_{AB} \approx  n_C, \qquad I(A:B) \equiv S_A + S_B - S_{AB} \approx n_{A} + n_{B} - n_{C} \sim O(n).
\end{align}
While the mutual information does not distinguish classical and quantum correlations in general, it can be verified that these are non-classical by computing the logarithmic negativity~\cite{Lu:2020jza, Shapourian:2020mkc}:
\begin{align}
E_{N}(A:B) \approx \frac{1}{2} I(A:B), \qquad E_{N}(A:B) \equiv \log_2 \Big( \sum_j |\lambda_j| \Big),
\end{align}
where $\lambda_j$ are eigenvalues of the partial-transposed density matrix $\rho_{AB}^{T_{A}}$.

A naturally arising question concerns the nature of quantum entanglement in $\rho_{AB}$.
Namely, we will be interested in whether EPR pairs can be distilled in a single copy of $\rho_{AB}$ by applying some local unitary transformation $U_A\otimes U_B$ or some local operation $\Phi_A\otimes \Phi_B$.

When $|\Psi_{ABC}\rangle$ is randomly sampled from (qubit) \emph{stabilizer} states, the mixed state $\rho_{AB}$ contains $\approx \frac{1}{2}I(A:B)$ copies of unitarily (Clifford) rotated EPR pairs. 
This is essentially due to the fact that tripartite qubit stabilizer states have only two types of entanglement: bipartite entanglement (i.e., Clifford rotated EPR pairs) or GHZ-like entanglement~\cite{PhysRevA.81.052302, Nezami:2016zni}, with the latter being rare in random stabilizer states \cite{smith2006typical}.

Do tripartite Haar random states also consist mostly of bipartite entanglement? 
There has been a significant body of previous works addressing this question in a variety of contexts, but no definite/quantitative answer has been provided.
In particular, in the AdS/CFT correspondence, there have been extensive studies based on the bit thread formalism (see~\cite{Freedman:2016zud, Cui:2018dyq, Agon:2018lwq, Harper:2019lff, Bao:2023til} for instance) which, implicitly or explicitly, assume that entanglement in $\rho_{AB}$ are mostly bipartite with $\approx \frac{1}{2}I(A:B)$ EPR pairs.
However, another line of research ~\cite{Akers:2019gcv, Akers:2022max, Akers:2022zxr, Akers:2024pgq} has presented evidence for the presence of genuinely tripartite entanglement in a Haar random state. 

\subsection{Main results: tripartite Haar random state}

Consider a Haar random pure state $|\Psi_{ABC}\rangle$ supported on a $d$-dimensional Hilbert space with $n$ qubits ($d=2^n$). 
In this paper, we prove that no EPR pairs can be distilled from $\rho_{AB}= \Tr_C \big(|\Psi_{ABC}\rangle \langle \Psi_{ABC}| \big)$ via local unitary transformations or local operations when $n_R < \frac{n}{2}$ for $R=A,B,C$ in the large $n$ limit. 

\subsubsection{Local unitaries}
For a non-negative constant $0< h \leq 1$, the (one-shot) LU-distillable entanglement is defined as
\begin{align}\label{def-EDLU}
\ED^{\text{[LU]}}_{h}(A:B) 
\equiv \sup_{m \in \mathbb{N} }\sup_{\Lambda\in \text{LU}} \Big\{ 
 m  \Big| 
\Tr\big( \Lambda(\rho_{AB}) \Pi^{[\text{EPR}]}_{R_{A} R_B} \big) \geq h^2 
  \Big\},
\end{align}
where $\Lambda = U_A \otimes U_B \in \text{LU}$ represents a local unitary acting on $A\otimes B$, and $\Pi^{[\text{EPR}]}_{R_{A} R_B}$ is a projection operator onto $m$ EPR pairs supported on $R_A,R_B$. 
Here, $R_A \subseteq A$ and $R_B\subseteq B$, and $|R_A|=|R_B|=m$ with $|R_A|,|R_B|$ denoting the number of qubits in subsystems $R_A$ and $R_B$ respectively. 
The parameter $h$ controls the fidelity of EPR pairs, where $h =1$ corresponds to perfect EPR pairs while $h \approx 0$ corresponds to low fidelity EPR pairs.

\begin{theorem}\label{theorem:mainLU}
    If $\delta\defeq h^2-2^{-2m}>0$, then for an arbitrary constant $0<c<2$, we have
    \begin{equation}\label{eq:bound}
        \log\prob{\ED^{\text{\emph{[LU]}}}_{h}(A:B) \geq m}
        \leq -c\delta^2d+O\qty((d_A^2+d_B^2)\log \frac{1}{\delta}).
    \end{equation}
\end{theorem}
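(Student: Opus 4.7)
The plan is to combine a Haar concentration estimate at a fixed local unitary with an $\epsilon$-net over the local-unitary group and a union bound. Fixing subsystems $R_A\subseteq A$, $R_B\subseteq B$ with $|R_A|=|R_B|=m$ and local unitaries $(U_A,U_B)\in U(d_A)\times U(d_B)$, I would rewrite the target overlap as $F(\Psi) \defeq \Tr[\Lambda(\rho_{AB})\,\Pi^{[\text{EPR}]}_{R_AR_B}] = \langle\Psi|\tilde\Pi|\Psi\rangle$, where $\tilde\Pi \defeq [(U_A^\dagger\otimes U_B^\dagger)\,\Pi^{[\text{EPR}]}_{R_AR_B}\,(U_A\otimes U_B)]\otimes I_C$ is a projection of rank $r=d/2^{2m}$ on the full $ABC$ Hilbert space. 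A one-line Haar-moment calculation gives $\EX{F}=r/d=2^{-2m}$, which is exactly the null value subtracted in $\delta$.

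The first step is the single-point concentration. For any fixed $\tilde\Pi$, the overlap $F(\Psi)$ is distributed as $\text{Beta}(r,d-r)$; writing $F = N/(N+M)$ with independent $N\sim\Gamma(r,1)$, $M\sim\Gamma(d-r,1)$ and applying Chernoff to $N(1-h^2) - Mh^2 \geq 0$ gives $\log\prob{F\geq 2^{-2m}+\delta'} \leq -d\,J_p(\delta')$, where $p\defeq 2^{-2m}$ and a short saddle-point optimisation shows $J_p(\delta')/\delta'^{2} \geq 1/[2p(1-p)] \geq 2$ on $\delta'\in(0,1-p)$ as long as $p\leq 1/2$ (guaranteed here since $p\leq 1/4$). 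This supplies the raw exponent $-2d\delta'^{2}$ with no loss when $\delta'>0$.

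The second step is the $\epsilon$-net. A standard volume argument furnishes a cover $\mathcal{N}_\epsilon\subset U(d_A)\times U(d_B)$ of operator-norm spacing $\epsilon$ and cardinality at most $(C_0/\epsilon)^{d_A^{2}+d_B^{2}}$ for an absolute constant $C_0$. Since $|F(\Lambda)-F(\Lambda')|\leq 2\|U_A\otimes U_B - U_A'\otimes U_B'\|_\infty \leq 2(\|U_A-U_A'\|_\infty+\|U_B-U_B'\|_\infty)$, if $\sup_\Lambda F(\Lambda)\geq h^{2}$ then some $\Lambda_0\in\mathcal{N}_\epsilon$ obeys $F(\Lambda_0)\geq h^{2}-4\epsilon$. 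Union-bounding over $\mathcal{N}_\epsilon$ and over the $\binom{n_A}{m}\binom{n_B}{m}\leq 2^{n_A+n_B}$ choices of $(R_A,R_B)$ yields $\log\prob{\ED^{\text{[LU]}}_{h}(A:B)\geq m} \leq (d_A^{2}+d_B^{2})\log(C_0/\epsilon) + (n_A+n_B)\log 2 - 2d(\delta-4\epsilon)^{2}$. Setting $\epsilon=\alpha\delta$ for a small $\alpha>0$, the rate becomes $2(1-4\alpha)^{2}\, d\delta^{2}$; for any target $c<2$ one fixes $\alpha$ small enough that $2(1-4\alpha)^{2}\geq c$ and absorbs the resulting $\alpha$-dependent and $(n_A+n_B)\log 2$ terms into the stated $O((d_A^{2}+d_B^{2})\log(1/\delta))$ correction.

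The hard part will be extracting the sharp concentration constant: a crude Levy's-lemma application on $S^{2d-1}$ with Lipschitz constant $2$ yields a rate far smaller than $2$, so the proof must exploit the explicit $\text{Beta}(r,d-r)$ (equivalently, the Gamma-ratio Chernoff) structure at fixed $\Lambda$ to reach the leading rate $\geq 2$. The strict inequality $c<2$ in the theorem is then dictated precisely by the trade-off between $\epsilon$-net spacing and Lipschitz slippage, while the volumetric cover of $U(d_A)\times U(d_B)$ furnishes the correct $(d_A^{2}+d_B^{2})\log(1/\delta)$ overhead. The subsystem-selection factor $2^{n_A+n_B}$ is harmless because $n_A+n_B = \log_2(d_Ad_B)$ is already subsumed by the $\epsilon$-net overhead.
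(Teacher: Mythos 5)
Your argument is correct, but it is organized quite differently from the paper's own proof of \cref{theorem:mainLU}. You fix the local unitary $U_A\otimes U_B$, exploit the exact $\mathrm{Beta}(\tilde d, d-\tilde d)$ law of $\langle\Psi|\tilde\Pi|\Psi\rangle$ to get a tail bound with the sharp rate $2d\delta'^2$, and then union-bound over an operator-norm $\epsilon$-net of $\mathrm{U}(d_A)\times\mathrm{U}(d_B)$ of size $(C_0/\epsilon)^{d_A^2+d_B^2}$. The paper instead discretizes the set of \emph{states}: it covers the generalized spherical cap $\CAP^{(2\tilde d)}(h)$ (the states with bare EPR pairs) by an $\epsilon$-net, covers the $U_A\otimes U_B$-orbit of each net point by a further $\epsilon'$-net obtained by embedding the orbit into $S^{2Kd_A-1}$ via the Schmidt decomposition (\cref{lemma:distillable}), and then compares the resulting covered area against the area of a small ball around a Haar-random point. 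The quantitative engine is the same in both cases --- the sub-Gaussian/Chernoff tail of the Beta distribution, which is \cref{lemma:SubGaussian} in the paper --- and your route is essentially the one the paper uses for the LO case (\cref{thm-LO}), upgraded from Levy's lemma (which would only give rate $1$) to the exact Beta tail (rate $2$). The paper explicitly remarks that this alternative route proves \cref{theorem:mainLU} as well and even avoids the residual $O(\log d)$ overhead coming from the $p\log p$ factor in \cref{lem-net}, so your organization is, if anything, slightly cleaner. One small correction: your claimed inequality $J_p(\delta')\geq\delta'^2/[2p(1-p)]$ is only the local (Fisher-information) rate and fails globally --- e.g.\ for $p=1/4$, $q=1/2$ the optimized Chernoff exponent is $D(p\,\Vert\,q)\approx 0.131$ while $\delta'^2/[2p(1-p)]\approx 0.167$ --- but Pinsker's inequality gives $D(p\,\Vert\,q)\geq 2(q-p)^2$ uniformly, which is all you need to reach any $c<2$.
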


Note that the assumption $h^2>2^{-2m}$ is necessary. 
In fact, even without applying any unitaries, the region $R_AR_B$ already contains $m$ EPR pairs with fidelity $\sim2^{-2m}$, since $R_AR_B$ is nearly maximally mixed.

Theorem~\ref{theorem:mainLU} provides a meaningful bound when the second term of \cref{eq:bound} is subleading, ensuring that the right-hand side is negative.
The bound then implies that EPR pairs cannot be LU distilled from $\rho_{AB}$ with a fidelity better than that from a maximally mixed state.
In other words, any attempt to enhance the EPR fidelity by applying LU transformations $U_A \otimes U_B$ will be useless! 

Specifically, if we require $\delta$ to be a constant, then it suffices to assume $n_A, n_B=\frac{n}{2}-\omega(1)$ (here $\omega(1)$ means superconstant: a function $f(n)$ is $\omega(1)$ if and only if $\lim_{n\to +\infty} f(n)=+\infty$) which ensures that $d_A^2,d_B^2=o(d)$. 
Namely, 
\begin{itemize}
    \item If $n_A, n_B=\frac{n}{2}-\omega(1)$, then $\prob{\ED^{\text{{[LU]}}}_{h}(A:B) \geq m}\leq \exp(-\Theta(d))$ whenever $h^2-2^{-2m}=\Theta(1)$.
\end{itemize}
Noting that $d=2^n$, we see that the probability of $A$ and $B$ containing EPR-like entanglement is \emph{doubly} exponentially small in the number of qubits.

More generally, an even lower EPR fidelity, allowing $\delta$ to vanish, is also permissible. 
In fact, it suffices to require $\delta>c^\prime \, (d_A+d_B)\sqrt{\frac{\log d}{d}}$ for a sufficiently large constant $c^\prime$, which ensures that the second term remains subleading and the first term diverges to negative infinity.
As long as $n_A, n_B=\frac{n-\log_2 n}{2}-\omega(1)$, we have $(d_A+d_B)\sqrt{\frac{\log d}{d}}=o(1)$, ensuring our choice of $\delta$ can be satisfied.
Therefore,
\begin{itemize}
    \item If $n_A, n_B=\frac{n-\log_2 n}{2}-\omega(1)$, then $\prob{\ED^{\text{{[LU]}}}_{h}(A:B) \geq m}=o(1)$ whenever $h^2-2^{-2m}>c^\prime\, 2^{\max(n_A,n_B)-\frac{n-\log_2 n}{2}}$ for a sufficiently large constant $c^\prime$.
\end{itemize}

\subsubsection{Local operations}

Analogous to \cref{def-EDLU}, the (one-shot) LO-distillable entanglement with fidelity $h^2$ is defined as
\begin{align}
\ED^{\text{[LO]}}_{h}(A:B) 
\equiv \sup_{m \in \mathbb{N} }\sup_{\Lambda\in \text{LO}} \Big\{ 
 m  \Big| 
\Tr\big( \Lambda(\rho_{AB}) \Pi^{[\text{EPR}]}_{R_{A} R_B} \big) \geq h^2 
  \Big\},
\end{align}
where $\Lambda = \Phi_A \otimes \Phi_B$ represents a local operation (channel) acting on $A\otimes B$, and $\Pi^{[\text{EPR}]}_{R_{A} R_B}$ is a projection operator onto $m$ EPR pairs supported on $R_A,R_B$, where $|R_A|=|R_B|=m$.

\begin{theorem}\label{thm-LO}
    If $\delta\defeq h^2-2^{-m}>0$, then for an arbitrary constant $0<c<1$, we have
    \begin{equation}
        \log\prob{\ED^{\text{\emph{[LO]}}}_{h}(A:B) \geq m}
        \leq -c\delta^2d+O(2^{2m}(d_A^2+d_B^2)\log \frac{1}{\delta}).
    \end{equation}
\end{theorem}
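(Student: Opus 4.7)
The plan is to adapt the volume-plus-concentration argument used for Theorem~\ref{theorem:mainLU} to the larger class of local operations, the key difference being that local unitaries are replaced by Stinespring dilations of local channels. Each channel $\Phi_A$ can be realized as $\Phi_A(\rho) = \Tr_{E_A}[V_A\rho V_A^\dagger]$ for an isometry $V_A: \mathcal{H}_A \to \mathcal{H}_{R_A}\otimes\mathcal{H}_{E_A}$, and similarly for $\Phi_B$; I would fix $d_{E_A}$ and $d_{E_B}$ at the maximum-Kraus-rank bounds $d_A\cdot 2^m$ and $d_B\cdot 2^m$ so that every admissible channel is covered. The EPR fidelity then rewrites as
\begin{equation*}
F(V_A,V_B) \defeq \langle\Psi|\,M(V_A,V_B)\otimes I_C\,|\Psi\rangle, \qquad
M \defeq (V_A^\dagger\otimes V_B^\dagger)\bigl(\Pi^{[\text{EPR}]}_{R_A R_B}\otimes I_{E_A E_B}\bigr)(V_A\otimes V_B),
\end{equation*}
where $M$ is PSD with $\|M\|_{\text{op}}\leq 1$.

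For the pointwise analysis at a fixed $(V_A,V_B)$, I would use the Haar first moment to get $\mathbb{E}_\Psi[F] = \Tr[M]/(d_A d_B)$. A one-line Heisenberg-picture rewriting gives $\Tr[M] = \Tr\bigl[\Pi^{[\text{EPR}]}_{R_A R_B}(\Phi_A(I_A)\otimes\Phi_B(I_B))\bigr]$, and since the overlap of any product state on $R_A R_B$ with the $m$-EPR state is at most $2^{-m}$ (by a short Cauchy--Schwarz computation in the Schmidt basis), this yields the \emph{uniform} baseline $\mathbb{E}_\Psi[F] \leq 2^{-m}$. Concentration of the quadratic form $F$ in $|\Psi\rangle$, with Lipschitz constant $O(\|M\|_{\text{op}}) = O(1)$, then gives a Gaussian tail $\Pr[F \geq 2^{-m}+t] \leq 2\exp(-c\,d\,t^2)$ for any constant $c<1$, matching the first term in \cref{thm-LO}.

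The $\epsilon$-net step is parallel to the LU proof, but with the unitary group on $A$ replaced by the Stiefel manifold of isometries $V_A: \mathcal{H}_A \to \mathcal{H}_{R_A}\otimes\mathcal{H}_{E_A}$, whose real dimension is $O(d_A^2\cdot 2^{2m})$ rather than $O(d_A^2)$, and similarly for $V_B$. An operator-norm $\epsilon$-net therefore has size $\exp\!\bigl(O(2^{2m}(d_A^2+d_B^2)\log(1/\epsilon))\bigr)$, which is exactly the source of the extra $2^{2m}$ factor in the covering term. Bilinearity of $M$ in $(V,V^\dagger)$ gives an $O(1)$ Lipschitz constant of $F$ in $V_A\otimes V_B$, so choosing $\epsilon=\Theta(\delta)$ and union-bounding the pointwise concentration over the net yields the claimed inequality with $\delta = h^2 - 2^{-m}$.

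The main obstacle I anticipate is securing the uniform, channel-independent baseline $\mathbb{E}_\Psi[F]\leq 2^{-m}$: the $\epsilon$-net step only pays off if the pointwise mean does not itself depend on $(V_A,V_B)$ in a channel-favorable way, since otherwise $\delta$ could be beaten by optimizing over channels. The Stinespring reformulation combined with the elementary product-state/EPR overlap bound is what makes this baseline $(V_A,V_B)$-independent and explains why $2^{-m}$ is the correct threshold appearing in Theorem~\ref{thm-LO}. A secondary technical point is obtaining the sharp constant ``any $c<1$'' in the concentration, which I expect requires using the sub-exponential structure of quadratic forms on the complex sphere rather than a generic Levy bound.
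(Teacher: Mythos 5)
Your proposal is correct and follows essentially the same route as the paper's proof: Stinespring dilation with the Kraus-rank bound giving at most $2m$ ancilla qubits, the channel-independent baseline $\mathbb{E}_\Psi[F]\leq 2^{-m}$ (which the paper derives via a tensor-diagram identity and $\Tr(XY)\leq\Tr(X)\Tr(Y)$ for PSD operators, equivalent to your product-state/EPR overlap bound), Levy's lemma for the pointwise Gaussian tail, and an operator-norm $\epsilon$-net on the Stiefel manifold of isometries of size $(c/\epsilon)^{O(2^{2m}(d_A^2+d_B^2))}$ followed by a union bound with $\epsilon=\Theta(\delta)$. Your only superfluous worry is the constant ``any $c<1$'': the paper's version of Levy's lemma on $S^{2d-1}$ with a $1$-Lipschitz function already gives $\exp(-d\eta^2)$, and the loss down to $c<1$ comes solely from absorbing the $\epsilon$-net slack $(\delta-2\epsilon)^2$, so no refined sub-exponential analysis of quadratic forms is needed.
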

The threshold value $2^{-m}$ for $h^2$ is also optimal. In fact, we can consider a simple quantum channel on both sides: attaching $m$ ancillas in $\ket{0}^{\otimes m}$ and doing nothing. 
Making Bell measurement on the ancilla, the probability of getting $m$ EPR pairs is already $2^{-m}$.
Our bound essentially suggests that any attempt to enhance the EPR fidelity cannot outperform this simple quantum channel.  

This result is similar to \cref{theorem:mainLU}, with the threshold value $2^{-2m}$ replaced by $2^{-m}$, and $n_A$ (and $n_B$) replaced by $n_A+m$ (and $n_B+m$) in the second term.
Following the discussions above, we conclude that:
\begin{itemize}
    \item If $n_A, n_B=\frac{n}{2}-m-\omega(1)$, then $\prob{\ED^{\text{{[LU]}}}_{h}(A:B) \geq m}\leq \exp(-\Theta(d))$ whenever $h^2-2^{-m}=\Theta(1)$.
    \item If $n_A, n_B=\frac{n-\log_2 n}{2}-m-\omega(1)$, then $\prob{\ED^{\text{{[LU]}}}_{h}(A:B) \geq m}=o(1)$ whenever $h^2-2^{-m}>c^\prime \, 2^{\max(n_A,n_B)-\frac{n-\log_2 n}{2}+m}$ for a sufficiently large constant $c^\prime$.
\end{itemize}

\subsubsection{Logical operators}

Since the subsystem $C$ is nearly maximally entangled with $AB$ (assuming $n_C<n_A+n_B$), one can view a Haar random state $|\Psi_{ABC}\rangle$ as an approximate isometry $V: C\rightarrow AB$ that encodes $k=n_C$ logical qubits
\begin{align}
V \ : \ \figbox{1.8}{fig_isometry}
\end{align}
via the Choi isomorphism, where $C$ ($AB$) corresponds to the input (output) Hilbert space. 

We are interested in whether the encoded quantum information is recoverable from a single subsystem $A$ or not while the complementary subsystem $B$ is traced out (e.g., under erasure errors). 
This question can be addressed by asking whether a \emph{logical operator} can be supported on $A$ or not. 
Loosely speaking, $\overline{U}$ is said to be a logical unitary of $U$ when $\overline{U}$ implements an action of $U$ in the encoded codeword subspace.
If a logical operator $\overline{U}_{A}$ can be supported on $A$, then a piece of information about $U_C$ with respect to the initial state can be deduced from $\rho_A$.

The relation to the LU-distillation problem becomes evident by considering a pair of anti-commuting Pauli logical operators. 
Namely, if Pauli logical operators $\overline{X}, \overline{Z}$ could be supported on $A$, it would imply that an EPR pair could be LU-distilled between $A$ and $C$. 
Hence, one can deduce that logical Pauli operators $\overline{X}, \overline{Z}$ cannot be supported inside $A$. 
This observation enables us to establish the following no-go result on random encoding. 
(A rigorous and quantitative bound will be presented in \cref{thm:LOGICAL}). 

\begin{theorem}[informal]
    Consider a random encoding $C\to AB$. If $n_C<n_A+n_B$ and $n_A<n_B+n_C$, then $A$ contains no quantum information about $C$.
    Namely, $A$ does not support any non-trivial logical unitary operator. \footnote{The inequalities here are informal, serving as the counterpart of the $n_R<n/2$ ($R=A,B,C$) condition in the results for tripartite Haar random states. Precise conditions may be obtained via \cref{thm:LOGICAL} following the discussions below \cref{theorem:mainLU,thm-LO}.}  
\end{theorem}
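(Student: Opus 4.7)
The plan is to reduce the statement to the LU-distillation bound of \cref{theorem:mainLU}, applied to the bipartition $(A, C)$ with $B$ playing the role of the traced-out subsystem. The reduction follows the heuristic already spelled out in the paragraph preceding the theorem, but I would formalize it in two conceptual steps.

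First, I would argue that if $A$ supports a nontrivial logical unitary of the code, then $A$ supports a pair of anticommuting Hermitian unitary logical operators $\overline{X}_A = X_A \otimes I_B$ and $\overline{Z}_A = Z_A \otimes I_B$ that act as $\overline{X}, \overline{Z}$ on some single logical qubit $c \in C$. The reason is structural: the $*$-algebra of logical operators cleanable to $A$ decomposes as a direct sum of full matrix algebras, and if it is not scalar then at least one summand contains an $M_2$ factor, yielding the desired anticommuting pair. For the approximate (Haar-random) setting this step has to be made quantitative, and that is where the main technical subtlety lives.

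Second, given such $X_A, Z_A$, I would write $|\Psi\rangle_{ABC} = (V_{C'\to AB} \otimes I_C)|\Phi^+\rangle_{C'C}$ and use the Choi--Jamio\l{}kowski ricochet identity together with the logical relations $(X_A \otimes I_B)V = V X_{c'}$ and $(Z_A \otimes I_B)V = V Z_{c'}$ to show that both
\[
(X_A \otimes I_B \otimes X_c)|\Psi\rangle_{ABC} = |\Psi\rangle_{ABC}, \qquad (Z_A \otimes I_B \otimes Z_c)|\Psi\rangle_{ABC} = |\Psi\rangle_{ABC}.
\]
Since $X_A, Z_A$ are anticommuting Hermitian unitaries on $\mathcal{H}_A$, a standard structure argument produces a unitary $U_A$ on $A$ that conjugates them to $X_a \otimes I_{A \setminus a}$ and $Z_a \otimes I_{A \setminus a}$ on a single physical qubit $a \in A$. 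After rotating by $U_A$ on $A$, the state on the pair $(a, c)$ is exactly $|\text{EPR}\rangle$; in the notation of \cref{def-EDLU} this yields $\ED^{\text{[LU]}}_{h=1}(A:C) \geq 1$ via the local unitary $U_A \otimes I_C$. The third and last step is to apply \cref{theorem:mainLU} to the bipartition $(A, C)$: it gives a doubly-exponentially small bound on $\Pr[\ED^{\text{[LU]}}_{h}(A:C) \geq 1]$ whenever $d_A^2 + d_C^2 = o(d)$, i.e., $n_A, n_C < n/2$. Using $n_A + n_B + n_C = n$, these two inequalities are exactly the hypotheses $n_C < n_A + n_B$ and $n_A < n_B + n_C$, finishing the argument.

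The main obstacle will be making the first step quantitative. A Haar-random $V$ is only an approximate isometric code, so ``$\overline{X}_A$ implements a logical $\overline{X}$'' is really an approximate statement governed by a fidelity tolerance, and the would-be anticommuting pair need not be exact. I would track how this tolerance propagates through the ricochet identity into an EPR fidelity $h < 1$, and then verify that $h^2$ exceeds the $2^{-2}$ threshold of \cref{theorem:mainLU} by enough that the doubly-exponential probability bound still applies nontrivially. This quantitative accounting is presumably what \cref{thm:LOGICAL} carries out; conceptually, however, the entire chain of implications is the clean one suggested in the text: logical operator on $A$ $\Rightarrow$ EPR pair between $A$ and $C$ $\Rightarrow$ exponentially rare by \cref{theorem:mainLU}.
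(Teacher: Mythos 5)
There is a genuine gap in your first step, and it is not merely the quantitative issue you flag at the end. The claim that a non-trivial logical unitary on $A$ yields a pair of anticommuting Hermitian unitary logical operators is false in general: the algebra of logical operators supported on $A$ can be non-scalar yet abelian (a direct sum of $M_1$ factors), in which case no $M_2$ summand and hence no anticommuting pair exists. The GHZ state $\tfrac{1}{\sqrt{2}}(\ket{000}+\ket{111})_{ABC}$ is an explicit counterexample to your chain of implications: $Z_A$ is a logical operator for $Z_C$ with $\abs{\Tr(Z_C)}/d_C=0$, so it is non-trivial in exactly the sense of \cref{eq:logicalrequire}, yet no EPR pair between $A$ and $C$ is LU-distillable from that state. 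So the implication ``non-trivial logical unitary on $A$ $\Rightarrow$ $\ED^{\text{[LU]}}_{h}(A:C)\geq 1$'' does not hold, and your reduction to \cref{theorem:mainLU} only excludes the presence of a full logical \emph{qubit} on $A$, which is strictly weaker than the statement being proved (``$A$ does not support any non-trivial logical unitary operator''). The paper presents the anticommuting-pair picture only as intuition in the paragraph preceding the theorem; the actual proof is \cref{thm:LOGICAL}, which does not route through entanglement distillation at all.

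Concretely, the paper's argument fixes a pair $(U_A, U_C)$ with $\abs{\Tr(U_C)}/d_C\leq w$, observes that the fidelity functional $f_{U_AU_C}(V)=\bra{EPR}(V\otimes I)^\dagger U_A\otimes U_C^*(V\otimes I)\ket{EPR}$ has Haar mean $\Tr(U_A)\Tr(U_C^*)/(d_Ad_C)$ of magnitude at most $w$ and is $2^{-n_C/2}$-Lipschitz on the Stiefel manifold, applies Levy's lemma (Fact~\ref{lem:LevyISO}) to get $\exp(-c_2 d\,\eta^2)$ concentration, and then union-bounds over $\epsilon$-nets of $\mathrm{U}(\mathcal{H}_A)$ and $\mathrm{U}(\mathcal{H}_C)$ of sizes $(c/\epsilon)^{d_A^2}$ and $(c/\epsilon)^{d_C^2}$. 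This handles abelian logical operators and all others uniformly, and it also sidesteps the approximate-code bookkeeping you correctly identify as the hard part of your step one. If you want to salvage your route, you would need a separate argument excluding abelian (GHZ-like) logical structure in Haar random states, which is essentially a new theorem rather than a corollary of \cref{theorem:mainLU}.
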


One interesting corollary of this result is that the so-called cleaning lemma does not necessarily hold for non-stabilizer codes. 
To recap, the cleaning lemma for a stabilizer code asserts that, if a subsystem $A$ supports no non-trivial logical operators, then the complementary subsystem $B=A^c$ supports all the logical operators of the code~\cite{Bravyi:2009zzh}.
This fundamental result is central in establishing the fault-tolerance of topological stabilizer codes (those with geometrically local generators), as it ensures that logical operators can be supported on regions that avoid damaged qubits. 
While the original formulation is restricted to stabilizer codes, analogous properties (e.g. deformability of string logical operators) are known to hold in various models of topological phases beyond the stabilizer formalism.
Despite these examples, our result suggests that the cleaning lemma, in its original formulation given by~\cite{Bravyi:2009zzh}, does not extend to general non-stabilizer quantum error-correcting codes.

\subsubsection{Generalizations}
We also generalize our main theorems in three different directions in section~\ref{sec:generalizations}. 

\begin{enumerate}[i)]
\item \textbf{Tripartite entanglement:} Essentially no nontrivial tripartite pure state of smaller local dimension, such as GHZ, or W-like states, can be LU-distilled from a Haar random tripartite state.
\item \textbf{Chirality:} Tripartite Haar random states are intrinsically chiral. Namely, there is no tripartite LU which converts $|\psi\rangle_{ABC}$ into $|\psi^*\rangle_{ABC}$.
\item \textbf{Global symmetry:} Tripartite Haar random states do not possess any (approximate) global symmetry.
Namely, there is no tripartite LU satisfying $(U_A\otimes U_B \otimes U_C)|\psi\rangle_{ABC}\approx |\psi\rangle_{ABC}$.
\end{enumerate}

\subsection{Miscellaneous comments}

\begin{comment}
Let us present an intuition behind the first proof. 
\ZL[inline]{Now that we added a new subsection for intuition, shall we rearrange this paragraph?}
In a nutshell, we will show that the ``number'' of quantum states with LU-distillable EPR pairs is much smaller than the total ``number'' of quantum states in the Hilbert space. 
However, a set of quantum states in the Hilbert space is not discrete or finite.  
The idea is to consider a discrete set of quantum states, called an $\epsilon$-net, that covers the Hilbert space densely. 
By using states in the $\epsilon$-net as references, we will bound the likelihood of a Haar random state to have LU-distillable EPR pairs. 
A relevant idea was mentioned in our previous work~\cite{Mori:2024gwe}. 
Relying on this, our proof follows from tedious but elementary calculations. 
\end{comment}

\begin{comment}
The downside of this proof, however, is that it does not directly generalize to the LO-distillation. 
This is essentially due to that LOs may be viewed as local unitary operations with ancilla qubits, which spoil the above counting argument.
Our second proof overcomes this problem by observing that a quantum channel with 
$m$ output qubits can always be implemented with at most $2m$ ancilla. Additionally, we leverage the $\epsilon$-net for isometries and the concentration of measure in high-dimensional manifolds.
\end{comment}

Let us mention another relevant question concerning LOCC-distillable entanglement. 
Recall the \emph{hashing lower bound} for LOCC-distillable entanglement in the asymptotic scenario~\cite{Bennett:1996gf,devetak2005distillation}:
\begin{align}
\text{hash}(A:B) \leq  E_{D}(A:B), \qquad \text{hash}(A:B)\equiv \max(S_{A} - S_{AB}, S_{B} - S_{AB}, 0).
\end{align}
In~\cite{Mori:2024gwe}, we considered one-shot 1WAY LOCC-distillable entanglement in $\rho_{AB}$ and showed (see~\cite{Hayden_2006} also)
\begin{align}
E_{D}^{[\text{one-shot 1WAY}]}(A:B) \approx \text{hash}(A:B).
\end{align}
Hence, the hashing lower bound is tight for a tripartite Haar random state under the one-shot 1WAY scenario.

\subsection{Physical and proof intuition}

In this paper, we provide separate proofs for \cref{theorem:mainLU} and \cref{thm-LO}. 
The first proof follows an elementary approach with volume counting;
the second proof requires a few additional prerequisites, but the core idea remains similar. 
In fact, the proof of \cref{thm-LO} can be simplified to establish \cref{theorem:mainLU}.
Since the proofs are somewhat technically involved, it will be helpful to first provide some physical intuition and highlight the key technical ingredients and challenges.

\textbf{Counting argument.}
We begin with the intuition behind the first proof (section~\ref{sec:proof}). 
Consider a tripartite state $|\psi_{ABC}\rangle$ where each subsystem is smaller than half the total size, $n_{A},n_B,n_C < \frac{n}{2}$.
Suppose that, by acting only with local unitaries on $A$ and $B$, we can distill $m$ EPR pairs between them. 
After removing these distilled EPR pairs, the remaining state lives on $n'= n - 2m$ qubits and can be arbitrary:
\begin{align}\label{eq:UaUbpsi}
(U_A \otimes U_B)|\psi\rangle_{ABC} = {\ket{\text{EPR}}^{\otimes m}}_{AB}
\otimes |\text{something}\rangle.
\end{align}
Our goal is to argue that states admitting such a structure form an extremely small subset of the whole Hilbert space.

In an $n$-qubit system, there exist $2^n$ mutually orthogonal pure states.
However, if we relax the requirement of exact orthogonality and only demand that distinct states have small mutual overlaps, one can show that there are in fact doubly-exponentially many such \emph{nearly} orthogonal states.
Namely, letting $\Phi_{\text{state}}(n)$ denote the number of nearly orthogonal $n$-qubit states within some small inner-product cutoff, we have
\begin{equation}
    \Phi_{\text{state}}(n)\sim e^{O(2^n)}.
\end{equation}
Drawing a Haar random state is effectively equivalent to picking a state from this enormous set of quantum states.
Likewise, letting $\Phi_{\text{unitary}}(n)$ denote the number of distinguishable $n$-qubit unitaries, $\Phi_{\text{unitary}}(n)$ is also doubly exponential:
\begin{equation}
    \Phi_{\text{unitary}}(n)\sim e^{O(4^n)}.
\end{equation}
Here the difference in the exponents can be understood via the Choi-Jamiołkowski isomorphism where an $n$-qubit unitary can be viewed as a $2n$-qubit pure state.

Now we can estimate the total number of states with LU-distillable EPR pairs. 
The first step is to pick local unitaries $U_A \otimes U_{B}$ that distill $m$ EPR pairs. 
Since $\Phi_{\text{unitary}}$ grows doubly-exponentially, the number of choices for such unitary operators is upper bounded by 
\begin{align}
\Phi_{\text{unitary}}(n_A) \Phi_{\text{unitary}}(n_B) 
\sim e^{O(4^{n_A})+O(4^{n_B})}
<  \Phi_{\text{unitary}}(n_R + c), \qquad n_{R} = \max\big( n_A, n_B \big)
\end{align}
where $c > 0$ is an $O(1)$ constant. 
Once the EPR pairs are removed, the remaining $n'=n-2m$ qubits can be in an arbitrary state, giving $\Phi_{\text{state}}(n')$ possibilities.
Therefore, the total number of states with LU-distillable EPR pairs is upper bounded by 
\begin{align}
\Phi_{\text{unitary}}(n_R + c) \Phi_{\text{state}}(n') 
\sim e^{O(2^{2n_R + 2c})+O(2^{n-2m})}.
\end{align}
Importantly, if $n_R < \frac{n}{2}$ (more precisely, if $n_R=\frac{n}{2}-\omega(1)$), then:
\begin{equation}
    e^{O(2^{2n_R + 2c})+O(2^{n-2m})}/e^{O(2^{n})}
    < e^{-c'O(2^n)},
\end{equation}
which is doubly-exponentially small. (Here, $c'>0$ is another $O(1)$ constant).
In other words, among the doubly-exponentially large number of nearly orthogonal states in an $n$-qubit Hilbert space, only a doubly-exponentially tiny fraction of states admit LU-distillable entanglement between $A$ and $B$. 
This intuition also clarifies when the counting argument applies: when $2n_R \gtrsim n$ (i.e. one subsystem contains half of the system), the counting argument no longer suppresses such states, and LU-distillation of EPR pairs becomes possible.

\textbf{$\epsilon$-net.}
While the above argument is intuitively appealing, promoting it to a rigorous proof requires overcoming two technical hurdles.
First, the argument's scope is limited to exact EPR pairs (i.e., \cref{eq:UaUbpsi} holds perfectly), but we also hope to consider states that contain approximate EPR pairs, as measured by fidelity.
Second, although Haar measure is defined over a continuous space, the above intuitive argument is based on a discrete set of states.
To address these challenges, we employ the idea of an $\epsilon$-net, which allows us to discretize the continuous Hilbert space. 
Specifically, an $\epsilon$-net is a finite set of reference states, $\mathcal{M}_{\epsilon}$, with the property that any state in the entire space is guaranteed to be within a distance $\epsilon$ of some state in $\mathcal{M}_{\epsilon}$ \cite{Hayden_2004}. 
Elements of an $\epsilon$-net serve as reference points that allow us to characterize the Hilbert space. 

A generic lower bound on the number of states (the cardinality) of an $\epsilon$-net $\mathcal{M}_{\epsilon}$ can be obtained by a simple volume argument. 
However, known constructions contain significantly more states than this lower bound \cite{Hayden_2004}.
In this paper, we will prove the existence of $\mathcal{M}_{\epsilon}$ whose cardinality is essentially tight with respect to this bound (lemma~\ref{lem-net}, Corollary~\ref{corollary:e-net}). 
This construction is crucial and enables us to obtain an essentially tight threshold for LU-distillation results. 
The construction may also be of independent interest.

\textbf{Concentration of measure.}
While the above counting argument provides physically transparent intuition, it unfortunately does not apply to LO-distillation problems.
The main reason is that LO allows ancilla qubits that are initially unentangled, which spoils the volume-counting argument.
Moreover, LO modifies the relevant fidelity threshold significantly, as explained below in \cref{thm-LO}.
As such, we need to use a different strategy.

The key idea is to use concentration of measure.
Many physical quantities computed for Haar random states exhibit concentration of measure, meaning that their probability distributions sharply concentrate around their expectation values.
This phenomenon is most precisely captured by \hyperref[LevyLemma]{Levy’s lemma}, which establishes exponential concentration for observables satisfying a continuity (more precisely, Lipschitz) property.
It turns out that the EPR fidelity satisfies this condition.

We will be interested in the distribution of the EPR fidelity.
For given LOs (characterized by isometries $V_A$ and $V_B$), the expectation value of the EPR fidelity depends on the choice of isometries.
Thus, we cannot directly identify the location of concentration.
The key observation, however, is that we can uniformly upper bound the average fidelity over all possible isometries.
This allows us to conclude that measure concentration occurs at or below the threshold, and therefore the probability of LO distillation is exponentially suppressed.

To extend the argument from fixed isometries to arbitrary local operations, we apply the concentration bound uniformly over an $\epsilon$-net of isometries and then take a union bound.
Although the number of candidate local operations grows exponentially in subsystem dimensions, the concentration estimate is exponentially strong in the total Hilbert space dimension.
Consequently, the union bound does not overwhelm the concentration effect, and the probability that any local operation achieves fidelity above the threshold remains exponentially suppressed.

\section{Local Unitary: Proof of Theorem~\ref{theorem:mainLU}}\label{sec:proof}

\subsection{Spherical cap}

Recall that a pure state can be written as 
\begin{align}\label{eq-defsphere}
|\Psi\rangle = \sum_{j=1}^d  (a_j + ib_j)|j\rangle,\qquad \sum_{j=1}^ da_j^2 + b_j^2 =1
\end{align}
where $a_j, b_j \in \mathbb{R}$. Hence, a pure state corresponds to a point on a unit sphere in $\mathbb{R}^{2d}$. 
Sampling a Haar random state corresponds to choosing a point uniformly at random from a unit sphere\footnote{Here and throughout this paper, we consider pure states with phases, so that we work with the sphere $S^{2d-1}$ rather than the complex projective space $\mathbb{C}P^{d-1}$. Two formulations are equivalent due to the global phase-rotation symmetry.}.
Throughout this paper, a unit sphere in $\mathbb{R}^{p}$ will be called a unit $(p-1)$-sphere and will be denoted by $S^{p-1}$ while its interior is called a unit $p$-ball. Their area and volume are given by
\begin{align}
S_{p-1} = \frac{2 \pi^{p/2}}{\Gamma(\frac{p}{2})},\qquad V_{p} = \frac{\pi^{p/2}}{\Gamma(\frac{p}{2}+1)},
\end{align}
where $\Gamma(z)$ is the Gamma function. 

Consider a unit $(p-1)$-sphere in $\mathbb{R}^p$. 
Letting $(x_1,\cdots, x_p)$ be the coordinates, a spherical cap $\CAP(h)$ is a portion of the sphere cut off by a hyperplane at height $x_1 = h $ where $0 < h \leq 1$, namely:
\begin{align}
\CAP(h) = \Big\{ (x_1,\cdots, x_p) \in \mathbb{R}^p \Big| h\leq x_1 \leq 1, \ \sum_{j=1}^p x_j^2 = 1 \Big\}, 
\end{align}
where the radius of the cap is $r = \sqrt{1-h^2}$. 
See Fig.~\ref{fig_cap}(a) for an illustration.
We will also consider a generalized spherical cap defined as 
\begin{align}
\CAP^{(q)}(h) = \Big\{ (x_1,\cdots, x_p) \in \mathbb{R}^p \Big| h^2 \leq \sum_{j=1}^{q} x_j^2 \leq 1, \ \sum_{j=1}^p x_j^2 = 1 \Big\}
\end{align}
for $1\leq q\leq p-1$.
Note that $\CAP^{(1)}(h)$ consists of two copies of $\CAP(h)$ for $x_1 > 0$ and $x_1 < 0$.

\begin{figure}
\centering
a)\includegraphics[width=0.25\textwidth]{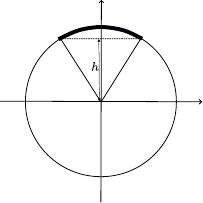}
b)\includegraphics[width=0.25\textwidth]{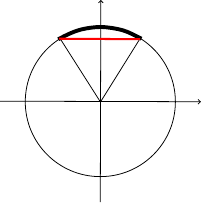}
c)\includegraphics[width=0.25\textwidth]{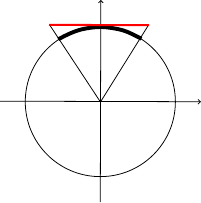}
\caption{A spherical cap and upper/lower bounds on its surface area on a unit sphere $S^{p-1}$. 
Figures are depicted for $p=2$; $x_1$ is the vertical axis. 
a) A spherical cap $\CAP$ in $\mathbb{R}^{p}$. 
b) A lower bound. 
c) An upper bound.
}
\label{fig_cap}
\end{figure}
\subsection{$\epsilon$-net}

For $0 < \epsilon <1$, a set of states $\mathcal{M}_{\epsilon} = \big\{ |\tilde{\Psi}_j\rangle\big \} $ is said to be an $\epsilon$-net in the $2$-norm~\cite{Hayden_2004} when, for every pure state $|\Psi\rangle$ in the Hilbert space, there exists $|\tilde{\Psi}_i\rangle \in \mathcal{M}_{\epsilon}$ such that 
\begin{align}
\big\Vert|\Psi\rangle - |\tilde{\Psi}_i\rangle  \big\Vert_{2} \leq \epsilon.
\end{align} 
See Fig.~\ref{fig_epsilon_net} for an illustration. 
For a $d$-dimensional Hilbert space, the $2$-norm distance between a pair of pure states corresponds to the Euclidean distance in $\mathbb{R}^{2d}$. 
An $\epsilon$-net is often discussed in terms of the $1$-norm in the literature, but an $\epsilon$-net in the $2$-norm suffices to establish the desired result. 
Note that an $\epsilon$-net in the $2$-norm is a $2\epsilon$-net in the $1$-norm~\cite{Hayden_2004}, but the converse is not true.

\begin{figure}
\centering
\includegraphics[width=0.3\textwidth]{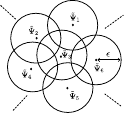}
\caption{A schematic picture of an $\epsilon$-net. Any point in the Hilbert space has some $\epsilon$-neighbor point in the $\epsilon$-net.
}
\label{fig_epsilon_net}
\end{figure}

We begin by proving that there exists an $\epsilon$-net $\mathcal{M}_{\epsilon}$ with $\big|\mathcal{M}_{\epsilon}\big| \lesssim \left(\frac{1}{\epsilon}\right)^{2d} $ up to subleading factors. 
This construction yields an $\epsilon$-net with nearly optimal cardinality, a result that is exponentially tighter (in dimension) than the one in \cite{Hayden_2004}. 
For this, we use the following lemma regarding $\epsilon$-nets on a unit sphere $S^{p-1}$. 

\begin{lemma}\label{lem-net}
Assume $p\geq 3$ and $0<\epsilon<\sqrt{2}$. For any region $R\subseteq S^{p-1}$, there exists an absolute constant $\alpha >0$ and an $\epsilon$-net $\mathcal{M}_\epsilon(R)$ of $R$ such that 
    \begin{equation}
        |\mathcal{M}_\epsilon(R)|\leq \alpha\, p\log p\, \frac{\Area(R^{+2\epsilon})}{\Area(B_\epsilon)}.
    \end{equation}
Here, $B_\epsilon\subseteq S^{p-1}$ is the $\epsilon$ spherical ball (in the Euclidean distance), $R^{+2\epsilon}=\{x\in S^{p-1}|\dist(x,R)\leq 2\epsilon\}$ and $\mathcal{M}_\epsilon(R)\subseteq S^{p-1}$ does not need to be contained in $R$.
\end{lemma}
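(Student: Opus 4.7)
The plan is a single random-sampling construction: draw $N$ iid points uniformly from $S^{p-1}$ and set $\mathcal{M}_\epsilon(R):=\{x_1,\ldots,x_N\}\cap R^{+2\epsilon}$, with $N$ chosen so that with positive probability the samples simultaneously (i) form an $\epsilon$-net of $R$ and (ii) do not fall into $R^{+2\epsilon}$ too often. Note that as soon as (i) holds, $\mathcal{M}_\epsilon(R)$ is automatically an $\epsilon$-net of $R$: any $x_i$ within $\epsilon$ of some $x\in R$ lies in $R^{+\epsilon}\subseteq R^{+2\epsilon}$, hence in $\mathcal{M}_\epsilon(R)$.

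For (i), I would discretize $R$ with an auxiliary greedy $\delta$-net $\mathcal{F}$ of $R$ at a scale $\delta\ll\epsilon$, satisfying $|\mathcal{F}|\le \Area(R^{+\delta/2})/\Area(B_{\delta/2})\le \Area(S^{p-1})/\Area(B_{\delta/2})$. If every $y\in\mathcal{F}$ has some sample within distance $\epsilon-\delta$, then by the triangle inequality every $x\in R$ has a sample within $\epsilon$. A union bound gives
\begin{equation*}
\Pr[\text{(i) fails}]\le |\mathcal{F}|\,\left(1-\tfrac{\Area(B_{\epsilon-\delta})}{\Area(S^{p-1})}\right)^{N}.
\end{equation*}
The choice $\delta\sim\epsilon/p$ keeps $\Area(B_{\epsilon-\delta})=\Theta(\Area(B_\epsilon))$ (from $(1-\delta/\epsilon)^{p-1}=\Theta(1)$ together with the $\epsilon^{p-1}$ scaling of the cap area), while Stirling bounds $\log|\mathcal{F}|$ by $O(p\log p)$. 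Hence $N=O(p\log p)\cdot \Area(S^{p-1})/\Area(B_\epsilon)$ is enough to push the failure probability below $1/3$.

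For (ii), linearity of expectation gives $\mathbb{E}|\mathcal{M}_\epsilon(R)|=N\cdot \Area(R^{+2\epsilon})/\Area(S^{p-1})$, and Markov's inequality then yields $|\mathcal{M}_\epsilon(R)|\le 3\,\mathbb{E}|\mathcal{M}_\epsilon(R)|$ with probability at least $2/3$. Combining with (i), both events hold simultaneously with positive probability, producing a realization with
\begin{equation*}
|\mathcal{M}_\epsilon(R)|\le 3N\cdot \frac{\Area(R^{+2\epsilon})}{\Area(S^{p-1})}\le \alpha\, p\log p\cdot\frac{\Area(R^{+2\epsilon})}{\Area(B_\epsilon)},
\end{equation*}
which is the desired bound.

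The main obstacle is obtaining the $p\log p$ covering factor rather than the much larger $2^{p-1}$ slack that a direct greedy volumetric construction alone would produce; this hinges on the delicate choice $\delta\sim\epsilon/p$, which simultaneously preserves $\Area(B_{\epsilon-\delta})/\Area(B_\epsilon)$ as $\Theta(1)$ and keeps $\log|\mathcal{F}|$ of order $p\log p$ via Stirling. The slack in $R^{+2\epsilon}$ (rather than $R^{+\epsilon}$) on the right-hand side is convenient: it guarantees that every sample contributing to the net of $R$ is automatically counted in $\mathcal{M}_\epsilon(R)$, and absorbs any harmless constants that arise in balancing $\delta$ against $\epsilon$.
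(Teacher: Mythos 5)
Your construction is a from-scratch probabilistic covering argument, whereas the paper simply invokes the Böröczky--Wintsche theorem that $S^{p-1}$ admits a covering by $\epsilon$-balls with multiplicity below $400\,p\ln p$ and then keeps the balls meeting $R$. Your route is legitimate in spirit, and most steps are sound (the packing bound on $|\mathcal{F}|$, the triangle-inequality transfer from $\mathcal{F}$ to $R$, the observation that intersecting the samples with $R^{+2\epsilon}$ preserves the net property, and the Markov/union combination). However, there is a genuine gap at the step ``Stirling bounds $\log|\mathcal{F}|$ by $O(p\log p)$.'' With $\delta\sim\epsilon/p$ you have $|\mathcal{F}|\leq \Area(S^{p-1})/\Area(B_{\delta/2})=O(\sqrt{p}\,(8p/\epsilon)^{p-1})$, so $\log|\mathcal{F}|=O\!\left(p\log(p/\epsilon)\right)$, which is \emph{not} $O(p\log p)$ when $\epsilon$ is small (e.g.\ $\epsilon=2^{-p}$ gives $\log|\mathcal{F}|=\Theta(p^2)$). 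Since your union bound forces $N\gtrsim \log|\mathcal{F}|\cdot\Area(S^{p-1})/\Area(B_{\epsilon-\delta})$, the density you actually obtain is $O(p\log(p/\epsilon))$, not the stated $\alpha\,p\log p$ with an absolute constant uniform over all $0<\epsilon<\sqrt{2}$. The underlying issue is structural: demanding via a union bound that \emph{every} point of the auxiliary $\delta$-net be hit costs a factor of $\log|\mathcal{F}|$. The known route to the clean $p\log p$ density (Rogers; Böröczky--Wintsche) instead bounds the \emph{expected number} of uncovered fine cells by $e^{-Nq}$ times the number of cells, takes $N\sim (p\ln p)\,\Area(S^{p-1})/\Area(B_\epsilon)$ so that this expectation is a lower-order $\tfrac{1}{p}\,\Area(S^{p-1})/\Area(B_\epsilon)$, and then patches the surviving bad cells deterministically with one extra ball each. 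Without that second stage, the $\log(1/\epsilon)$ cannot be removed.

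Two further remarks. First, your justification of $\Area(B_{\epsilon-\delta})=\Theta(\Area(B_\epsilon))$ via ``the $\epsilon^{p-1}$ scaling of the cap area'' is too loose: the cap area equals $r^{p-1}$ times a factor that itself varies by $e^{O(p)}$ over $\epsilon\in(0,\sqrt{2})$, so the scaling heuristic does not directly give a $\Theta(1)$ ratio. The conclusion is nonetheless correct for $\delta=\epsilon/(Cp)$ with a suitable absolute constant $C$ (e.g.\ by bounding $\Area(B_\epsilon)-\Area(B_{\epsilon-\delta})\leq S_{p-2}(\theta_2-\theta_1)\sin^{p-2}\theta_2$ against the lower bound $\Area(B_\epsilon)\geq \tfrac{\theta_2}{e(p-1)}S_{p-2}\sin^{p-2}\theta_2$ coming from the outermost annulus of relative width $\tfrac{1}{p-1}$), so this is repairable. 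Second, the weaker prefactor $p\log(p/\epsilon)$ that your argument does deliver would in fact suffice for every downstream use in the paper, since this prefactor only enters the final probability bounds through its logarithm; but as written your proof does not establish the lemma in the stated form.
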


\begin{proof}
It is proven in \cite{Böröczky2003} that $S^{p-1}$ can be covered by $\epsilon$ spherical balls such that every point on a unit sphere $S^{p-1}$ is covered by less than $400\, p\ln p$ times.
We pick such a covering. 
For any $R\subseteq S^{p-1}$, we collect $\epsilon$ spherical balls that intersect with $R$.
Denote $\mathcal{M}_\epsilon(R)$ as the set of centers of these balls.
Then, $\mathcal{M}_\epsilon(R)$ is an $\epsilon$-net of $R$ since the union of these balls covers $R$.
On the other hand, the union of those balls is contained in $R^{+2\epsilon}$ due to the triangle inequality, and each point of $R^{+2\epsilon}$ is counted by less than $400\, p\ln p$ times, hence
\begin{equation}
        |\mathcal{M}_\epsilon(R)|\Area(B_\epsilon)\leq 400\, p\ln p\Area(R^{+2\epsilon}).
\end{equation}
This completes the proof. 
\end{proof}

Observing that $B_{\epsilon}$ is a spherical cap $\CAP(\sqrt{1-r^2})$ with radius $r=\epsilon\sqrt{1-\frac{\epsilon^2}{4}}$, its area can be lower bounded by
\begin{align}
\Area(B_\epsilon) > \left(\epsilon\sqrt{1-\frac{\epsilon^2}{4}}\right)^{p-1}V_{p-1},
\end{align}
see Fig.~\ref{fig_cap}(b) for an illustration. Hence, by choosing $R= S^{2d-1} = R^{+2\epsilon}$, we obtain the following corollary.

\begin{corollary}\label{corollary:e-net}
For pure states in a $d$-dimensional Hilbert space, there exists an $\epsilon$-net $\mathcal{M}_{\epsilon}$ of $S^{2d-1}$ satisfying 
\begin{align}
\big|\mathcal{M}_{\epsilon}\big| \leq 
2\alpha d\log (2d)\left(\frac{1}{f(\epsilon)}\right)^{2d-1} 
\frac{S_{2d-1}}{V_{2d-1}}
\end{align}
for $0< \epsilon < \sqrt{2}$ where
$f(\epsilon) \equiv \epsilon \sqrt{1- \frac{\epsilon^2}{4}}$.

\end{corollary}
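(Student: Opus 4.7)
The plan is to specialize \cref{lem-net} to $R = S^{2d-1}$ (i.e.\ take $p = 2d$) and combine the resulting bound with the elementary lower bound on $\Area(B_\epsilon)$ indicated in Fig.~\ref{fig_cap}(b). With this choice of $R$ the $2\epsilon$-thickening $R^{+2\epsilon}$ is the whole sphere, so $\Area(R^{+2\epsilon}) = S_{2d-1}$, and \cref{lem-net} gives immediately
\begin{equation}
|\mathcal{M}_\epsilon| \;\leq\; \alpha \cdot 2d \log(2d) \cdot \frac{S_{2d-1}}{\Area(B_\epsilon)}.
\end{equation}

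The remaining step is to prove $\Area(B_\epsilon) \geq f(\epsilon)^{2d-1}\, V_{2d-1}$. I would place the center of $B_\epsilon$ at the north pole $(1,0,\ldots,0)$ and write a generic sphere point as $(x_1,y)$ with $x_1^2+|y|^2 = 1$. The Euclidean condition $\|(x_1,y)-(1,0)\|^2 = 2-2x_1 \leq \epsilon^2$ becomes $x_1 \geq 1 - \epsilon^2/2$, identifying $B_\epsilon$ with the spherical cap $\CAP(1-\epsilon^2/2)$ whose base radius equals $\sqrt{1-(1-\epsilon^2/2)^2} = \epsilon\sqrt{1-\epsilon^2/4} = f(\epsilon)$. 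Orthogonal projection of this cap onto the hyperplane carrying its flat base is a $1$-Lipschitz map onto a $(2d-1)$-ball of radius $f(\epsilon)$; since $1$-Lipschitz maps do not increase $(2d-1)$-dimensional volume, the cap's area is at least the volume of the flat disk, which is $f(\epsilon)^{2d-1}\,V_{2d-1}$. The hypothesis $0 < \epsilon < \sqrt{2}$ is exactly what keeps the cap inside the upper hemisphere, so the projection argument is valid and $f(\epsilon) > 0$.

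Substituting this lower bound for $\Area(B_\epsilon)$ into the previous display and rewriting $\alpha \cdot 2d \log(2d)/f(\epsilon)^{2d-1}$ as $2\alpha d \log(2d) (1/f(\epsilon))^{2d-1}$ yields exactly the stated inequality. No genuine obstacle arises: the argument is pure bookkeeping once one correctly identifies the cap's base radius as $f(\epsilon)$ and invokes the volume-non-increasing property of the projection. The minor case $d=1$ is trivial (the Hilbert space is one-dimensional), so the $p \geq 3$ condition in \cref{lem-net} is not restrictive.
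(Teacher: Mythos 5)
Your proposal is correct and follows essentially the same route as the paper: specialize \cref{lem-net} to $R = S^{2d-1}$ (so $R^{+2\epsilon}=S^{2d-1}$, $p=2d$) and lower-bound $\Area(B_\epsilon)$ by the volume $f(\epsilon)^{2d-1}V_{2d-1}$ of the flat disk at the cap's base, which is exactly the bound the paper invokes via Fig.~\ref{fig_cap}(b). Your projection argument simply makes explicit the geometric fact the paper leaves to the figure.
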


\subsection{State with bare EPR pairs}\label{sec:distil-state}

Consider a set of states where $m$ EPR pairs are already prepared approximately on smaller subsystems $R_A \subseteq A ,R_B \subseteq B$ without the need to apply local unitary transformations. Namely, we define
\begin{align}
\textbf{N}^{(m,h)} \equiv \Big\{  |\Psi\rangle \in \mathcal{H}_{ABC} \Big| 
\langle \Psi | \Pi^{[\text{EPR}]}_{R_{A} R_B}| \Psi\rangle  \geq h^2
 \Big\}
\end{align}
for some fixed $R_A,R_B$ with $|R_A|=|R_B|=m$. 

\begin{lemma}\label{lemma:bare-distillable}
For $0<2\epsilon \leq h \leq 1$, there exists an $\epsilon$-net $\mathcal{N}^{(m,h)}_{\epsilon}$ for \emph{$\textbf{N}^{(m,h)}$} such that
\begin{align}
\big| \mathcal{N}^{(m,h)}_{\epsilon} \big| \leq 
2\alpha d\log (2d)\left(\frac{1}{f(\epsilon)}\right)^{2d-1} 
\frac{
\Area\big(\CAP^{(2\tilde{d})} (h - 2\epsilon) \big)
}{V_{2d-1}},
\end{align}
where $\tilde{d}=2^{n-2m}$.
\end{lemma}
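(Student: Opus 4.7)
The plan is to realize $\mathbf{N}^{(m,h)}$ as a generalized spherical cap on $S^{2d-1}$ and then invoke \cref{lem-net} almost directly. The EPR projector $\Pi^{[\text{EPR}]}_{R_A R_B}$ is rank one on the $4^m$-dimensional subsystem $R_AR_B$ tensored with the identity on its complement, so its complex rank equals $\tilde d = 2^{n-2m}$. I will first choose an orthonormal basis $\{|j\rangle\}_{j=1}^d$ of $\mathcal{H}_{ABC}$ that diagonalizes it as $\Pi^{[\text{EPR}]}_{R_A R_B}=\sum_{j=1}^{\tilde d}|j\rangle\langle j|$. Writing $|\Psi\rangle=\sum_{j=1}^d (a_j+ib_j)|j\rangle$ as in \cref{eq-defsphere}, the defining condition $\langle\Psi|\Pi^{[\text{EPR}]}_{R_A R_B}|\Psi\rangle\geq h^2$ becomes $\sum_{j=1}^{\tilde d}(a_j^2+b_j^2)\geq h^2$. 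Under the identification of pure states with points on $S^{2d-1}\subset\mathbb{R}^{2d}$, and after a relabeling that places the ``EPR block'' first, this is precisely the defining condition of $\CAP^{(2\tilde d)}(h)$, since each complex dimension of the projector contributes two real coordinates. The underlying change of basis is orthogonal on $\mathbb{R}^{2d}$, so it preserves Euclidean distances and thus the notion of an $\epsilon$-net in the $2$-norm.

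Next, I will verify the inclusion $\bigl(\CAP^{(2\tilde d)}(h)\bigr)^{+2\epsilon}\subseteq \CAP^{(2\tilde d)}(h-2\epsilon)$ by a triangle-inequality argument applied to the projection onto the first $2\tilde d$ coordinates. If $\|y-x\|_2\leq 2\epsilon$ and $x\in \CAP^{(2\tilde d)}(h)$, then the Euclidean norm of the first $2\tilde d$ coordinates of $y$ differs from that of $x$ by at most $2\epsilon$, so it is at least $h-2\epsilon$, which is non-negative by the hypothesis $2\epsilon\leq h$.

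Finally, I will apply \cref{lem-net} with $p=2d$ and $R=\mathbf{N}^{(m,h)}$, obtaining an $\epsilon$-net of size at most $\alpha\cdot 2d\log(2d)\cdot \Area(R^{+2\epsilon})/\Area(B_\epsilon)$. Substituting $\Area(R^{+2\epsilon})\leq \Area\bigl(\CAP^{(2\tilde d)}(h-2\epsilon)\bigr)$ from the previous step together with the cap-based lower bound $\Area(B_\epsilon)>f(\epsilon)^{2d-1}V_{2d-1}$ already used in the proof of \cref{corollary:e-net} reproduces the stated inequality. The only conceptual step is the first one, the reduction to a generalized spherical cap; once that identification is made, the remaining steps are routine substitutions, and I anticipate no real technical obstacle.
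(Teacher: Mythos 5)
Your proposal is correct and follows essentially the same route as the paper: identify $\textbf{N}^{(m,h)}$ with the generalized cap $\CAP^{(2\tilde d)}(h)$ via the rank-$\tilde d$ structure of the EPR projector, check $R^{+2\epsilon}\subseteq \CAP^{(2\tilde d)}(h-2\epsilon)$, and apply \cref{lem-net} together with the ball-area lower bound. Your explicit triangle-inequality justification of the inclusion is a small elaboration of a step the paper only asserts, but the argument is the same.
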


\begin{proof}
Let $R=R_A \cup R_B$ with $|R|=2m$. 
Expand $|\Psi\rangle$ as
\begin{align}
|\Psi\rangle = \sum_{j=1}^{\tilde{d}} \alpha_j |\tilde{1} \rangle_{R} \otimes |j\rangle_{R^c} + 
\sum_{i=2}^{2^{2m}} \sum_{j=1}^{\tilde{d}} \beta_{ij} |\tilde{i} \rangle_{R} \otimes |j\rangle_{R^c}
\end{align}
with orthonormal basis on $R$ and $R^c$ while choosing $|\tilde{1} \rangle_{R} = |\EPR\rangle_{R_AR_B}$.
The state $\ket{\Psi}$ is in $\N^{(m,h)}$ if and only if
\begin{align}
\sum_{j=1}^{\tilde{d}} |\alpha_j|^2 \geq h^2, \qquad \sum_{i=2}^{2^{2m}}\sum_{j=1}^{\tilde{d}} |\beta_{ij}|^2 \leq 1-h^2.
\end{align}
Hence, the total surface region occupied by states in $\N^{(m,h)}$ is $\CAP^{(2\tilde{d})}(h)$ in $\mathbb{R}^{2d}$.
Using lemma~\ref{lem-net} with $R = \CAP^{(2\tilde{d})}(h)$ and observing $R^{+2\epsilon} \subseteq \CAP^{(2\tilde{d})}(h-2\epsilon)$, we obtain the desired result. 
\end{proof}

The area of $\CAP^{(2\tilde{d})}(h)$ has a simple asymptotic expression for large $d$, namely,
\begin{align}
 \frac{\Area\big(\CAP^{(2\tilde{d})}(h)\big)}{S_{2d-1}} \approx u\Big(2^{-2m} - h^2 \Big),
\end{align}
where $u(x)$ is a ``step function'':
for $ h^2 > 2^{-2m}$, the area of $\CAP^{(2\tilde{d})}(h)$ will be negligibly small while for $ h^2 < 2^{-2m}$, the area will be almost as large as that of the unit sphere. 
For our purpose, we only need the estimation when $h^2>2^{-2m}$, which is formalized in the following lemma.
\begin{lemma}\label{lemma:SubGaussian}
Let $\delta \defeq h^2 - 2^{-2m}$. For $\delta >0$,
\begin{align}
\frac{\Area\big(\CAP^{(2\tilde{d})}(h)\big)}{S_{2d-1}} \leq \exp\left(-2(d+1)\delta^2\right). 
\end{align}
\end{lemma}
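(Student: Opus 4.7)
The plan is to reformulate the area ratio as a one-dimensional tail probability for a Beta random variable, and then invoke a known sharp sub-Gaussian estimate.

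First, I would realize the uniform measure on $S^{2d-1}$ via Gaussian normalization: write $x_j = g_j/\|g\|$ with $g_1,\dots,g_{2d}$ i.i.d.\ standard Gaussians. Setting $T = \sum_{j=1}^{2\tilde d} x_j^2$, one has
\begin{equation*}
T = \frac{U}{U+V},\qquad U = \sum_{j=1}^{2\tilde d} g_j^2 \sim \chi^2_{2\tilde d},\qquad V = \sum_{j=2\tilde d+1}^{2d} g_j^2 \sim \chi^2_{2(d-\tilde d)},
\end{equation*}
with $U,V$ independent. By the standard Gamma--Beta relation, $T \sim \mathrm{Beta}(\tilde d,\,d-\tilde d)$, and $\mathbb E[T] = \tilde d/d = 2^{-2m}$. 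The definition of $\CAP^{(2\tilde d)}(h)$ then gives immediately
\begin{equation*}
\frac{\Area\!\bigl(\CAP^{(2\tilde d)}(h)\bigr)}{S_{2d-1}} \;=\; \Pr[T \geq h^2] \;=\; \Pr[T - \mathbb E[T] \geq \delta].
\end{equation*}

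Second, I would invoke the sharp sub-Gaussian concentration for the Beta law: $\mathrm{Beta}(\alpha,\beta)$ is sub-Gaussian with optimal proxy $\sigma^2 = 1/\bigl(4(\alpha+\beta+1)\bigr)$, i.e.\ $\log \mathbb E\bigl[e^{\lambda(T-\mathbb E T)}\bigr] \leq \lambda^2/\bigl(8(\alpha+\beta+1)\bigr)$. With $\alpha+\beta = d$, the standard Chernoff/Markov argument then yields
\begin{equation*}
\Pr\!\bigl[T - \mathbb E[T] \geq \delta\bigr] \;\leq\; \exp\!\bigl(-2(d+1)\delta^2\bigr),
\end{equation*}
which is exactly the claimed bound.

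The main obstacle is securing the sub-Gaussian constant with the sharp factor $4(d+1)$. Generic tools such as L\'evy's concentration on $S^{2d-1}$ applied to the $2$-Lipschitz function $\sqrt T$ deliver the right qualitative form $\exp(-\Theta(d)\delta^2)$ but with a weaker prefactor, which would degrade the downstream bounds in Theorem \ref{theorem:mainLU}. A self-contained derivation of the sharp proxy amounts to checking that $\partial_\lambda^2 \log \mathbb E[e^{\lambda T}] \leq 1/\bigl(4(\alpha+\beta+1)\bigr)$ for all $\lambda$ by direct manipulation of the Beta moment generating function (a confluent hypergeometric function), which is elementary but somewhat tedious; I would either carry out this computation or cite the Marchal--Arbel result verbatim. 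Everything else is essentially bookkeeping: translating the spherical geometry into the Gaussian/Beta language in the first step, and a one-line Chernoff bound in the second.
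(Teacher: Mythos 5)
Your proposal is correct and follows essentially the same route as the paper: both reduce the area ratio to the tail probability $\Pr[T\geq h^2]$ for $T\sim\mathrm{Beta}(\tilde d,\,d-\tilde d)$ (the paper via the exact polar-coordinate integral, you via the Gaussian/chi-squared representation, which is an equivalent derivation) and then invoke the Marchal--Arbel result that this Beta law is $\tfrac{1}{4(d+1)}$-sub-Gaussian to obtain the stated bound. No gaps.
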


\begin{proof}
The area of $\CAP^{(2\tilde{d})}(h)$ has an exact expression. Using the polar coordinates on $S^{2d-1}$, we have
\begin{equation}\label{eq-caprela}
    \frac{\Area\big(\CAP^{(2\tilde{d})}(h)\big)}{S_{2d-1}}
    =\frac{1}{B\qty(\tilde{d},d-\tilde{d})}\int_{h^2}^1 x^{\tilde{d}-1}(1-x)^{d-\tilde{d}-1} \,dx,
\end{equation}
where $B(\cdot,\cdot)$ is the Beta function. 
Namely, it equals $\prob{X>h^2}$ where the random variable $X$ follows the Beta distribution Beta$(\tilde{d},d-\tilde{d})$. 
For an $X\sim$ Beta$(\tilde{d},d-\tilde{d})$, we have
\begin{equation}
    \EX{X}=\frac{\tilde{d}}{d}=2^{-2m},~~\text{Var}(X)=\frac{\tilde{d}(d-\tilde{d})}{d^2(d+1)}.
\end{equation}
Therefore, for large $d$, the distribution is narrowly peaked at the mean value $2^{-2m}$.
In particular, Ref.~\cite{marchal2017sub} shows that Beta$(\tilde{d},d-\tilde{d})$ is $\frac{1}{4(d+1)}$-sub-Gaussian, implying that
\begin{equation}\label{eq-subgaussian}
     \frac{\Area\big(\CAP^{(2\tilde{d})}(h)\big)}{S_{2d-1}}
    \leq \exp\left(-2(d+1)\delta^2\right),
\end{equation}
where $\delta = h^2 - 2^{-2m} >0 $.
\end{proof}

\subsection{State with distillable EPR pairs}

Let us define a set of states with LU-distillable EPR pairs
\begin{align}
\hat{\N}^{(m,h)} \equiv \Big\{  |\Psi\rangle \in \mathcal{H}_{ABC} \Big| 
\ED^{\text{[LU]}}_{h}(A:B) \geq m
 \Big\}. 
\end{align}
Note that $\hat{\N}^{(m,h)}$ can be constructed by applying unitaries with the form of $U_A \otimes U_B$ to states in $\N^{(m,h)}$. 

\begin{lemma}\label{lemma:distillable}
Let $\mathcal{N}^{(m,h)}_{\epsilon}$ be an $\epsilon$-net for \emph{$\N^{(m,h)}$}. Let $\hat{\epsilon} \equiv \epsilon + 2\epsilon'$, where $\epsilon'<\sqrt{2}$.
Then, there exists an $\hat{\epsilon}$-net $\hat{\mathcal{N}}^{(m,h)}_{\hat\epsilon}$ for \emph{$\hat{\N}^{(m,h)}$} such that
\begin{align}
\big|  \hat{\mathcal{N}}^{(m,h)}_{\hat{\epsilon}} \big| \leq 
\alpha_1 d_A^2 d_B^2\log (d_A)
\log (d_B)
\left(\frac{1}{f(\epsilon')}\right)^{2d_{A}^2+ 2d_B^2-2}  
\frac{S_{2d_{A}^2-1}S_{2d_{B}^2-1}}{V_{2d_{A}^2-1}V_{2d_{B}^2-1}}
\big|  \mathcal{N}^{(m,h)}_{\epsilon} \big| 
\end{align}
for some absolute constant $\alpha_1>0$.
\end{lemma}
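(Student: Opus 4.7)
The plan is to exploit the fact that $\hat{\N}^{(m,h)}$ is, by definition, the image of $\N^{(m,h)}$ under the product action of $U(d_A)\times U(d_B)$ via $|\Psi\rangle\mapsto (U_A\otimes U_B)|\Psi\rangle$. An $\hat\epsilon$-net for $\hat{\N}^{(m,h)}$ can therefore be assembled as a Cartesian product: combine the given $\epsilon$-net on $\N^{(m,h)}$ with separate $\epsilon'$-nets on each unitary group (measured in the Frobenius norm). The $2\epsilon'$ in $\hat\epsilon=\epsilon+2\epsilon'$ accounts for one $\epsilon'$ from each of the two local unitaries, while the $\epsilon$ comes from the state-side net.

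First I would construct $\epsilon'$-nets $\mathcal{M}_A, \mathcal{M}_B$ for $U(d_A), U(d_B)$. Embedding $U(d_k)$ into $\mathbb{R}^{2d_k^2}$ as complex matrices, every unitary lies on the sphere of radius $\sqrt{d_k}$ since $\|U\|_F=\sqrt{d_k}$. Applying \cref{lem-net} with $R=U(d_k)$ inside this ambient sphere, and bounding $\Area(R^{+2\epsilon'})\leq \Area(S^{2d_k^2-1})$ trivially, the same calculation as in \cref{corollary:e-net} gives
\begin{equation}
|\mathcal{M}_k| \;\lesssim\; d_k^2\log d_k \left(\frac{1}{f(\epsilon')}\right)^{2d_k^2-1}\frac{S_{2d_k^2-1}}{V_{2d_k^2-1}}, \qquad k=A,B.
\end{equation}
I would then define
\begin{equation}
\hat{\mathcal{N}}^{(m,h)}_{\hat\epsilon} \defeq \bigl\{(\tilde U_A\otimes \tilde U_B)|\tilde\Psi\rangle : \tilde U_A\in\mathcal{M}_A,\ \tilde U_B\in\mathcal{M}_B,\ |\tilde\Psi\rangle\in \mathcal{N}^{(m,h)}_\epsilon\bigr\}
\end{equation}
and verify the covering property: for any $|\Phi\rangle=(U_A\otimes U_B)|\Psi\rangle\in \hat{\N}^{(m,h)}$ with $|\Psi\rangle\in\N^{(m,h)}$, pick $|\tilde\Psi\rangle$ within $\epsilon$ of $|\Psi\rangle$ and $\tilde U_k$ within Frobenius distance $\epsilon'$ of $U_k$; using the triangle inequality together with the operator-norm bound $\|M\|_{\mathrm{op}}\leq \|M\|_F$ yields
\begin{equation}
\bigl\|(\tilde U_A\otimes \tilde U_B)|\tilde\Psi\rangle - (U_A\otimes U_B)|\Psi\rangle\bigr\|_2 \leq \|\tilde U_A - U_A\|_{\mathrm{op}} + \|\tilde U_B - U_B\|_{\mathrm{op}} + \epsilon \leq 2\epsilon' + \epsilon = \hat\epsilon.
\end{equation}
The size bound follows from the trivial product estimate $|\hat{\mathcal{N}}^{(m,h)}_{\hat\epsilon}|\leq |\mathcal{M}_A|\cdot|\mathcal{M}_B|\cdot|\mathcal{N}^{(m,h)}_\epsilon|$ and matches the claimed expression because the dimensions add as $(2d_A^2-1)+(2d_B^2-1)=2d_A^2+2d_B^2-2$.

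The main point requiring care is the application of \cref{lem-net} to the unitary group: the lemma produces a net whose centers lie on the ambient sphere but need not themselves be unitary, so $\tilde U_A\otimes \tilde U_B$ is not guaranteed to be an isometry. This is actually harmless for the distance estimate above—only $\|\tilde U_k-U_k\|_{\mathrm{op}}\leq\|\tilde U_k-U_k\|_F\leq\epsilon'$ is used—but if one wants genuine unitary net points, each $\tilde U_k$ can be replaced by its closest unitary via polar decomposition, at most doubling the operator-norm error and absorbed into the constants. A minor parallel bookkeeping item is the Frobenius scale $\|U\|_F=\sqrt{d_k}$ when invoking \cref{lem-net}: this rescales the effective $\epsilon'$ in intermediate expressions but cancels out in the final exponent $2d_A^2+2d_B^2-2$. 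Everything else is a routine product count.
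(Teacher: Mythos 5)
Your high-level plan---a product of the given state net with nets over the two local unitary groups, with the triangle inequality splitting $\hat\epsilon$ into $\epsilon+2\epsilon'$---is the same as the paper's, and the covering argument and the product cardinality count are fine. The problem is in how you build the nets $\mathcal{M}_A,\mathcal{M}_B$. \Cref{lem-net} is a statement about the \emph{unit} sphere, whereas $U(d_k)$ sits on the sphere of radius $\sqrt{d_k}$ in $\mathbb{R}^{2d_k^2}$. A Frobenius-norm $\epsilon'$-net of a subset of that sphere is an $(\epsilon'/\sqrt{d_k})$-net after rescaling to the unit sphere, so the bound you actually get is $|\mathcal{M}_k|\lesssim d_k^2\log d_k\,\big(\sqrt{d_k}/f(\epsilon')\big)^{2d_k^2-1}S_{2d_k^2-1}/V_{2d_k^2-1}$, i.e.\ larger than what you wrote by a factor of about $d_k^{d_k^2-1/2}$. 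Your closing remark that the radius ``cancels out in the final exponent'' is not correct: the rescaling changes the \emph{base} of the exponential, not the exponent, and it adds $O\big((d_A^2+d_B^2)\log d\big)$ to $\log|\hat{\mathcal{N}}^{(m,h)}_{\hat\epsilon}|$. That term is not absorbed by the $O\big((d_A^2+d_B^2)\log\frac{1}{\delta}\big)$ error in \cref{theorem:mainLU} (take $\delta=\Theta(1)$ and $n_A$ only barely below $n/2$, where $d_A^2\log d$ can exceed $d$), so as written you neither prove the lemma's stated bound nor a substitute strong enough for the theorem.

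The paper avoids this by never putting a net on the unitary group itself. For each $|\Psi\rangle$ in the state net it Schmidt-decomposes $\ket{\Psi}=\sum_{i=1}^{K}\lambda_i\ket{\psi_i}_A\ket{\phi_i}_{BC}$ with $K\leq d_A$ and observes that $U_A\otimes I\ket{\Psi}\mapsto(\lambda_1U_A\ket{\psi_1},\dots,\lambda_KU_A\ket{\psi_K})$ is an \emph{isometry of the orbit} $\{U_A\otimes I\ket{\Psi}\}$ into the unit sphere $S^{2Kd_A-1}$ (the $\lambda_i$ weights are exactly what normalizes the image). \Cref{corollary:e-net} then applies with no radius rescaling and yields the exponent $2d_A^2-1$ with base $1/f(\epsilon')$ as claimed. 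If you prefer to keep your group-level construction, you should instead invoke a volumetric net for $U(d_k)$ in operator norm as in \cref{lem-Vnet}, which gives cardinality $(c/\epsilon')^{d_k^2}$ and in fact a better exponent. Your remaining concerns are genuinely minor: net points that are not exactly unitary can be projected via the polar decomposition at the cost of replacing $2\epsilon'$ by $4\epsilon'$ (or controlling $\|\tilde U_A\otimes\tilde U_B\|_\infty\leq(1+\epsilon')^2$ in the first triangle-inequality term), all of which is absorbed by rescaling $\epsilon'$.
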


\begin{proof}
    For each $|\Psi\rangle \in \mathcal{N}^{(m,h)}_{\epsilon}$, we can Schmidt decompose it as
    \begin{equation}
        \ket{\Psi}=\sum_{i=1}^{K}\lambda_i\ket{\psi_i}_A\otimes\ket{\phi_i}_{BC},
    \end{equation}
    where the Schmidt rank $K\leq d_A$ since $|A|< |B|+|C|$.
    Therefore, we can represent $U_A\otimes I_{BC}\ket{\Psi}$ as $(\lambda_1 U_A\ket{\psi_1},\lambda_2 U_A\ket{\psi_2},\cdots)$, a normalized complex vector with complex dimension $Kd_A$.  
    Such representation is an isometry from the space $\{U_A\otimes I_{BC}|\Psi\rangle|U_A \in \mathrm{U}(\mathcal{H}_A)\}$ to $S^{2Kd_A-1}$, where $\mathrm{U}(\mathcal{H}_R)$ denotes the group of unitaries acting on a subsystem $R$.
    Conversely, any point on $S^{2Kd_A-1}$ also corresponds to a normalized pure state on $ABC$.
    Therefore, we can use an $\epsilon'$-net on $S^{2Kd_A-1}$ to construct an $\epsilon'$-net of pure states for the space $\{U_A\otimes I_{A^c}\ket{\Psi}\}$.
    The cardinality of this net is upper bounded by $2 \alpha d_A^2 \log (2 d_A^2)
\left(\frac{1}{f(\epsilon')}\right)^{2d_{A}^2-1} \frac{S_{2d_{A}^2-1}}{V_{2d_{A}^2-1}}$ due to corollary~\ref{corollary:e-net} and $K\leq d_A$.

    Repeating the same argument for $U_{B}$ and using the triangle inequality for the $2$-norm, we find a $2\epsilon'$-net for the space $\{U_A\otimes U_B\ket{\Psi}|U_A \in \mathrm{U}(\mathcal{H}_A), U_B \in \mathrm{U}(\mathcal{H}_B)\}$.
    Repeating this procedure for all states $|\Psi\rangle$ in the $\epsilon$-net $\mathcal{N}^{(m,h)}_{\epsilon}$,
    we conclude that there exists an $\hat{\epsilon}$-net $\hat{\mathcal{N}}^{(m,h)}_{\hat\epsilon}$ for $\hat{\N}^{(m,h)}$ whose cardinality is upper bounded as specified in the lemma.
\end{proof}

Combining lemma~\ref{lemma:bare-distillable} and lemma~\ref{lemma:distillable}, we arrive at the following result.

\begin{lemma} \label{lem:netNhat}
For $0<2\epsilon \leq h \leq 1$, $\epsilon'<\sqrt{2}$, there exists an $\hat{\epsilon}$-net with $\hat{\epsilon} = \epsilon + 2\epsilon'$ such that
\begin{equation}
\begin{split}
\big|  \hat{\mathcal{N}}^{(m,h)}_{\hat{\epsilon}} \big| 
\leq & 
\alpha_2 d_A^2 d_B^2 d\log (d_A)
\log (d_B)\log(d)
\left(\frac{1 }{f(\epsilon')}\right)^{2d_{A}^2+2d_B^2-2}  
\left(\frac{1 }{f(\epsilon)}\right)^{2d-1}  \\
&\cdot\frac{
\Area\big(\CAP^{(2\tilde{d})} (h - 2\epsilon) \big)
}{V_{2d-1}} 
\frac{S_{2d_{A}^2-1}S_{2d_{B}^2-1}}{V_{2d_{A}^2-1}V_{2d_{B}^2-1}}
\end{split}
\end{equation}
for some absolute constant $\alpha_2 >0$.
\end{lemma}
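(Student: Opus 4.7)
The plan is simply to chain the two previous lemmas: \cref{lemma:bare-distillable} produces an $\epsilon$-net $\mathcal{N}^{(m,h)}_\epsilon$ for the ``bare'' set $\N^{(m,h)}$, and \cref{lemma:distillable} tells us how to thicken each point of that net by all local unitary orbits $U_A\otimes U_B$ to obtain an $\hat\epsilon$-net for $\hat\N^{(m,h)}$. So the entire content of \cref{lem:netNhat} is the multiplicative combination of the two cardinality bounds.

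Concretely, I would first invoke \cref{lemma:bare-distillable} to fix an $\epsilon$-net $\mathcal{N}^{(m,h)}_\epsilon$ of $\N^{(m,h)}$ of size at most
\begin{equation*}
2\alpha\, d\log(2d)\left(\frac{1}{f(\epsilon)}\right)^{2d-1}\frac{\Area\bigl(\CAP^{(2\tilde d)}(h-2\epsilon)\bigr)}{V_{2d-1}}.
\end{equation*}
Then I would apply \cref{lemma:distillable} on top of this net: every state in $\hat\N^{(m,h)}$ is within $2$-norm $\epsilon$ of a local-unitary orbit of a point of $\mathcal{N}^{(m,h)}_\epsilon$, and each such orbit admits a $2\epsilon'$-net of cardinality bounded by
\begin{equation*}
\alpha_1\, d_A^2 d_B^2 \log(d_A)\log(d_B)\left(\frac{1}{f(\epsilon')}\right)^{2d_A^2+2d_B^2-2}\frac{S_{2d_A^2-1}S_{2d_B^2-1}}{V_{2d_A^2-1}V_{2d_B^2-1}}.
\end{equation*}
By the triangle inequality for the $2$-norm, unioning these orbit-nets over all base points in $\mathcal{N}^{(m,h)}_\epsilon$ yields an $(\epsilon+2\epsilon')$-net for $\hat\N^{(m,h)}$, and its cardinality is at most the product of the two bounds above.

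The final step is purely bookkeeping: multiply the two upper bounds, absorb the constants $\alpha,\alpha_1$ and the factor $2$ into a new absolute constant $\alpha_2$, and collect the logarithmic prefactors into $\log(d_A)\log(d_B)\log(d)$, which matches the form stated in the lemma. There is no genuine obstacle here; the only thing one needs to be slightly careful about is the triangle-inequality step, so that $\hat\epsilon=\epsilon+2\epsilon'$ is indeed a valid covering radius after chaining an $\epsilon$-approximation in $\N^{(m,h)}$ with a $2\epsilon'$-approximation on the orbit, and that the assumptions $2\epsilon\le h$ and $\epsilon'<\sqrt 2$ are exactly those inherited from \cref{lemma:bare-distillable,lemma:distillable}.
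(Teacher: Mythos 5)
Your proposal is correct and is exactly what the paper does: the paper's proof of this lemma is the single line ``Combining lemma~\ref{lemma:bare-distillable} and lemma~\ref{lemma:distillable}, we arrive at the following result,'' i.e.\ precisely the multiplication of the two cardinality bounds with the constants and logarithmic prefactors absorbed into $\alpha_2$. Your extra care about the triangle-inequality step giving $\hat\epsilon=\epsilon+2\epsilon'$ is already built into \cref{lemma:distillable}, so nothing further is needed.
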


\subsection{Putting together}

Recall 
\begin{align}
|\Psi\rangle \in \hat{\N}^{(m,h)} \Rightarrow \exists |\tilde{\Psi}\rangle \in \hat{\mathcal{N}}_{\hat{\epsilon}}^{(m,h)}\ \st \ \big\Vert|\Psi\rangle - |\tilde{\Psi}\rangle  \big\Vert_{2} \leq \hat{\epsilon} .
\end{align}
Hence we have
\begin{align}
\prob { |\Psi\rangle \in \hat{\N}^{(m,h)} }
\leq  \prob { \exists |\tilde{\Psi}\rangle \in \hat{\mathcal{N}}_{\hat{\epsilon}}^{(m,h)} \ \st \ \big\Vert|\Psi\rangle - |\tilde{\Psi}\rangle \big\Vert_{2} \leq \hat{\epsilon}  } .
\end{align}
Let $\{|\tilde{\Psi}_j\rangle\}$ be elements of $\hat{\mathcal{N}}_{\hat{\epsilon}}^{(m,h)}$.
Using the union bound, we have
\begin{align}\label{unionbound1}
\prob { \exists |\tilde{\Psi}\rangle \in \hat{\mathcal{N}}_{\hat{\epsilon}}^{(m,h)} \ \st \ \big\Vert|\Psi\rangle - |\tilde{\Psi}\rangle  \big\Vert_{2} \leq \hat{\epsilon} }
\leq \sum_{j}
\prob {  \big\Vert|\Psi\rangle - |\tilde{\Psi}_{j}\rangle  \big\Vert_{2} \leq \hat{\epsilon} }.
\end{align}

\begin{lemma}\label{lemma:Haar}
Let $|\Psi\rangle$ be chosen according to the Haar measure and $\ket{\Psi_0}$ be an arbitrary, fixed state. 
Let $\epsilon<1$.
Then 
\begin{align}
\prob {  \big\Vert|\Psi\rangle - |\Psi_0\rangle  \big\Vert_{2} \leq \epsilon } <\frac{ g(\epsilon)^{2d-1} V_{2d-1} }{S_{2d-1}},
\end{align}
where $g(\epsilon) \equiv \frac{\epsilon}{\sqrt{1-\epsilon^2}}$.
\end{lemma}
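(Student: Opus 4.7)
By the unitary (rotational) invariance of the Haar measure on $S^{2d-1}$, I may assume without loss of generality that $|\Psi_0\rangle$ is the ``north pole'' $(1,0,\dots,0)$ of $S^{2d-1}\subset\mathbb{R}^{2d}$. Under the parametrization in \eqref{eq-defsphere}, the condition $\||\Psi\rangle - |\Psi_0\rangle\|_2^2 = 2-2a_1\leq\epsilon^2$ becomes $a_1\geq h$ with $h\defeq 1-\epsilon^2/2$, so the event in question is exactly the spherical cap $\CAP(h)$. The probability of interest is therefore $\Area(\CAP(h))/S_{2d-1}$, and the task reduces to upper-bounding $\Area(\CAP(h))$ by $g(\epsilon)^{2d-1}V_{2d-1}$.

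The plan is to use the ``enclose the cap in the tangent disk at the pole'' bound illustrated in Fig.~\ref{fig_cap}(c). Writing the cap area in polar coordinates with $\theta$ the polar angle from the pole,
\begin{equation*}
\Area(\CAP(h)) = S_{p-2}\int_0^{\Theta}\sin^{p-2}\theta\,d\theta,\qquad p=2d,\ \Theta=\arccos h,
\end{equation*}
I substitute $u=\tan\theta$, which is well defined on $[0,\Theta]$ since $\epsilon<1$ forces $\Theta<\pi/2$. Using $\sin\theta = u/\sqrt{1+u^2}$ and $d\theta=du/(1+u^2)$, the integrand becomes $u^{p-2}/(1+u^2)^{p/2}\,du \leq u^{p-2}\,du$, and integrating gives $\int_0^{\Theta}\sin^{p-2}\theta\,d\theta\leq (\tan\Theta)^{p-1}/(p-1)$. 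Using $S_{p-2}/(p-1)=V_{p-1}$, I obtain $\Area(\CAP(h))\leq (\tan\Theta)^{p-1}V_{p-1}$.

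Finally, I would verify $\tan\Theta\leq g(\epsilon)$ directly from $\cos\Theta = 1-\epsilon^2/2$: one has $\sin^2\Theta = 1-h^2 = \epsilon^2-\epsilon^4/4\leq \epsilon^2$ and $\cos^2\Theta=(1-\epsilon^2/2)^2=1-\epsilon^2+\epsilon^4/4\geq 1-\epsilon^2$, so $\tan\Theta\leq \epsilon/\sqrt{1-\epsilon^2}=g(\epsilon)$. Combining the three steps yields
\begin{equation*}
\prob{\||\Psi\rangle-|\Psi_0\rangle\|_2\leq\epsilon}=\frac{\Area(\CAP(h))}{S_{2d-1}}\leq\frac{g(\epsilon)^{2d-1}V_{2d-1}}{S_{2d-1}},
\end{equation*}
as claimed. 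No step here is really an obstruction; the only real design choice is to bound the cap by a tangent disk rather than evaluate the incomplete beta integral in \eqref{eq-caprela} in closed form, which avoids introducing special functions and produces exactly the clean form $g(\epsilon)^{p-1}V_{p-1}$ needed for the subsequent union bound in \eqref{unionbound1}.
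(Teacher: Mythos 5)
Your proof is correct and follows essentially the same route as the paper's: identify the event with the spherical cap $\CAP(1-\epsilon^2/2)$ and bound its area by the tangent $(2d-1)$-disk of radius $\tan\Theta\leq g(\epsilon)$, exactly the construction of Fig.~\ref{fig_cap}(c). The only difference is that you make the area comparison explicit via the $u=\tan\theta$ substitution and the identity $S_{p-2}=(p-1)V_{p-1}$, where the paper appeals to the geometric picture.
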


\begin{proof}
The $\epsilon$-ball around $|\Psi_0\rangle$ is given by $\CAP(\sqrt{1-r^2})$ where $r(\epsilon)=\epsilon\sqrt{1 - \frac{\epsilon^2}{4}}$. 
This area can be upper bounded by $(2d-1)$-ball of radius 
\begin{align}
\frac{r(\epsilon)}{\sqrt{1 - r(\epsilon)^2}} < \frac{\epsilon}{\sqrt{1-\epsilon^2}}=g(\epsilon).
\end{align}
See Fig.~\ref{fig_cap}(c) for an illustration. This completes the proof. 
\end{proof}
 
Using \cref{lem:netNhat,lemma:Haar} and \cref{unionbound1}, we have
\begin{equation}
\begin{split}
\prob { |\Psi\rangle \in \hat{\N}^{(m,h)} } <
&\alpha_2 d_A^2 d_B^2 d\log (d_A)
\log (d_B) \log (d) \left(\frac{1 }{f(\epsilon')}\right)^{2d_{A}^2+2d_B^2-2}  
\left(\frac{1 }{f(\epsilon)}\right)^{2d-1}  
g(\hat{\epsilon})^{2d-1}
\\
&
\cdot\frac{S_{2d_{A}^2-1}S_{2d_{B}^2-1}}{V_{2d_{A}^2-1}V_{2d_{B}^2-1}}
\frac{\Area\big(\CAP^{(2\tilde{d})}(h - 2\epsilon) \big) }{S_{2d-1}}.
\end{split}
\end{equation}
Taking the logarithm with base $2$ leads to
\begin{equation}
\begin{split}
\log \Big[  \prob { |\Psi\rangle \in \hat{\N}^{(m,h)} } \Big] < 
\log \qty(\frac{S_{2d_{A}^2-1}}{V_{2d_{A}^2-1}}
\frac{S_{2d_{B}^2-1}}{V_{2d_{B}^2-1}}) 
+ \log \qty( \frac{\Area\big(\CAP^{(2\tilde{d})}(h - 2\epsilon) \big) }{S_{2d-1}}) \\
-(2d_{A}^2+ 2d_{B}^2-2) \log f(\epsilon') 
- (2d-1) \log f(\epsilon) + (2d-1)\log g(\hat{\epsilon}) 
+ O(\log d),
\end{split}
\end{equation}
where the multiplicative factor $\alpha_2 d_A^2 d_B^2 d\log (d_A)
\log (d_B) \log (d)$ leads to a contribution of $O(\log d)$.
Let us split terms into three groups as follows:
\begin{equation}
\begin{split}
A_1 &\equiv \log \qty(\frac{S_{2d_{A}^2-1}S_{2d_{B}^2-1}}{V_{2d_{A}^2-1}V_{2d_{B}^2-1}}), \\
A_2 &\equiv \log \qty( \frac{\Area\big(\CAP^{(2\tilde{d})}(h - 2\epsilon) \big) }{S_{2d-1}}), \\
A_3 &\equiv  - (2d_A^2 + 2d_B^2-2) \log f(\epsilon') - (2d-1) \log f(\epsilon) + (2d-1) \log g(\hat{\epsilon}).
\end{split}
\end{equation}
We then have:
\begin{align}
\log \Big[  \prob { |\Psi\rangle \in \hat{\N}^{(m,h)} } \Big] < A_1 + A_2 + A_3 + O(\log d).  \label{eq:prob_bound}
\end{align}
Note that the bound is valid for arbitrary $\epsilon, \epsilon'$ (as long as $\hat{\epsilon}=\epsilon + 2 \epsilon'<1$). 

Let us evaluate each term. As for $A_1$, we have 
\begin{equation}
\begin{split}
A_1
&=  \log \qty(\frac{2 \pi^{d_A^2}}{\Gamma(d_A^2)}
\frac{2 \pi^{d_B^2}}{\Gamma(d_B^2)} 
\frac{\Gamma(d_A^2  + \frac{1}{2})}{\pi^{d_A^2  - \frac{1}{2}} }
\frac{\Gamma(d_B^2  + \frac{1}{2})}{\pi^{d_B^2  - \frac{1}{2}} })= \log \qty(4\pi \frac{\Gamma(d_A^2  + \frac{1}{2})\Gamma(d_B^2  + \frac{1}{2})}{\Gamma(d_A^2)\Gamma(d_B^2)})\\
&< \log \qty(4\pi d_A d_B ), 
\end{split}
\end{equation}
where we have used $\Gamma(n^2+\frac{1}{2})<n\Gamma(n^2)$.
Hence, we have 
\begin{align}
A_1 < O(\log d) .
\end{align}

As for $A_2$, lemma~\ref{lemma:SubGaussian} implies
\begin{equation}
A_2 < - 2(d+1)\Big((h-2\epsilon)^2 - 2^{-2m}\Big)^2 < -2d(h^2-4\epsilon-2^{-2m})^2 < - 2d \delta^2 + O(d \delta \epsilon),
\end{equation}
where the second inequality follows from $(h-2\epsilon)^2>h^2-4\epsilon$ and assumes $\delta = h^2 - 2^{-2m} > 4\epsilon$. 

As for $A_3$, recall that
\begin{align}
f(\epsilon) \equiv \epsilon \sqrt{1- \frac{\epsilon^2}{16}} = \epsilon(1 + O(\epsilon^2)) ,\qquad g(\epsilon) \equiv \frac{\epsilon}{\sqrt{1-\epsilon^2}}=\epsilon(1 + O(\epsilon^2)).
\end{align}
Hence we have 
\begin{equation}
\begin{split}
A_3 &= 2d \log\frac{g(\epsilon + 2\epsilon')}{f(\epsilon)} + O(d_A^2 \log \frac{1}{\epsilon'}, d_B^2 \log \frac{1}{\epsilon'})\\
&= O\Big(d \frac{\epsilon'}{\epsilon} \Big) + O(d \epsilon^2 ) + O(d \epsilon'^2 ) +  O(d_A^2 \log \frac{1}{\epsilon'}, d_B^2 \log \frac{1}{\epsilon'}).
\end{split}
\end{equation}

Putting these together, for $\delta = h^2 - \frac{\tilde{d}}{d} > 0$, we have 
\begin{equation}
\begin{split}
\log \Big[  \prob { |\Psi\rangle \in \hat{\N}^{(m,h)} }\Big] < \ & - 2d \delta^2  + O\Big(\log d, d\epsilon^2, d\epsilon'^2, d\epsilon \delta, d\frac{\epsilon'}{\epsilon}, d_A^2\log \frac{1}{\epsilon'}, d_B^2\log \frac{1}{\epsilon'} \Big).
\end{split}
\end{equation}
By choosing $\epsilon=\tilde{c}\delta$ and $\epsilon'=\epsilon^2$ where $\tilde{c}$ is a small enough constant, we have\footnote{Here, we omit the $O(\log d)$ term, which is unfortunately unavoidable in this proof due to the $p \log p$ term in \cref{lem-net}. However, it can be avoided if we prove \cref{theorem:mainLU} via the argument used for \cref{thm-LO}. Moreover, in the regimes of interest—when $\delta = \Theta(1)$ and/or $\delta \gg (d_A + d_B)\sqrt{\frac{\log d}{d}}$—the additional $O(\log d)$ term does not affect the asymptotic behavior.}: 
\begin{equation}
\log \Big[  \prob { |\Psi\rangle \in \hat{\N}^{(m,h)} }\Big] < - (2-O(1))d \delta^2  + O\Big( d_A^2 \log \frac{1}{\delta}, d_B^2 \log \frac{1}{\delta} \Big).
\end{equation}
This completes the proof of theorem~\ref{theorem:mainLU}.

\section{Local Operation: Proof of Theorem~\ref{thm-LO}}\label{sec:LO}

In this section, we extend our bound on distillable entanglement to local operations. 
For convenience, we will work within the Stinespring dilation picture, where a local operation is described as first applying an isometry (or attaching ancillas and applying a unitary) and then tracing out a subsystem.
\begin{fact}
    Every quantum channel $\Phi:\mathcal{S}(\mathbb{C}^{d_1})\to\mathcal{S}(\mathbb{C}^{d_2})$ can be expressed as
    \begin{equation}\label{eq:dilation}
        \Phi(\rho)=\Tr_{\mathbb{C}^{d_3}}(V\rho V^\dagger),
    \end{equation}
    where $V: \mathbb{C}^{d_1}\to\mathbb{C}^{d_2}\otimes\mathbb{C}^{d_3}$ is an isometry. 
    Furthermore, since the Kraus rank of $\Phi$ is at most $d_1d_2$ \cite{nielsen2010quantum}, we can always choose $d_3\leq d_1d_2$.
\end{fact}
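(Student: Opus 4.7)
The plan is to derive the dilation directly from the Kraus representation, which I would treat as the starting structural fact about quantum channels. Given any completely positive, trace-preserving map $\Phi$, I would invoke its Kraus decomposition $\Phi(\rho)=\sum_{k=1}^{r}K_k\rho K_k^{\dagger}$ where the Kraus operators $K_k:\mathbb{C}^{d_1}\to\mathbb{C}^{d_2}$ satisfy the completeness relation $\sum_k K_k^{\dagger}K_k=I_{d_1}$. The number of terms $r$ is the Kraus rank, which I will bound at the end.

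With this in hand, I would construct the dilation explicitly. Pick an orthonormal basis $\{|k\rangle\}_{k=1}^{d_3}$ of $\mathbb{C}^{d_3}$ with $d_3\geq r$, pad the list of Kraus operators with zeros if $d_3>r$, and define
\begin{equation}
V=\sum_{k=1}^{d_3} K_k\otimes|k\rangle : \mathbb{C}^{d_1}\longrightarrow \mathbb{C}^{d_2}\otimes\mathbb{C}^{d_3}.
\end{equation}
A direct computation gives $V^{\dagger}V=\sum_{k,l}K_k^{\dagger}K_l\,\langle k|l\rangle=\sum_k K_k^{\dagger}K_k=I_{d_1}$, so $V$ is an isometry. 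Next, I would verify the channel identity by computing
\begin{equation}
\Tr_{\mathbb{C}^{d_3}}\!\bigl(V\rho V^{\dagger}\bigr)=\sum_{k,l}K_k\rho K_l^{\dagger}\,\Tr(|k\rangle\!\langle l|)=\sum_k K_k\rho K_k^{\dagger}=\Phi(\rho),
\end{equation}
which establishes \cref{eq:dilation}.

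It remains to justify the bound $d_3\leq d_1 d_2$, i.e., that the Kraus rank does not exceed $d_1 d_2$. The natural route is the Choi--Jamio{\l}kowski isomorphism: the Choi operator $J(\Phi)=(\Phi\otimes\mathrm{id})(|\Omega\rangle\!\langle\Omega|)$, with $|\Omega\rangle=\sum_i|i\rangle|i\rangle\in\mathbb{C}^{d_2}\otimes\mathbb{C}^{d_1}$, is positive semidefinite and lives on a space of dimension $d_1 d_2$. Its spectral decomposition yields at most $d_1 d_2$ nonzero eigenvectors, and unflattening each eigenvector back into an operator $\mathbb{C}^{d_1}\to\mathbb{C}^{d_2}$ (scaled by the square root of its eigenvalue) produces a Kraus family of size at most $d_1 d_2$. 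Hence $r\leq d_1 d_2$, so one may take $d_3=d_1 d_2$.

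The only subtle step is the rank bound on the Kraus decomposition; everything else is an algebraic verification. Since the paper cites \cite{nielsen2010quantum} for this bound, I would most likely simply quote it rather than redevelop the Choi--Jamio{\l}kowski argument in detail, and spend the body of the proof on the explicit construction of $V$ and the two-line verification that it is an isometry implementing $\Phi$ after the partial trace.
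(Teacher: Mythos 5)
Your proof is correct and is exactly the standard Kraus--Stinespring argument that the paper itself does not spell out but delegates to the cited reference: build $V=\sum_k K_k\otimes\ket{k}$, check isometry and the partial-trace identity, and bound the number of Kraus operators by the rank of the Choi matrix, which lives on a $d_1d_2$-dimensional space. The only blemish is a typo in the Choi step --- the maximally entangled vector $\ket{\Omega}=\sum_i\ket{i}\ket{i}$ lives in $\mathbb{C}^{d_1}\otimes\mathbb{C}^{d_1}$, and it is the Choi operator $(\Phi\otimes\mathrm{id})(\ketbra{\Omega})$ that lives on $\mathbb{C}^{d_2}\otimes\mathbb{C}^{d_1}$ --- which does not affect the argument.
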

Applying it to $\Phi_A$ and $\Phi_B$ in our setting, we obtain two isometries $V_A: \mathbb{C}^{d_A}\to\mathbb{C}^{d_A^\prime}$ and $V_B: \mathbb{C}^{d_B}\to\mathbb{C}^{d_B^\prime}$ where\footnote{When constructing $V_A$, we have $d_1=d_A$, $d_2=2^m$ and we denote $d_A^\prime=d_2 d_3$. Similar comments apply to $B$.}
\begin{equation}
    d_A^\prime\leq d_A2^{2m},~~d_B^\prime\leq d_B2^{2m}.
\end{equation}
In other words, crucially, we do not need to attach a large number of ancilla, as the relevant output system on each side contains only $m$ qubits.

Let us first consider the case where the isometries $V_A$ and $V_B$ are fixed.
\begin{lemma}\label{lem-capLO}
    Suppose the quantum channel $\Phi_A$ (and $\Phi_B$) is defined via \cref{eq:dilation} from a given isometry $V_A: \mathbb{C}^{d_A}\to\mathbb{C}^{d_A^\prime}$ (and $V_B: \mathbb{C}^{d_B}\to\mathbb{C}^{d_B^\prime}$, respecitively), $\Lambda = \Phi_A \otimes \Phi_B$, then
       \begin{equation}
            \prob{\Tr\big( \Lambda(\rho_{AB}) \Pi^{[\text{EPR}]} \big) >2^{-m}+\eta}\leq\exp(-d\eta^2).
    \end{equation} 
    Here, the probability is taken over the Haar measure on states $\ket{\psi}$ on $ABC$, $d=2^{n_{ABC}}$, and $\eta>0$ is arbitrary.
\end{lemma}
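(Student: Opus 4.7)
The plan is to recast the quantity as a quadratic form in the Haar random vector $|\Psi\rangle$ and then apply Levy's concentration of measure on the sphere $S^{2d-1}$.

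First, using the Stinespring dilations $V_A,V_B$, we can rewrite
\begin{equation*}
\Tr\big( \Lambda(\rho_{AB}) \Pi^{[\text{EPR}]} \big) = \langle \Psi|\, M \otimes I_C \,|\Psi\rangle,
\end{equation*}
where $M \defeq (V_A \otimes V_B)^\dagger \big(\Pi^{[\text{EPR}]}_{R_A R_B} \otimes I_{E_A E_B}\big)(V_A \otimes V_B)$ is a PSD operator on $AB$. Since $V_A,V_B$ are isometries and $\Pi^{[\text{EPR}]}$ is a projector, we have $\|M\| \leq 1$.

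Next I would compute the Haar mean. Using $\mathbb{E}|\Psi\rangle\langle\Psi| = I/d$, we get
\begin{equation*}
\mathbb{E}\big[\Tr(\Lambda(\rho_{AB})\Pi^{[\text{EPR}]})\big] = \Tr\!\big(\Lambda(I_{AB}/d_{AB})\,\Pi^{[\text{EPR}]}\big) = \langle\EPR|\,\sigma_A \otimes \sigma_B\,|\EPR\rangle,
\end{equation*}
where $\sigma_A = \Phi_A(I_A/d_A)$ and $\sigma_B = \Phi_B(I_B/d_B)$ are states on $R_A,R_B$. Expanding $|\EPR\rangle = 2^{-m/2}\sum_i|ii\rangle$ gives $2^{-m}\Tr(\sigma_A \sigma_B^T)$, and Cauchy--Schwarz together with $\Tr(\sigma^2) \leq 1$ yields $\mathbb{E}[f] \leq 2^{-m}$. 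This is the essential step that produces the threshold $2^{-m}$, matching the intuition (explained in the paper) that simply attaching $m$ ancillas in $|0\rangle^{\otimes m}$ already attains this fidelity.

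For the concentration step, the map $|\Psi\rangle \mapsto \langle\Psi|N|\Psi\rangle$ for Hermitian $N$ is $2\|N\|$-Lipschitz on the unit sphere (from $|\langle x|N|x\rangle - \langle y|N|y\rangle| = |\langle x-y|N|x+y\rangle| \leq 2\|N\|\,\|x-y\|$). Applied with $N = M \otimes I_C$, which still satisfies $\|N\| \leq 1$, Levy's lemma on $S^{2d-1}$ yields a one-sided bound
\begin{equation*}
\Prob\!\Big(\langle\Psi|M\otimes I_C|\Psi\rangle > \mathbb{E}[f] + \eta\Big) \leq \exp(-d\eta^2)
\end{equation*}
(up to the usual universal constants in Levy's inequality). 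Since $\{f > 2^{-m} + \eta\} \subseteq \{f > \mathbb{E}[f] + \eta\}$, the claimed bound follows.

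The main (minor) obstacle is tracking constants: the exact constant produced by Levy's lemma depends on the version invoked, and closing the gap to the clean form $\exp(-d\eta^2)$ requires citing a sharp enough variant (e.g., the log-Sobolev form on the sphere with Ricci curvature $2d-2$). No structural difficulty appears beyond that; the Stinespring dilation and the Cauchy--Schwarz bound on EPR fidelity of a product state both go through cleanly because, crucially, each side only needs to output $m$ qubits, so $\|M\|$ and $\mathbb{E}[f]$ are controlled regardless of how large the ancilla systems $E_A,E_B$ are taken to be.
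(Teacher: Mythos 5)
Your proposal follows essentially the same route as the paper: Stinespring dilation, rewriting the quantity as the quadratic form $\bra{\psi}\tilde{\Pi}\ket{\psi}$ with $\tilde{\Pi}=(V_A\otimes V_B)^\dagger\,\Pi^{[\text{EPR}]}\,(V_A\otimes V_B)$, bounding the Haar mean by $2^{-m}$, and invoking Levy's lemma on $S^{2d-1}$. Your mean computation, $\EX{f}=\bra{EPR}\sigma_A\otimes\sigma_B\ket{EPR}=2^{-m}\Tr(\sigma_A\sigma_B^T)\leq 2^{-m}$, is a clean equivalent of the paper's diagrammatic $\Tr(XY)\leq\Tr(X)\Tr(Y)$ argument. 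The one quantitative shortfall, which you correctly flag, is the Lipschitz constant: the generic estimate $|\bra{x}N\ket{x}-\bra{y}N\ket{y}|\leq 2\norm{N}_\infty\norm{x-y}$ gives $K=2$, and the version of Levy's lemma the paper quotes then yields only $\exp(-2d\eta^2/8)=\exp(-d\eta^2/4)$ rather than the stated $\exp(-d\eta^2)$. The paper closes this gap not by a sharper Levy variant but by a sharper Lipschitz bound: since $\ketbra{\psi}-\ketbra{\phi}$ has eigenvalues $\{\sin\theta(\psi,\phi),-\sin\theta(\psi,\phi),0,\dots,0\}$ and $0\leq\tilde{\Pi}\leq I$, one gets $|f(\psi)-f(\phi)|\leq\sin\theta(\psi,\phi)\leq\norm{\ket{\psi}-\ket{\phi}}_2\leq\dist(\psi,\phi)$, i.e.\ $K=1$, which gives $\exp(-2d\eta^2/2)=\exp(-d\eta^2)$ exactly. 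With that single refinement your argument matches the paper's.
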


\newcommand{\img}[1]{\vcenter{\hbox{\includegraphics[height=6em]{#1}}}}   
\newcommand{\tikzimg}[1]{\raisebox{-0.5\height}{\resizebox{!}{6em}{#1}}}
\newcommand{\figa}{
  \begin{tikzpicture}
    % Nodes for tensors
    \node[draw, rectangle, minimum size=0.8cm] (VA) at (0,0) {$V_A$};
    \node[draw, rectangle, minimum size=0.8cm] (VB) at (1.5,0) {$V_B$};
    \node[draw, rectangle, minimum size=0.8cm] (VAp) at (0,2) {$V_A^\dagger$};
    \node[draw, rectangle, minimum size=0.8cm] (VBp) at (1.5,2) {$V_B^\dagger$};
    % Connecting lines
    \draw ($(VAp.south west)+(0.27,0)$) -- ($(VA.north west)+(0.27,0)$);
    \draw ($(VBp.south east)+(-0.27,0)$) -- ($(VB.north east)+(-0.27,0)$);
    \draw ($(VAp.south east)+(-0.27,0)$) to[out=-90,in=-90] ($(VBp.south west)+(0.27,0)$);
    \draw ($(VA.north east)+(-0.27,0)$) to[out=90,in=90] ($(VB.north west)+(0.27,0)$);
    % Trace loop for VA+
    \draw (VAp.north) arc[start angle=0, end angle=180, radius=0.4] 
          -- ++(0,-2.8) 
          arc[start angle=180, end angle=360, radius=0.4] 
          -- (VA.south);
    % Trace loop for VB+ 
    \draw (VBp.north) arc[start angle=180, end angle=0, radius=0.4] 
          -- ++(0,-2.8) 
          arc[start angle=360, end angle=180, radius=0.4] 
          -- (VB.south);
    % Trace loop for C
    \draw (3.4,2.4) arc[start angle=0, end angle=180, radius=0.4] 
          -- ++(0,-2.8) 
          arc[start angle=180, end angle=360, radius=0.4] 
          -- ++(0,2.8);
    % nodes
    \node[below] at (0,-0.7) {$A$};
    \node[below] at (1.5,-0.7) {$B$};
    \node[below] at (3,-0.7) {$C$};
\end{tikzpicture}
}
\newcommand{\figb}{
  \begin{tikzpicture}
    % Nodes for tensors
    \node[draw, rectangle, minimum size=0.8cm] (VA) at (0,0) {$V_A$};
    \node[draw, rectangle, minimum size=0.8cm] (VB) at (1.5,0) {$V_B$};
    \node[draw, rectangle, minimum size=0.8cm] (VAp) at (0,2) {$V_A^\dagger$};
    \node[draw, rectangle, minimum size=0.8cm] (VBp) at (1.5,2) {$V_B^\dagger$};
    % Connecting lines
    \draw ($(VAp.south west)+(0.27,0)$) -- ($(VA.north west)+(0.27,0)$);
    \draw ($(VBp.south east)+(-0.27,0)$) -- ($(VB.north east)+(-0.27,0)$);
    \draw ($(VAp.south east)+(-0.27,0)$) -- ($(VA.north east)+(-0.27,0)$);
    \draw ($(VBp.south west)+(0.27,0)$) -- ($(VB.north west)+(0.27,0)$);
    % Trace loop for VA+
    \draw (VAp.north) arc[start angle=0, end angle=180, radius=0.4] 
          -- ++(0,-2.8) 
          arc[start angle=180, end angle=360, radius=0.4] 
          -- (VA.south);
    % Trace loop for VB+ 
    \draw (VBp.north) arc[start angle=180, end angle=0, radius=0.4] 
          -- ++(0,-2.8) 
          arc[start angle=360, end angle=180, radius=0.4] 
          -- (VB.south);
    % Trace loop for C
    \draw (3.4,2.4) arc[start angle=0, end angle=180, radius=0.4] 
          -- ++(0,-2.8) 
          arc[start angle=180, end angle=360, radius=0.4] 
          -- ++(0,2.8);
    % nodes
    \node[below] at (0,-0.7) {$A$};
    \node[below] at (1.5,-0.7) {$B$};
    \node[below] at (3,-0.7) {$C$};
\end{tikzpicture}
}

\begin{proof}
    We define a function $f_{V_AV_B}(\psi)$ as:
    \begin{equation}
        f_{V_AV_B}(\psi)=\Tr\big( \Lambda(\rho_{AB}) \Pi^{[\text{EPR}]} \big)
        =\bra{\psi}V_A^\dagger V_B^\dagger\Pi^{[\text{EPR}]} V_AV_B\ket{\psi},
    \end{equation}
    where we slightly abuse the notation of $\Pi^{[\text{EPR}]}$: in the third term, it now acts as an operator on $\mathbb{C}^{d_A^\prime}\otimes \mathbb{C}^{d_B^\prime}$.
    We claim $f$ is 1-Lipschitz. In fact, 
    \begin{equation}\label{eq-fLip}
        |f(\psi)-f(\phi)|
        =\left|\Tr(\tilde{\Pi}(\ketbra{\psi}-\ketbra{\phi}))\right|
        % \leq\norm{\tilde{\Pi}}_\infty \norm{\ketbra{\psi}-\ketbra{\phi}}_1
        \leq \norm{\tilde{\Pi}}_\infty \sin\theta(\psi,\phi)
        \leq \norm{\ket{\psi}-\ket{\phi}}_2
        \leq \dist(\psi,\phi).
    \end{equation}
    Here, $\tilde{\Pi}=V_A^\dagger V_B^\dagger\Pi^{[\text{EPR}]} V_AV_B$, $\theta(\psi,\phi)$ is the (Hermitian) angle between $\ket{\psi}$ and $\ket{\phi}$ defined as $\cos\theta(\psi,\phi)=\abs{\braket{\psi|\phi}}$, and $\dist(\psi,\phi)$ is the (geodesic) distance when regarding $\ket{\psi}$ and $\ket{\phi}$ as points on $S^{2d-1}$. 
    The first inequality is due to the fact that the eigenvalues of $\ketbra{\psi}-\ketbra{\phi}$ are $\{\sin\theta,-\sin\theta,0,\cdots,0\}$.

    Let us compute the average of $f_{V_AV_B}(\psi)$ over Haar random $\ket{\psi}$. By standard Haar average computation, we have
    \begin{align}
        \EX{f_{V_AV_B}(\psi)}=\frac{1}{d}\Tr(V_A^\dagger V_B^\dagger\Pi^{[\text{EPR}]} V_AV_B).
    \end{align}
    While it may depend on $V_A$ and $V_B$, the following always holds:
  \begin{align}
        \EX{f_{V_AV_B}(\psi)}=\frac{1}{d}2^{-m}\tikzimg{\figa}\leq\frac{1}{d}2^{-m}\tikzimg{\figb}=2^{-m},
    \end{align}
    where the inequality comes from the fact that $\Tr(XY)\leq\Tr(X)\Tr(Y)$ for two positive semi-definite matrices.
    Therefore, $f_{V_AV_B}(\psi)>2^{-m}+\eta$ implies $f_{V_AV_B}(\psi)>\EX{f_{V_AV_B}(\psi)}+\eta$.

    Now, applying \hyperref[LevyLemma]{Levy’s lemma} (see below) for $f$ on $S^{2d-1}$, we obtain the desired bound:
        \begin{equation}
            \prob{f_{V_AV_B}(\psi)>2^{-m}+\eta}
            \leq \exp(-\frac{2d\eta^2}{2})
            =\exp(-d\eta^2).
    \end{equation}
\end{proof}

In the above proof, we used the following fact about the concentration of measure on high-dimensional spheres \cite{ledoux2001concentration,aubrun2024optimal}.
\begin{fact}[Levy's lemma]\label{LevyLemma}
    If a function $f:S^{p-1}\to\mathbb{R}$ is $K$-Lipschitz in the sense that $|f(x)-f(y)|\leq K\dist(x,y)$ (geodesic distance), then
    \begin{equation}
        \Pr(f>\EX{f}+\eta)\leq \exp(-\frac{p\eta^2}{2K^2}).
    \end{equation}
\end{fact}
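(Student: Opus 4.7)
The plan is to reduce L\'evy's lemma to the spherical isoperimetric inequality of L\'evy--Schmidt, which says that among Borel sets $A\subseteq S^{p-1}$ of prescribed normalized measure $\mu(A)$, the $t$-enlargement $A_t=\{x:\dist(x,A)\leq t\}$ has measure no smaller than the $t$-enlargement of a geodesic ball of the same measure. This is the textbook route (see \cite{ledoux2001concentration}) and delivers the quadratic exponent in $\eta$ cleanly.

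First I would introduce a median $M_f$ of $f$ and set $A=\{x:f(x)\leq M_f\}$, so that $\mu(A)\geq 1/2$. The $K$-Lipschitz assumption forces $f(x)\leq M_f+\eta$ for every $x\in A_{\eta/K}$, hence $\{f>M_f+\eta\}\subseteq S^{p-1}\setminus A_{\eta/K}$. Since $\mu(A)\geq 1/2$, the isoperimetric inequality lets us compare $A_{\eta/K}$ with the $\eta/K$-enlargement of a hemisphere, so $\mu(S^{p-1}\setminus A_{\eta/K})$ is at most the area of a polar cap of geodesic radius $\pi/2-\eta/K$. Evaluating this cap in polar coordinates (as in the derivation of \cref{eq-caprela}) and applying a sub-Gaussian tail bound analogous to \cref{lemma:SubGaussian} gives $\mu(\{f>M_f+\eta\})\leq \exp(-p\eta^2/(2K^2))$ up to lower-order factors.

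Second, I would pass from the median to the mean. Applying the same argument to $-f$ yields two-sided concentration around $M_f$; integrating this tail bound produces $|\EX{f}-M_f|=O(K/\sqrt{p})$, and this correction can be absorbed into $\eta$ at the cost of replacing the constant $2$ in the exponent by any strictly smaller constant, which is harmless for the use of the Fact in \cref{lem-capLO}.

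The hard part is keeping the constant in the exponent tight. A cleaner alternative I would likely take instead is the Gaussian representation: write a uniform random point on $S^{p-1}$ as $X/\|X\|$ with $X\sim\mathcal{N}(0,I_p)$, invoke the Gaussian concentration inequality for Lipschitz functions (a corollary of the Gaussian log-Sobolev inequality with sharp constant), and check that composing with $X\mapsto X/\|X\|$ alters the Lipschitz constant only by a factor $1+o(1)$ on the event $\|X\|^2\in[p/2,2p]$, which holds with overwhelming probability. This sidesteps the isoperimetric inequality entirely and recovers the stated constant $2$ directly.
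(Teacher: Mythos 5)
The paper does not actually prove this Fact --- it is imported verbatim from the cited references (Ledoux; Aubrun et al.), so any proof you supply is necessarily a ``different route.'' Your primary route, via the L\'evy--Schmidt isoperimetric inequality and a median, is the standard textbook argument and is sound. Be aware of what it actually delivers: the cap computation gives $\Pr(f>M_f+\eta)\leq\sqrt{\pi/8}\,\exp\!\big(-(p-2)\eta^2/(2K^2)\big)$, and the median-to-mean correction costs a further $O(K/\sqrt{p})$ shift of $\eta$. As you say, this degrades the constant $2$ and introduces a prefactor, but that is harmless here: in the one place the Fact is used (the fixed-isometry lemma feeding into the LO theorem), $p=2d\to\infty$, the theorem's constant is adjustable ($0<c<1$), and the relevant $\eta$ is always $\gg K/\sqrt{p}$ in the regime where the bound is nontrivial. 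So the first route proves a version of the statement sufficient for everything in the paper, though not the clean inequality as literally written.

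Your ``cleaner alternative'' via Gaussian concentration, however, contains a genuine error: the claim that composing with $X\mapsto X/\norm{X}$ changes the Lipschitz constant only by $1+o(1)$. The hypothesis is Lipschitz with respect to \emph{geodesic} distance, and geodesic distance exceeds chordal (Euclidean) distance by a factor that reaches $\pi/2$ for nearly antipodal pairs; Lipschitzness is a global condition, so the worst case governs. Concretely, for $\norm{x},\norm{y}\geq r$ one only gets $\abs{F(x)-F(y)}\leq \tfrac{\pi K}{2r}\norm{x-y}$, so on the event $\norm{X}\approx\sqrt{p}$ the Euclidean Lipschitz constant is $\approx \pi K/(2\sqrt{p})$ and Gaussian concentration yields $\exp\!\big(-2p\eta^2/(\pi^2K^2)\big)$, weaker than the stated $\exp\!\big(-p\eta^2/(2K^2)\big)$ by a factor $\pi^2/4$ in the exponent. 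You also cannot rescue this by passing to the pointwise (local) Lipschitz constant, because the region $\{\norm{x}\geq r\}$ is not convex, so a McShane extension to all of $\mathbb{R}^p$ must respect the global, not local, constant. This route therefore does \emph{not} recover the constant $2$ directly; it gives the inequality only with a smaller absolute constant (which, again, would still suffice for the paper's theorems). If you want the sharp constant you should either stick with isoperimetry or invoke the sharp logarithmic Sobolev inequality on $S^{p-1}$ together with Herbst's argument, which is presumably what the cited optimal-constants reference does.
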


Next, to allow arbitrary isometries on $A$ and $B$, we will consider $\epsilon$-nets on the space of isometries.
We denote the space of isometries $\mathbb{C}^{d}\to\mathbb{C}^{d'}$ as $\mathcal{V}_d(\mathbb{C}^{d'})$.
It is a Stiefel manifold $\mathrm{U}(d')/\mathrm{U}(d'-d)$, and $\dim\mathcal{V}_d(\mathbb{C}^{d'})=2dd'-d^2$. 
We quote the following result from Ref.~\cite{szarek1997metric}\footnote{We may simply consider $\epsilon$-nets for $\mathrm{U}(d')$, since an isometry can be (non-uniquely) extended to a unitary. Then \cref{thm-LO} still holds with $2^{2m}$ replaced by $2^{4m}$ in the subleading term.
To obtain a tighter bound on the subleading term, we may also quotient out another $\mathrm{U}(\sqrt{dd'})$, where $dd'$ is the dimension of the subsystem being traced out. This would give us the dimension $dd'-d^2$, which merely improves a constant factor before $2^{2m}$.
}. 
\begin{fact}\label{lem-Vnet}
    There exists an absolute number $c$ such that $\mathcal{V}_d(\mathbb{C}^{d'})$ has an $\epsilon$-net $\mathcal{M}_\epsilon$ in operator norm $\norm{\cdot}_\infty$ such that
    \begin{align}
        |\mathcal{M}_\epsilon|\leq \Big(\frac{c}{\epsilon}\Big)^{2dd'-d^2}.
    \end{align}
\end{fact}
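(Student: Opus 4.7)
The plan is a classical volume/packing argument for $\mathcal{V}_d(\mathbb{C}^{d'})$ carried out in the Frobenius metric, then transferred to the operator norm via the trivial inequality $\|A\|_\infty\le\|A\|_F$. Let $\mathcal{M}_\epsilon\subseteq\mathcal{V}_d(\mathbb{C}^{d'})$ be a maximal set whose pairwise operator-norm distances are $\ge\epsilon$. By maximality, $\mathcal{M}_\epsilon$ is already an operator-norm $\epsilon$-net, and $\|\cdot\|_\infty\le\|\cdot\|_F$ ensures that the open Frobenius balls of radius $\epsilon/2$ around points of $\mathcal{M}_\epsilon$ are pairwise disjoint. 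It therefore suffices to bound $|\mathcal{M}_\epsilon|$ by the ratio of the total Frobenius volume of $\mathcal{V}_d(\mathbb{C}^{d'})$ to the minimum volume of such a small ball.

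For the small-ball lower bound, I would realize $\mathcal{V}_d(\mathbb{C}^{d'})$ as the real submanifold $\{V:V^\dagger V=I_d\}$ of the Frobenius-Euclidean space $\mathbb{C}^{d\times d'}$, of real dimension $k\defeq 2dd'-d^2$. At any point $V_0$, the tangent space consists of matrices $V_0 A+V_0^\perp B$ with $A\in\mathbb{C}^{d\times d}$ skew-Hermitian and $B\in\mathbb{C}^{(d'-d)\times d}$ arbitrary, and inherits the Frobenius inner product. A direct computation of the second fundamental form of this embedding (controlled by the quadratic constraint $V^\dagger V=I_d$, whose coefficients are fixed) gives an absolute bound independent of $d,d'$. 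Consequently, there exist absolute constants $c_1,r_0>0$ such that
\begin{equation}
\Vol_F\bigl(B_F(V_0,r)\cap\mathcal{V}_d(\mathbb{C}^{d'})\bigr)\ge c_1\, r^k\qquad\text{for all }r\le r_0.
\end{equation}

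For the total-volume upper bound, I would iterate the sphere fibration $\mathcal{V}_j(\mathbb{C}^{d'})\to\mathcal{V}_{j-1}(\mathbb{C}^{d'})$, whose fiber is the complex unit sphere $S^{2(d'-j+1)-1}$ of vectors orthogonal to the first $j-1$ columns. This yields
\begin{equation}
\Vol_F(\mathcal{V}_d(\mathbb{C}^{d'}))=\prod_{j=1}^{d}\Vol\bigl(S^{2(d'-j+1)-1}\bigr),
\end{equation}
and since $\Vol(S^{2m-1})=2\pi^m/(m-1)!$, Stirling bounds each factor by $C^{2(d'-j+1)-1}$, so the product is at most $C^k$ for an absolute $C$. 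Combining with disjointness then gives $|\mathcal{M}_\epsilon|\cdot c_1(\epsilon/2)^k\le C^k$, whence $|\mathcal{M}_\epsilon|\le(c/\epsilon)^k$ for an absolute $c$; the regime $\epsilon>r_0$ is handled by enlarging $c$, since $|\mathcal{M}_\epsilon|\le|\mathcal{M}_{r_0}|$ is already $O(1)^k$.

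The delicate step is the \emph{uniform-in-$d,d'$} small-ball volume lower bound: although $\mathcal{V}_d(\mathbb{C}^{d'})$ is $U(d')$-homogeneous so the local geometry is identical at every point, one still has to verify that the deviation $\|\exp_{V_0}(X)-(V_0+X)\|_F=O(\|X\|_F^2)$ from the flat embedding does not pick up a dimension-dependent constant as $d'-d$ grows. If that intrinsic bookkeeping is inconvenient, an alternative is to build $\mathcal{M}_\epsilon$ column by column: the $j$-th column lies on a unit sphere in $\mathbb{C}^{d'-j+1}$ (the orthogonal complement of the earlier columns), which admits a standard $\epsilon$-net of size $(c/\epsilon)^{2(d'-j+1)-1}$; taking products and controlling how per-column perturbations propagate to operator norm via a short Gram-Schmidt stability estimate yields the same bound, provided one is careful not to leak a $\mathrm{poly}(d)$ factor into the base of the exponent.
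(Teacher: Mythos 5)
First, note that the paper does not actually prove this fact: it is quoted verbatim from Szarek's work on the metric entropy of homogeneous spaces, so any self-contained argument is already going beyond the text. Your proposal, however, has a genuine gap, and it is exactly at the step you flag as ``delicate.'' The small-ball estimate $\Vol(B_F(V_0,r)\cap\mathcal{V}_d(\mathbb{C}^{d'}))\geq c_1 r^k$ with an \emph{absolute} constant $c_1$ is false: locally the manifold is isometric to $\mathbb{R}^k$ up to curvature corrections, so the best possible bound is $c_1\,\omega_k r^k$ with $\omega_k=\pi^{k/2}/\Gamma(\tfrac{k}{2}+1)\approx (c/\sqrt{k})^k$, and this factor is not a second-fundamental-form bookkeeping issue --- it changes the answer. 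Inserting the correct $\omega_k$ into your ratio against the (correct) total volume $\prod_j \Vol(S^{2(d'-j+1)-1})$, the quotient evaluates to $(c\sqrt{d}/\epsilon)^k$, not $(c/\epsilon)^k$. Moreover this extra $(\sqrt{d})^k$ is not slack that a sharper computation could remove: the Frobenius diameter of $\mathcal{V}_d(\mathbb{C}^{d'})$ is $2\sqrt{d}$ (since $\norm{V_1-V_2}_F^2=2d-2\Re\Tr(V_1^\dagger V_2)$), and the Frobenius covering number of $\mathrm{U}(d)$ at scale $\epsilon$ genuinely is of order $(c\sqrt{d}/\epsilon)^{d^2}$. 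Any argument that controls the operator-norm packing by a Frobenius packing --- which is precisely what your reduction via $\norm{\cdot}_\infty\leq\norm{\cdot}_F$ does --- must therefore lose a factor $(\sqrt{d})^{k}$. Your column-by-column alternative leaks in the same place: to force $\norm{V-V'}_\infty\leq\epsilon$ from per-column accuracy you need each column accurate to $\epsilon/\sqrt{d}$ (the column errors add in quadrature into $\norm{\cdot}_F$), which again yields $(c\sqrt{d}/\epsilon)^k$.

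The repair --- and what Szarek's proof actually does --- is to run the volume comparison in the operator norm itself rather than in the Frobenius metric. Using the $\mathrm{U}(d')$-invariance and the exponential map (whose volume distortion on the unit operator-norm ball is at most $C^k$), one reduces to the tangent space, where by homogeneity of Lebesgue measure the volume of the operator-norm $\epsilon$-ball is \emph{exactly} $\epsilon^k$ times the volume of the unit operator-norm ball; no $\omega_k$ appears because the numerator and denominator of the ratio are balls of the same norm, and the unit operator-norm ball in matrix space is larger than the unit Frobenius ball by precisely the $(c\sqrt{d})^k$ your argument is missing. For what it is worth, the weaker bound $(c\sqrt{d}/\epsilon)^{2dd'-d^2}$ that your argument does deliver would still suffice for the paper's application, only degrading the subleading term of Theorem~\ref{thm-LO} to $O\big(2^{2m}(d_A^2+d_B^2)(\log\frac{1}{\delta}+\log d_A+\log d_B)\big)$; but it does not prove the Fact as stated.
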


Now we prove \cref{thm-LO}. 
\begin{proof}[Proof of \cref{thm-LO}]

    We pick $\epsilon$-nets for isometries according to fact \ref{lem-Vnet} and denote them as $\mathcal{M}_\epsilon^A$ and $\mathcal{M}_\epsilon^B$ respectively. 
    We will choose $\epsilon$ later.
    For any $\ket{\psi}$, $V_A$ and $V_B$, we can always choose $V_A^\prime\in\calM_\epsilon^A$ and $V_B^\prime\in\calM_\epsilon^B$ such that $\norm{V_A-V_A^\prime}_\infty\leq\epsilon$ and $\norm{V_B-V_B^\prime}_\infty\leq\epsilon$ and hence $\norm{V_AV_B\ket{\psi}-V_A^\prime V_B^\prime\ket{\psi}}_\infty\leq 2\epsilon$.
    It follows from \cref{eq-fLip} that
    \begin{equation}
        f_{V_A^\prime V_B^\prime}(\psi)\geq f_{V_AV_B}(\psi)-2\epsilon.
    \end{equation}
    
    We denote $\delta=h^2-2^{-m}$. 
    By definition, for any $\ket{\psi}$,
    \begin{equation}
        \ED^{\text{{[LO]}}}_{h}(A:B) \geq m 
        \iff
        \exists V_A, V_B \text{ such that } f_{V_AV_B}(\psi)>2^{-m}+\delta.
    \end{equation}
    Therefore, due to the union bound, we have
    \begin{equation}
        \prob{\ED^{\text{[LO]}}_{h}(A:B) \geq m}\leq \sum_{V_A\in \mathcal{M}_\epsilon^A}\sum_{V_B\in \mathcal{M}_\epsilon^B} \prob{f_{V_AV_B}(\psi)>2^{-m}+\delta-2\epsilon}.
    \end{equation}
    Applying fact \ref{lem-Vnet} and \cref{lem-capLO}, we get
    \begin{equation}\label{eq-Vbound}
    \begin{aligned}
        \prob{\ED^{\text{[LO]}}_{h}(A:B) \geq m}
        \leq |\mathcal{M}_\epsilon^A||\mathcal{M}_\epsilon^B|\exp(-d(\delta-2\epsilon)^2)
        \leq\Big(\frac{c}{\epsilon}\Big)^{2d_Ad'_A+2d_Bd'_B}\exp(-d(\delta-2\epsilon)^2),
    \end{aligned}
    \end{equation}
    as long as $\delta>2\epsilon$. 
    
    Now we pick $\epsilon=c^\prime\delta$, where the proportional constant may be taken arbitrarily small. 
    Taking the logarithm of \cref{eq-Vbound}, we obtain:
    \begin{equation}\label{eq:LOfinal}
        \log\prob{\ED^{\text{[LO]}}_{h}(A:B) \geq m}\leq -(1-O(1))\delta^2d+O(2^{2m}(d_A^2+d_B^2)\log \frac{1}{\delta}).
    \end{equation}
\end{proof}

\section{Logical Operators in Random Encoding}\label{sec:LOGICAL}

\subsection{No logical operators in bipartite subsystems}

Given two parameters $0\leq h,w\leq 1$, we say an isometric encoding $V : C \to AB$ admits a logical unitary operator $U_{AB}$ if there exists a unitary $U_C$ such that
\begin{equation}\label{eq:logicalrequire}
    \left| \frac{\Tr(U_C^\dagger V^\dagger U_{AB} V)}{d_C} \right|\geq h^2 \text{~~and~~} \frac{|\Tr(U_C)|}{d_C}\leq w.
\end{equation}

The first inequality imposes a fidelity $h$ on a logical unitary operator $U_{AB}$.
Namely, it compares two isometries $U_{AB} V$ and $VU_C$ via the fidelity between their Choi states:
\begin{equation}
    \frac{\Tr(U_C^\dagger V^\dagger U_{AB} V)}{d_C}
    =\bra{EPR} U_C^\dagger V^\dagger U_{AB} V \ket{EPR}
    =\bra{\Psi}U_{AB}\otimes U_C^*\ket{\Psi},
\end{equation}
where $\ket{\Psi}=(V\otimes I)\ket{EPR}$ is the Choi state for $V$.
Note that
\begin{equation}
    1-|\bra{EPR} U_C^\dagger V^\dagger U_{AB} V \ket{EPR}|\leq \frac{1}{2}\norm{U_{AB} V-VU_{C} }_\infty^2,
\end{equation}
thus, a large $h$ is also a necessary condition for $U_{AB} V$ and $VU_C$ to be close in the operator norm.

The second inequality ensures the non-triviality of the logical operator.  
Namely, $w<1$ is required to ensure that $U_C$ acts non-trivially on the input state.
Otherwise, $w=1$ would imply that $U_C$ is a phase $e^{i\theta}I_C$, and $e^{i\theta}I_{AB}$ is always a trivial logical operator for $e^{i\theta}I_C$ with $h=1$.
On the other hand, if $U_C$ is a unitary conjugation of a Pauli operator, then $w=0$.

\begin{theorem}\label{thm:LOGICAL}
    Assuming $\delta\defeq h^2-w>0$, there exists an absolute constant $c^\prime>0$, such that for a Haar random isometry $V$,
    \begin{equation}
        \log\prob{V\text{~\emph{admits a logical operator on}~} A}
        \leq -c^\prime\delta^2 d+O((d_A^2+d_C^2)\log \frac{1}{\delta}).
    \end{equation}
    Here, $d=2^{n_{ABC}}$.
\end{theorem}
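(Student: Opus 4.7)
The plan is to adapt the strategy of \cref{thm-LO}, replacing the pair of local channels by a pair of fixed unitaries $U_A\in\mathrm{U}(\mathcal{H}_A)$ and $U_C\in\mathrm{U}(\mathcal{H}_C)$, and invoking the Choi isomorphism to recast \cref{eq:logicalrequire} as a simple bilinear form on a random pure state. Writing $\ket{\Psi}_{ABC}=(V\otimes I_C)\ket{\mathrm{EPR}}_{CC}$ for the Choi state of $V$, a direct computation gives
\begin{equation*}
\frac{\Tr(U_C^\dagger V^\dagger (U_A\otimes I_B) V)}{d_C}=\bra{\Psi}(U_A\otimes I_B\otimes U_C^*)\ket{\Psi},
\end{equation*}
so the first inequality in \cref{eq:logicalrequire} reads $|\bra{\Psi}(U_A\otimes I_B\otimes U_C^*)\ket{\Psi}|\geq h^2$, while the second remains a constraint on $U_C$ alone.

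Next, I would fix $U_A$ and $U_C$ and bound the tail of
$f_{U_A,U_C}(V)\defeq \bra{\Psi}(U_A\otimes I_B\otimes U_C^*)\ket{\Psi}$
in the spirit of \cref{lem-capLO}. Parametrizing $V=WP$ with $P:C\hookrightarrow AB$ a fixed isometric embedding and $W$ Haar random on $\mathrm{U}(\mathcal{H}_{AB})$, the standard Haar average yields $\mathbb{E} f_{U_A,U_C}=\frac{\Tr(U_A)\,\overline{\Tr(U_C)}}{d_A d_C}$, whose modulus is at most $|\Tr(U_C)|/d_C$ since $|\Tr(U_A)|\leq d_A$. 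A short Cauchy--Schwarz calculation, combined with $\|V\|_2=\sqrt{d_C}$, shows that $f_{U_A,U_C}$ is $\tfrac{2}{\sqrt{d_C}}$-Lipschitz in the Hilbert--Schmidt distance on $\mathrm{U}(\mathcal{H}_{AB})$. Plugging this into the Levy-type concentration on the unitary group (which delivers $\exp(-\Theta(d_{AB}\eta^2/L^2))$ for an $L$-Lipschitz function) and using $d_{AB}d_C=d$ gives
\begin{equation*}
\prob{|f_{U_A,U_C}(V)|\geq \frac{|\Tr(U_C)|}{d_C}+\eta}\leq 2\exp(-\Theta(d\eta^2)).
\end{equation*}

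To pass from fixed $(U_A,U_C)$ to all admissible pairs, I would $\epsilon$-net both unitary groups in the operator norm. \cref{lem-Vnet} applied to the full unitary groups furnishes nets $\mathcal{M}_\epsilon^A$ and $\mathcal{M}_\epsilon^C$ of sizes at most $(c/\epsilon)^{d_A^2}$ and $(c/\epsilon)^{d_C^2}$ respectively. For any $(U_A,U_C)$ obeying \cref{eq:logicalrequire}, net approximants $(U_A^\prime,U_C^\prime)$ satisfy $|\Tr(U_C^\prime)|/d_C\leq w+O(\epsilon)$ and $|f_{U_A,U_C}(V)-f_{U_A^\prime,U_C^\prime}(V)|\leq O(\epsilon)$ by a triangle inequality in the style of \cref{eq-fLip}. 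A union bound then gives
\begin{equation*}
\prob{V\text{ admits a logical op on }A}\leq \qty(\frac{c}{\epsilon})^{d_A^2+d_C^2}\exp\qty(-\Theta(d(\delta-O(\epsilon))^2)),
\end{equation*}
and choosing $\epsilon$ a sufficiently small multiple of $\delta$, then taking logarithms, yields the claimed bound.

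The main obstacle I anticipate is ensuring the concentration exponent scales with the full $d=d_A d_B d_C$ rather than just $d_{AB}$. This is controlled by the $1/\sqrt{d_C}$ in the Lipschitz constant, which arises from the isometric normalization $\|V\|_2=\sqrt{d_C}$ in the Choi representation and precisely converts $d_{AB}$ into $d$ in the concentration exponent. The remaining pieces---Haar averaging, the triangle-inequality lifting across the $\epsilon$-net, and balancing $\epsilon\sim\delta$---are direct analogues of corresponding steps already appearing in the proofs of \cref{theorem:mainLU,thm-LO}.
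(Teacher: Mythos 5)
Your proposal is correct and follows essentially the same route as the paper's proof: the Choi-state reformulation, the Haar average $\Tr(U_A)\overline{\Tr(U_C)}/(d_Ad_C)$, the $O(1/\sqrt{d_C})$-Lipschitz bound feeding into Levy-type concentration to get the $\exp(-\Theta(d\eta^2))$ tail, operator-norm $\epsilon$-nets of sizes $(c/\epsilon)^{d_A^2}$ and $(c/\epsilon)^{d_C^2}$ on the two unitary groups, and the union bound with $\epsilon\sim\delta$. The only cosmetic difference is that you lift the concentration to $\mathrm{U}(\mathcal{H}_{AB})$ via $V=WP$ while the paper applies it directly on the Stiefel manifold $\mathcal{V}_{d_C}(\mathbb{C}^{d_{AB}})$, and you bound $|f|$ with a two-sided tail where the paper reduces to the real part by a global phase; both choices are equivalent up to constants.
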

To prove this theorem, we need the following result (\cite{ledoux2001concentration}, sec 2.1) concerning the concentration of measurement on $\mathcal{V}_q(\mathbb{C}^p)$, the space of isometries $\mathbb{C}^q\to \mathbb{C}^p$.
\begin{fact}\label{lem:LevyISO}
    If $f: \mathcal{V}_q(\mathbb{C}^p)\to \mathbb{R}$ is $K$-Lipschitz: $|f(V_1)-f(V_2)|\leq K\norm{V_1-V_2}_F$, where the norm is defined as $\norm{V_1-V_2}_F=\sqrt{\Tr((V_1-V_2)^\dagger (V_1-V_2))}$, then there exist absolute constants $c_1, c_2>0$, independent of $p$ and $q$, such that:
    \begin{equation}
        \Pr(|f-\EX{f}|>\eta)\leq c_1\exp(-\frac{c_2p\eta^2}{K^2}).
    \end{equation}
\end{fact}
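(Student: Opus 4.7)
The plan is to closely follow the strategy used for Theorem~\ref{thm-LO}, combining an $\epsilon$-net over the two unitary groups $\mathrm{U}(d_A)$ and $\mathrm{U}(d_C)$ with the Levy-type concentration on the Stiefel manifold of isometries (Fact~\ref{lem:LevyISO}). Introduce the function
\begin{equation}
    f_{U_A,U_C}(V) \defeq \frac{\Tr\bigl(U_C^\dagger V^\dagger (U_A\otimes I_B) V\bigr)}{d_C},
\end{equation}
so that the event ``$V$ admits a logical operator on $A$'' is equivalent to the existence of a pair $(U_A,U_C)\in\mathrm{U}(d_A)\times \mathrm{U}(d_C)$ with $|\Tr(U_C)|/d_C\leq w$ and $|f_{U_A,U_C}(V)|\geq h^2$. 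The isometry $V$ ranges over the Stiefel manifold $\mathcal{V}_{d_C}(\mathbb{C}^{d_{AB}})$.

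First, I would fix $U_A$ and $U_C$ and establish concentration of $f_{U_A,U_C}(V)$ around its mean over Haar random $V$. A standard Haar computation on isometries gives
\begin{equation}
    \EX{f_{U_A,U_C}(V)} \;=\; \frac{\Tr(U_A)\,\Tr(U_C^\dagger)}{d_A\,d_C},
\end{equation}
whose modulus is bounded by $w$ under the non-triviality hypothesis $|\Tr(U_C)|/d_C\leq w$. Next, a Cauchy--Schwarz estimate applied to each term arising when expanding $\Tr(U_C^\dagger V_1^\dagger (U_A\otimes I_B) V_1)-\Tr(U_C^\dagger V_2^\dagger (U_A\otimes I_B) V_2)$ around $V_2$ shows that $f_{U_A,U_C}$ is $K$-Lipschitz with respect to the Frobenius norm with $K=O(1/\sqrt{d_C})$, mirroring the analogous calculation in \cref{eq-fLip}. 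Applying Fact~\ref{lem:LevyISO} to the real and imaginary parts of $f$ separately, with ambient dimension $p=d_{AB}$, then yields
\begin{equation}
    \prob{\,|f_{U_A,U_C}(V)|\geq w+\tfrac{\delta}{2}\,}\;\leq\; 2c_1\exp\!\bigl(-\Theta(d\,\delta^2)\bigr),
\end{equation}
where the key cancellation $p/K^2=\Theta(d_{AB}\cdot d_C)=\Theta(d)$ produces the exponent scaling needed.

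Next, I would extend the bound from a fixed pair $(U_A,U_C)$ to all admissible pairs by means of operator-norm $\epsilon$-nets. Applying Fact~\ref{lem-Vnet} with $d=d'$ gives $\epsilon$-nets $\mathcal{M}_\epsilon^A\subset\mathrm{U}(d_A)$ and $\mathcal{M}_\epsilon^C\subset\mathrm{U}(d_C)$ of sizes at most $(c/\epsilon)^{d_A^2}$ and $(c/\epsilon)^{d_C^2}$ respectively. A short trace inequality shows $|f_{U_A,U_C}(V)-f_{U_A',U_C'}(V)|\leq 2\epsilon$ whenever $\|U_A-U_A'\|_\infty,\|U_C-U_C'\|_\infty\leq\epsilon$, and the constraint $|\Tr(U_C)|/d_C\leq w$ perturbs to $|\Tr(U_C')|/d_C\leq w+\epsilon$. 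Hence the original event implies the existence of a net element $(U_A',U_C')$ satisfying $|f_{U_A',U_C'}(V)|\geq h^2-2\epsilon$ at a shifted mean bounded by $w+\epsilon$; choosing $\epsilon=\Theta(\delta)$ (small enough that $h^2-2\epsilon-(w+\epsilon)\geq \delta/2$) and union-bounding over the two nets yields
\begin{equation}
    \log\prob{\,V\text{ admits a logical operator on }A\,}\;\leq\; -\Theta(d\,\delta^2)+O\!\bigl((d_A^2+d_C^2)\log\tfrac{1}{\delta}\bigr),
\end{equation}
as desired.

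The main obstacle is the delicate matching of dimension factors in the concentration step: the Lipschitz constant $K=O(1/\sqrt{d_C})$ and the Stiefel ambient dimension $p=d_{AB}$ must combine to give $p/K^2=\Theta(d)$, for otherwise the exponent would be either too weak to dominate the subleading term or produce the wrong $d$-scaling. A secondary subtlety is that, since the logical operator is supported on $A$, the $\epsilon$-net only needs to cover the $d_A^2$-dimensional $\mathrm{U}(d_A)$ rather than the far larger $\mathrm{U}(d_{AB})$; this is essential for keeping the subleading term at $O((d_A^2+d_C^2)\log(1/\delta))$ rather than the much weaker $O(d_{AB}^2\log(1/\delta))$ one would otherwise obtain.
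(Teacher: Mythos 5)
There is a genuine gap, and it is a categorical one: your proposal does not prove the statement at hand. The statement is Fact~\ref{lem:LevyISO} itself --- a Levy-type concentration inequality for $K$-Lipschitz functions on the complex Stiefel manifold $\mathcal{V}_q(\mathbb{C}^p)$ with Gaussian rate $\exp(-c_2 p\eta^2/K^2)$. What you have written is instead a proof of Theorem~\ref{thm:LOGICAL} (the no-logical-operator bound), and in the middle of it you \emph{invoke} Fact~\ref{lem:LevyISO} as a black box (``Applying Fact~\ref{lem:LevyISO} to the real and imaginary parts of $f$ separately\ldots''). So the one thing that was to be established is precisely the one thing you assume. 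Your argument for the downstream theorem is otherwise essentially the paper's own (same function $f_{U_AU_C}(V)$, same $2^{-n_C/2}$ Lipschitz estimate, same operator-norm $\epsilon$-nets over $\mathrm{U}(d_A)$ and $\mathrm{U}(d_C)$, same union bound), but that is not the task.

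For the record, the paper does not prove Fact~\ref{lem:LevyISO} either; it quotes it from Ledoux's monograph on the concentration of measure phenomenon (Section~2.1 there). If you wanted a self-contained argument, the standard route is: (i) the unitary group $\mathrm{U}(p)$ (or $\mathrm{SU}(p)$) with normalized Haar measure and the Hilbert--Schmidt Riemannian metric has Ricci curvature bounded below by $\Theta(p)$, so by the Bakry--\'Emery/Gromov--L\'evy criterion every $K$-Lipschitz function on it concentrates with rate $\exp(-\Theta(p\eta^2/K^2))$; (ii) the quotient map $\mathrm{U}(p)\to \mathrm{U}(p)/\mathrm{U}(p-q)\cong\mathcal{V}_q(\mathbb{C}^p)$ is a Riemannian submersion pushing Haar measure to Haar measure and is $1$-Lipschitz with respect to $\norm{\cdot}_F$, so a $K$-Lipschitz function on the Stiefel manifold pulls back to a $K$-Lipschitz function on $\mathrm{U}(p)$ and inherits the same concentration. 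None of this appears in your proposal, so as an answer to the stated problem it is vacuous, even though the surrounding material would be a perfectly reasonable (and essentially paper-identical) proof of Theorem~\ref{thm:LOGICAL}.
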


\begin{proof}[Proof of \cref{thm:LOGICAL}]
    Fixing unitaries $U_A$ and $U_C$, define a function $f_{U_AU_C}: \mathcal{V}_q(\mathbb{C}^p)\to \mathbb{C}$ as follows:
    \begin{equation}
        (V:\mathcal{H}_C\to \mathcal{H}_{AB}) \mapsto f_{U_AU_C}(V)=\bra{EPR}(V\otimes I)^\dagger U_A\otimes U_C^* (V\otimes I)\ket{EPR}.
    \end{equation}
    Here $q=2^{n_C}$, $p=2^{n_{AB}}$. In the following, we sometimes omit the subscript.
    By standard Haar average computation, we have
    \begin{equation}\label{eq:avefUaUb}
        \EX{f(V)}=\frac{\Tr(U_A)\Tr(U_C^*)}{d_Ad_C}.
    \end{equation}

    We now establish the Lipschitz property of $f$.
    Denote $\ket{\psi_i}=(V_i\otimes I)\ket{EPR}, (i=1,2)$.
    We have
    \begin{equation}
        \braket{\psi_1|\psi_2}=\frac{1}{2^{n_C}}\Tr(V_1^\dagger V_2),
    \end{equation}
    hence
    \begin{equation}
    \begin{aligned}
        \norm{V_1-V_2}_F^2
        &=\Tr(2I_C-V_1^\dagger V_2-V_2^\dagger V_1)=2^{n_C+1}(1-\Re \braket{\psi_1|\psi_2})\\
        &\geq 2^{n_C+1}(1-|\braket{\psi_1|\psi_2}|)\geq 2^{n_C}\sin^2\theta(\psi_1,\psi_2).
    \end{aligned}
    \end{equation}
    In the last step, we used an elementary inequality $1-\cos\theta\geq \tfrac{1}{2}\sin^2\theta$.
    Therefore, similar to \cref{eq-fLip}, we have
    \begin{equation}
        |f(V_1)-f(V_2)|\leq \sin\theta(\psi_1,\psi_2) \leq 2^{-n_C/2}\norm{V_1-V_2}_F.
    \end{equation}
    Therefore, $f$ is $2^{-n_C/2}$-Lipschitz with respect to the Frobenius norm.
    Applying fact \ref{lem:LevyISO} to the real part of $f(V)$,
    we obtain 
    \begin{equation}\label{eq:isoPsmall}
        \prob{\Re f(V)-\Re\EX{f(V)}>\eta}\leq c_1\exp(-\frac{c_2\, 2^{n_{AB}}\, \eta^2}{(2^{-n_C/2})^2})=
        c_1\exp(-c_2d\eta^2),
    \end{equation}
    where $d=2^{n_{ABC}}$.
    
    Now we pick $\epsilon$-nets for $\mathrm{U}(\mathcal{H}_A)$ (and $\mathrm{U}(\mathcal{H}_C)$, respectively) using fact \ref{lem-Vnet}, such that the cardinality is less than $\qty(\frac{c}{\epsilon})^{d_A^2}$ (and $\qty(\frac{c}{\epsilon})^{d_C^2}$, respectively).
    Note that if $\norm{U_A-U_A^\prime}_\infty<\epsilon$ and $\norm{U_C-U_C^\prime}_\infty<\epsilon$, then
    \begin{equation}\label{eq-loc77}
        |f_{U_AU_C}(V)- f_{U_A^\prime U_C^\prime}(V)|<2\epsilon,~~~~\text{for~}\forall V\in \mathcal{V}_q(\mathbb{C}^p).
    \end{equation}
    Taking the expectation over $V$, it also follows that:
    \begin{equation}\label{eq-loc78}
        \left|\mathbb{E}\big(f_{U_AU_C}(V)\big)- \mathbb{E}\big(f_{U_A^\prime U_C^\prime}(V)\big)\right|<2\epsilon.
    \end{equation}
    
    We now assume a given isometry $V$ admits a logical operator that is fully supported on region $A$: there exist $U_A$ and $U_C$ such that $\frac{|\Tr(U_C)|}{d_C}\leq w$ and $|f_{U_AU_C}(V)|\geq h^2$ as in \cref{eq:logicalrequire}.
    We can always assume without loss of generality that $f_{U_AU_C}(V)\geq h^2$ (otherwise, we can multiply $U_A$ or $U_C$ with a phase), which implies that 
    \begin{equation}\label{eq-loc79}
        \Re f_{U_A U_C}(V)-\Re\EX{f_{U_A U_C}(V)}
        \geq h^2-\left|\frac{\Tr(U_A)\Tr(U_C^*)}{d_Ad_C}\right|
        \geq h^2-w.
    \end{equation}
    This inequality, together with \cref{eq-loc77,eq-loc78}, implies that there exist $U_A^\prime$ and $U_C^\prime$ in two $\epsilon$-nets respectively such that 
    \begin{equation}
        \Re f_{U_A^\prime U_C^\prime}(V)-\Re\EX{f_{U_A^\prime U_C^\prime}(V)}>h^2-4\epsilon-w.
    \end{equation}
    Substituting it into \cref{eq:isoPsmall} and applying the union bound over the $\epsilon$-nets, we obtain:
    \begin{equation}
        \prob{V \text{~admits a logical operator on~} A} 
        \leq  2c_1\Big(\frac{c}{\epsilon}\Big)^{d_A^2+d_C^2}\exp(-c_2d(h^2-4\epsilon-w)^2).
    \end{equation}
    Then \cref{thm:LOGICAL} is proved by choosing $\epsilon\propto h^2-w$ with a small enough proportional constant, similar to the proof of \cref{eq:LOfinal}.
\end{proof}

\subsection{Miscellaneous comments}

We have established that, when $|\Psi_{ABC}\rangle$ is a Haar random state (or when $V : C \to AB$ is a random isometry), encoded logical qubits cannot be recovered from subsystems $A$ or $B$ assuming $n_A, n_B, n_C<\frac{1}{2}(n_A+n_B+n_C)$. 
On the contrary, when $|\Psi_{ABC}\rangle$ is a random stabilizer state (or when $V$ is a random Clifford isometry), part of the encoded logical qubits can be recovered from subsystems $A$ and $B$.
Namely, let $g_R$ be the number of independent non-trivial logical operators supported on a subsystem $R$; the following relations are well known~\cite{PhysRevA.81.052302}: 
\begin{align}
g_A = I(A:C)\approx n_A+n_C-n_B, \qquad g_B = I(B:C) \approx n_B+n_C-n_A, \qquad g_A + g_B = 2k
\end{align}
where $k=n_C$ in our setting and the approximations hold when $n_R<\frac{n}{2}$ for $R=A,B,C$. 
This suggests that $I(A:C)\sim O(n)$ logical operators can be supported on $A$.
In fact, when $V$ is sampled randomly, it is likely that a given logical operator $\ell_A$ on $A$ can find some other logical operator $r_A$ on $A$ that anti-commutes with $\ell_A$. 
As such, one can choose $\frac{g_A}{2} - o(1)$ pairs of mutually anti-commuting basis logical operators on $A$, suggesting that $\frac{g_A}{2} - o(1)$ logical qubits can be recovered from $A$. 

In the above discussion, it is crucial to consider unitary logical operators $\overline{U_C}$. 
In fact, some \emph{non-unitary} logical operators can be constructed on $A$ or $B$. 
For instance, let us split $C$ further into two subsystems $C=C_{0}C_1$ where $C_0$ consists only of one qubit while $C_1$ consists of $n_C-1$ qubits. 
Consider the following operator
\begin{align}\label{eq:nonUlogical}
V_{C} = X_{C_0} \otimes |0\rangle\langle 0 |_{C_{1}}.
\end{align}
where $|0\rangle\langle 0 |_{C_{1}}$ is a projection operator acting on $C_1$. 
Observe that $|0\rangle\langle 0 |_{C_{1}}$ acting on a Haar random state $|\Psi_{ABC}\rangle$ effectively creates another Haar random state $|\Phi_{ABC_0}\rangle \propto  |0\rangle\langle 0 |_{C_{1}} |\Psi_{ABC}\rangle$ after an appropriate normalization. 
Then, finding a logical operator $\overline{V_{C}}$ for  $|\Psi_{ABC}\rangle$ is equivalent to finding a logical operator $\overline{X_{C_0}}$ for  $|\Phi_{ABC_0}\rangle$, reducing the problem to an EPR distillation in the projected state $|\Phi_{ABC_0}\rangle$.  
Then, if $A$ contains more than half of $ABC_{0}$, namely $n_{A} > n_{B} +1$, $\overline{X_{C_0}}$ can be supported on $A$ even when no logical \emph{unitary} operator can be supported on $A$.

We note that, in fact, \cref{thm:LOGICAL} also holds for non-unitary operators with bounded operator norm. 
The inability to exclude non-unitary logical operators stems from the fidelity threshold $h$: 
for $V_C$ in \cref{eq:nonUlogical}, we have $h^2 =O(1/d_C)$, rendering the bound in \cref{thm:LOGICAL} vacuous when $n_A > n_B + 1$. 
Referring to \cref{eq:nonUlogical} as a non-unitary logical operator implicitly assumes post-selection on the $|0\rangle\langle 0|$ outcome, which effectively rescales the fidelity threshold by the success probability of the measurement.

It is worth noting that the reconstruction of a non-unitary logical operator $\overline{V_C}$ on $A$ can be interpreted as (one-shot) LOCC entanglement distillation where i) one performs projective measurements $\{|i\rangle\langle i| \}_{C_1}$ on $C_1$, ii) sends the measurement outcome $i$ to $A$, and iii) applies an appropriate LU on $A$ to prepare an EPR pair between $A$ and $C_{0}$. See~\cite{Mori:2024gwe} for details.

\section{Generalizations}\label{sec:generalizations}

The method developed in this paper is general and admits broader applicability. 
In this section, we establish three theorems highlighting the tripartite entanglement structure of Haar random tripartite states.
We omit some technical details in this section, as the proofs are similar to those in \cref{sec:LO,sec:LOGICAL}.

\subsection{Nothing is LU-distillable}

We have shown that bipartite EPR-like entanglement is essentially absent in tripartite Haar random states.
This naturally raises a broader question: can any other tripartite pure state be LU-distilled?
For instance, could one distill GHZ-like or W-like entanglement?

We now show a much stronger no-go statement: essentially no nontrivial tripartite pure state of smaller local dimension can be LU-distilled from a tripartite Haar random state.
Let the tripartition be $A,B,C$, and decompose each subsystem as
\begin{align}
A = A_{1} \otimes A_2, \quad 
B = B_{1} \otimes B_2, \quad 
C = C_{1} \otimes C_2.
\end{align}
% Let $|\phi\rangle_{A_1 B_1 C_1}$ be an arbitrary tripartite pure state supported on $A_1B_1C_1$. 
We ask whether there exists a local unitary $U = U_A\otimes U_B\otimes U_C$ and a pure state $|\phi\rangle_{A_1 B_1 C_1}$ such that
\begin{align}\label{eq:LU_distill_phi}
 % \qquad \text{such that} \qquad 
U |\psi\rangle_{ABC} = |\phi\rangle_{A_1 B_1 C_1} \otimes |\text{something}\rangle_{A_2B_2 C_2}.
\end{align}
Without loss of generality, we may assume $1\leq n_1\leq n_2$ (here $n_i=n_{A_i}+n_{B_i}+n_{C_i}$). 
If $n_1\geq n_2$, we can consider an equivalent problem of outputting a pure state $|\phi'\rangle$ on $A_2B_2C_2$.
% if $n_1\leq 2$, then the problem is at most bipartite.

\begin{theorem}
[Nothing is LU-distillable]\label{thm:nothing}
Let $|\psi\rangle_{ABC}$ be a Haar random state with $n_A,n_B, n_C < n/2$. 
% Let $|\phi\rangle_{A_1 B_1 C_1}$ be an arbitrary tripartite pure state with 
Then, the probability that there exists a tripartite pure state $|\phi\rangle_{A_1 B_1 C_1}$ which can be LU-distilled from $|\psi\rangle_{ABC}$ is $o(1)$ in the large-$n$ limit.
\end{theorem}

\begin{proof}[Proof Sketch]
    %Decompose the Hilbert spaces as $H_A=H_{A_1}\otimes H_{A_2}$ (and same for $B,C$).
    Consider the entanglement entropy of $\ket{\psi}$ with respect to the bipartition into $A_1B_1C_1$ and $A_2B_2C_2$, denoted as $S(\psi)$.
    If there exists a LU-distillable tripartite pure state $|\phi\rangle$ on $A_1B_1C_1$ (namely, \cref{eq:LU_distill_phi} holds), then $S(U\psi)=0$.
    Here, this entropy is across the $(A_1B_1C_1):(A_2B_2C_2)$ cut.
    In the following, we relax this condition to $S(U\psi)<\delta$, and upper bound its probability.
    %  (with $\delta<1$)
    % Without loss of generality, we may assume $n_1\leq n_2$. 
    % (Otherwise, we can consider an equivalent problem of outputting a pure state $|\phi'\rangle$ on $A_2B_2C_2$.)
    
    Page's theorem shows that $\mathbb{E}_{\psi}(S(U\psi))=\mathbb{E}_{\psi}(S(\psi))\approx n_1$ for any $U$.
    Moreover, we have the following lemma regarding the Lipschitz continuity of the entanglement entropy:
    \begin{lemma}
        $S(\psi)$ is $O(n_1)$-Lipschitz as a function of $\psi$. 
    \end{lemma}
    \begin{proof}
        Given $\ket{\psi}$ and $\ket{\phi}$, we consider the two-dimensional subspace spanned by them and define $H$ as a rotation that rotates $\ket{\psi}$ to $\ket{\phi}$ in unit time.
        We have $\norm{H}_\infty=\theta(\psi,\phi)$, where $\cos\theta(\psi,\phi)=\abs{\braket{\psi|\phi}}$.
        Ref.~\cite{bravyi2007upper} shows that
        \begin{equation}
            \frac{dS(e^{-iHt}\psi)}{dt}\leq c\norm{H}_\infty\log_2 \min\{d_1,d_2\},
        \end{equation}
        where $c<1.9123$.
        Hence $|S(\psi)-S(\phi)|\leq cn_1\theta(\psi,\phi)$.
    \end{proof}
    \hyperref[LevyLemma]{Levy’s lemma} then implies that $\prob{S(\psi)<\delta}$ is doubly-exponentially small when $\delta$ is bounded well below $n_1$ (say, $\delta \leq n_1 - c\sqrt{n_1} $ with some $c>0$):
    \begin{equation}
        \prob{S(\psi)<\delta}\lesssim \exp(-\frac{d}{n_1}) \lesssim \exp(-\frac{d}{n}),~~~d=2^n.
    \end{equation}
    Covering the space of LUs by an $\epsilon$-net with size $\sim (\frac{1}{\epsilon})^{d_A^2+d_B^2+d_C^2}$ and applying the union bound, we can establish that, with overwhelming probability, no smaller tripartite pure state $|\phi\rangle_{A_1 B_1 C_1}$ can be LU-distilled.    
\end{proof}

This result shows that as $n_A,n_B,n_C$ increase (while remaining below $n/2$), a Haar random state represents a genuinely new form of tripartite entanglement that cannot be reduced, by local unitaries, to any smaller tripartite pure state. In this sense, tripartite Haar random states are maximally irreducible under LU-distillation.\footnote{
While this result concerns LU-distillation of pure states supported on smaller tripartite subsystems,
it also has implications for the LU-distillation of certain target states supported on the original tripartite systems.
For example, it implies that the probability for a tripartite state to be LU-equivalent to a GHZ state vanishes,
since a GHZ state over $ABC$ is the tensor product of two GHZ states over $A_1 B_1 C_1$ and $A_2 B_2 C_2$
(the same conclusion also follows from a direct counting argument).
}

\subsection{Chirality of wavefunction}

Another interesting application of our method concerns the intrinsic chirality of wavefunctions. 
Conventionally, chirality has been characterized dynamically through responses to external perturbations, such as robust edge currents or quantized transport coefficients. 
Recently, there have been significant developments toward characterizing chirality directly from a single many-body wavefunction~\cite{kim2022chiral}.
In this subsection, we will focus on the characterization based on multipartite local unitaries \cite{vardhan2026chirality}. 
More precisely, an $m$-partite pure state $|\psi\rangle_{R_1\cdots R_m}$ is said to be \emph{$m$-partite non-chiral} if there exists a local unitary of the form 
\begin{align}
U = U_{R_1}\otimes \cdots \otimes U_{R_n} \qquad \text{such that} \qquad U |\psi\rangle = |\psi^* \rangle,
\end{align}
where $|\psi^* \rangle$ denotes the complex conjugate of $|\psi\rangle$ in the computational basis. Each component $U_{R_j}$ may be arbitrary. If no such local unitary exists, the state is called \emph{$m$-partite chiral}.

We now ask whether Haar random states are chiral in this sense.
It is instructive to first consider the bipartite case. A Haar random state $|\psi\rangle_{AB}$ is not bipartite-chiral. Indeed, by Page’s theorem (\cref{eq:Page}), it can be converted by bipartite local unitaries into a tensor product of approximate EPR pairs and a product state. Since each EPR pair may be chosen to be real, the resulting state can be mapped to its complex conjugate by further local unitaries. Thus, bipartite Haar random states are generically non-chiral.

The situation changes dramatically in the tripartite setting.

\begin{theorem}[Tripartite chirality]\label{thm:chiral}
    Let $|\psi\rangle_{ABC}$ be a Haar random state with $n_A,n_B,n_C < n/2$. Then $|\psi\rangle_{ABC}$ is tripartite chiral with probability $1-o(1)$ in the large-$n$ limit.
\end{theorem}

\begin{proof}[Proof Sketch]
Suppose there exists a tripartite local unitary $U=U_A\otimes U_B\otimes U_C$ such that $\ket{\psi^*}=U\ket{\psi}$,
or equivalently 
\begin{align}
\mel{\psi^*}{U}{\psi}=1.
\end{align}
Define
\begin{align}
f_U(\psi):=\mel{\psi^*}{U}{\psi}.
\end{align}
One verifies that $f_U$ is $O(1)$-Lipschitz on the unit sphere $S^{2d-1}$.

We next evaluate the Haar average of $f_U(\psi)$. The Haar measure is invariant under global phase rotations $|\psi\rangle \mapsto e^{i\theta} |\psi\rangle$. Under this transformation, we have
\begin{align}
f_U(\psi) \mapsto e^{i2\theta} f_U(\psi).
\end{align}
Hence,
\begin{align}
\mathbb{E}_\psi \big[ f_U(\psi) \big] =
e^{i2\theta}
\mathbb{E}_\psi \big[ f_U(\psi) \big]
\end{align}
for all $\theta$, which implies
\begin{align}
\mathbb{E}_{\psi} \bigl[ f_U(\psi) \bigr] = 0.
\end{align}

Applying \hyperref[LevyLemma]{Levy’s lemma}, we find, for any fixed $U$, $\prob{f_U(\psi)\approx 1}$ is doubly exponentially small in $n$.
Therefore, using the $\epsilon$-net and the union bound argument, we can establish that, with overwhelming probability, no tripartite local unitary maps $|\psi\rangle$ to $|\psi^*\rangle$. 
\end{proof}

\begin{comment}
\begin{proof}[Proof Sketch]
    Suppose $\ket{\psi^*}=U\ket{\psi}$ where $U=U_A\otimes U_B\otimes U_C$, then $\mel{\psi^*}{U}{\psi}=1$.
    Let us consider $f_U(\psi)=\mel{\psi^*}{U}{\psi}$. Taking the Haar average, we have
    \begin{align}
    \EX{ \langle \psi|U | \langle }
    \end{align}
    
    Similar to \cref{eq-fLip}, $f_U(\psi)$ is $O(1)$-Lipshitz.

    Regarding $\mathbb{C}^{d}$ as $\mathbb{R}^{2d}$, we have the Haar average:
    \begin{equation}
        \EX{f_U(\psi)} = \frac{1}{2d}\Tr(K\tilde{U}).
    \end{equation}
    Here $K$ is the complex conjugation; $\tilde{U}$, a $2d\times 2d$ real orthogonal matrix, is the realification of $U$.
    In the standard basis,
    \begin{equation}
        K=\begin{bmatrix}
I & 0 \\
0 & -I
\end{bmatrix},~~
\tilde{U}=\begin{bmatrix}
A & B \\
-B & A
\end{bmatrix},
    \end{equation}
where $U=A+iB$.
Therefore, $\Tr(K\tilde{U})=0$.

Levy's lemma implies $\prob{f_U(\psi)\approx 1}$ is double exponentially small.
Then use $\epsilon$-net and union bound.    
\end{proof}
\end{comment}

\subsection{Absence of global symmetry}

The final application our method concerning the absence of global symmetry. 
Conventionally, a global symmetry is characterized as a tensor-product representation of a group acting on local degrees of freedom. 
A prototypical example is a global $\mathbb{Z}_2$ symmetry on qubits generated by $\otimes_j X_j$, which implements a simultaneous spin flip on every site and corresponds to conservation of total $\mathbb{Z}_2$ charge.

More generally, one may characterize a global symmetry of a pure state as invariance under a tensor-product unitary. Namely, given an $m$-partite state $|\psi\rangle_{R_1 \cdots R_m}$, we say that it possesses a (possibly approximate) global symmetry if there exists a tensor-product unitary
\begin{align}
U = U_{R_1} \otimes \cdots \otimes U_{R_m} \qquad
\text{such that} \qquad
U |\psi\rangle \approx |\psi\rangle,
\end{align}
where at least one $U_{R_{j}}$ is non-trivial (not proportional to the identity) in a sense of \cref{eq:logicalrequire}. 
This is a broad characterization as we do not require the operators to form a finite group, nor do we require exact equality. Nevertheless, this formulation contains the conventional notion of (exact or approximate) global symmetry under finite groups as a special case.

A natural question is whether Haar random states admit any such (approximate) tensor-product symmetry.
The bipartite case again provides useful intuition. A Haar random state $|\psi\rangle_{AB}$ can be brought, by local unitaries, into a tensor product of approximate EPR pairs and a product state. Since each EPR pair is invariant under nontrivial operators such as $X \otimes X$ and $Z \otimes Z$, conjugating these symmetries back by the inverse local unitaries produces nontrivial tensor-product operators $U_A \otimes U_B$ that leave $|\psi\rangle_{AB}$ invariant. Hence bipartite Haar random states generically possess bipartite tensor-product symmetries.

In contrast, this structure disappears in the tripartite setting. 

\begin{theorem}[Absence of global symmetry]\label{thm:global}
Let $|\psi\rangle_{ABC}$ be a Haar random state with $n_A,n_B,n_C < n/2$. Then, with probability $1-o(1)$ in the large-$n$ limit, there does not exist a tensor-product unitary
\begin{align}
U = U_A \otimes U_B \otimes U_C, \qquad \text{such that} \qquad U|\psi\rangle = |\psi\rangle,
\end{align}
with at least one $U_R$ being nontrivial.
\end{theorem}
Here and in \cref{thm:LOGICALgeneral}, nontriviality is defined as in the second inequality of \cref{eq:logicalrequire}.

\begin{proof}[Proof sketch]
Suppose such a tensor-product unitary exists. Then $\langle \psi | U | \psi \rangle = 1$.
Define
\begin{align}
g_U(\psi) := \langle \psi | U | \psi \rangle.
\end{align}
where $g_U$ is $O(1)$-Lipschitz on the unit sphere.
We first compute its Haar average:
\begin{align}
\mathbb{E}_\psi \big[ g_U(\psi) \big] =
\frac{\mathrm{Tr}(U)}{d},
\end{align}
where $d=2^n$. If at least one of $U_A,U_B,U_C$ is nontrivial, then $|\mathrm{Tr}(U)| \leq w d$ and hence
\begin{align}
\mathbb{E}_\psi \bigl[ g_U(\psi) \bigr] = w<1.
\end{align}

Applying \hyperref[LevyLemma]{Levy’s lemma}, we find, for fixed $U$, $\prob{g_U(\psi)\approx 1}$ is doubly exponentially small.
Covering the space of tensor-product unitaries by an $\epsilon$-net and applying the union bound, we can establish that, with overwhelming probability, $|\psi\rangle$ does not have non-trivial tripartite global symmetry.
\end{proof}

An analogous statement holds for Haar random isometries, which extends \cref{thm:LOGICAL}.
\begin{theorem}[Absence of tensor-product logical operator]\label{thm:LOGICALgeneral}
Let $V: C \rightarrow AB$ be a Haar random isometry with $n_C < n_A+n_B$ and $|n_A-n_B|<n_C$. Then, with overwhelming probability, $V$ does not admit a nontrivial logical operator of the form $U_A \otimes U_B$.
\end{theorem}

The proof proceeds in the same manner as \cref{thm:LOGICAL}, by considering the concentration of
\begin{align}
\mathrm{Tr}(V^\dagger (U_A \otimes U_B) V U_C^\dagger)
\end{align}
over the Haar measure and applying an $\epsilon$-net argument.

\section{Holography}

In this section, we illustrate two particular applications of our results in the AdS/CFT correspondence. 
The first question concerns the presence/absence of bipartite entanglement in holographic mixed states.
The second question concerns whether the converse of entanglement wedge reconstruction holds or not. 
Both questions have remained unresolved and led to important conceptual puzzles. 
While we will keep the presentation of this section minimal, we encourage readers to refer to~\cite{Mori:2024gwe} for detailed discussions and background of these problems in the AdS/CFT correspondence. 

\subsection{Entanglement distillation}

In the AdS/CFT correspondence, entanglement entropy $S_{A}$ of a boundary subsystem $A$ is given by the Ryu-Takayanagi (RT) formula 
\begin{align}
S_{A} = \frac{1}{4G_{N}} \min_{\gamma_A} \text{Area}(\gamma_A) + \cdots
\end{align}
at the leading order in $1/G_{N}$ for static geometries where $\gamma_A$ is a bulk surface homologous to $A$ in the asymptotically AdS spacetime, and $G_N$ is the Newton's constant. 
(Recall that $G_N$ is a very small constant.)
This formula predicts that two boundary subsystems $A$ and $B$ can have large mutual information even when they are spatially disconnected on the boundary with a separating subsystem $C$:
\begin{align}
I(A:B) \equiv S_{A} + S_{B}-S_{AB} = O(1/G_{N})
\end{align}
when the minimal surface $\gamma_{AB}$ extends into the bulk and connects $A$ and $B$ with a \emph{connected entanglement wedge}.\footnote{At the leading order in $G_N$, the entanglement wedge of a boundary subregion $R$ is defined as a bulk subregion enclosed by the minimal surface $\gamma_R$ together with the AdS boundary. See Fig.~\ref{fig_reconstruction}(a) for its illustration.} A prototypical example is depicted in Fig.~\ref{fig_connected} for the AdS${_3}$/CFT$_{2}$. 

The entanglement structure in $\rho_{AB}$ remains mysterious. 
For one thing, the mutual information is sensitive to classical correlations such as those in the GHZ state. 
Evidence from quantum gravity thought experiments and toy models~\cite{Nezami:2016zni, Susskind:2014yaa, Dong:2021clv} suggests that correlations in $\rho_{AB}$ in holography are not of classical nature at the leading order in $1/G_{N}$.
Furthermore, a previous work~\cite{Akers:2019gcv} showed that $|\Psi_{ABC}\rangle$ in holography must contain some tripartite entanglement at the leading order in $1/G_{N}$. 

In~\cite{Mori:2024gwe}, we proposed that a holographic mixed state $\rho_{AB}$ does not contain bipartite entanglement when two individual minimal surfaces $\gamma_A, \gamma_B$ do not overlap in the bulk. 
Namely, a connected entanglement wedge does not necessarily imply EPR-like bipartite entanglement. 
The present paper provides supporting evidence for our proposal. 
Recall that a Haar random state serves as a minimal toy model of holography. 
We have $S_{A} \approx \min (n_A, n-n_A )$ at the leading order in $n$, which can be interpreted as the RT-like formula with the area (equals the total number of qubits across the cut) minimization by employing tensor diagrams:
\begin{align}
S_{A} \approx \min \big(\figbox{2.0}{fig_Haar_A} , \figbox{2.0}{fig_Haar_B} \ \big).  
\end{align}

When $n_A,n_B,n_C < \frac{n}{2}$, we have
\begin{align}
S_{R} \approx  n_R, \qquad (R=A,B,C),
\end{align}
where the minimal surface $\gamma_R$ does not contain the tensor at the center. 
This mimics the situation with a connected entanglement wedge as in Fig.~\ref{fig_connected}. 
Namely, by splitting $C$ into two subsystems, the minimal surface of $AB$ can be schematically depicted as follows
\begin{align}
S_{AB} \approx \ \figbox{2.0}{fig_Haar_wedge} = n_C ,\qquad I(A:B) \approx n_{A} + n_{B} - n_{C} \sim O(n).
\end{align}
Our result shows that no EPR-like entanglement is contained between $A$ and $B$ even when they have a ``connected entanglement wedge''. 

\begin{figure}
\centering
\includegraphics[width=0.25\textwidth]{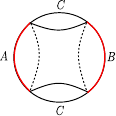}
\caption{Connected entanglement wedge. Here, the minimal area surface of $AB$ is given by geodesics shown in solid lines, leading to a large mutual information $I(A:B)$. Do two subsystems $A$ and $B$ contain EPR-like entanglement?
}
\label{fig_connected}
\end{figure}

While our work provides insights into entanglement properties of holographic mixed states, we hasten to emphasize that whether $\rho_{AB}$ in the real holography possesses EPR-like entanglement or not remains open. 
It is important to recall that Haar random states (and their networks) reproduce entanglement properties of fixed area states, where quantum fluctuations of area operators are strongly suppressed due to the flat spectrum. 
Real holographic states have subleading fluctuations that may potentially lead to bipartite entanglement.
Also, bipartite entanglement in bulk matter fields can contribute to subleading bipartite entanglement in the boundary. 

Our result on tripartite Haar random states is formulated for a finite-dimensional Hilbert space with a tensor product structure $\mathcal H=\mathcal H_A\otimes\mathcal H_B\otimes\mathcal H_C$. 
However, many physically relevant systems, particularly those arising in quantum field theory (QFT), are infinite-dimensional. 
Furthermore, in relativistic QFT, the Hilbert space does not literally factorize into a tensor product for spatially adjacent regions due to the Reeh–Schlieder property and the type-III nature of local von Neumann algebras. 
This naturally leads to the question of how our results should be interpreted in the context of continuum or ``non-qubit'' systems. 

In continuum theories, bipartite entanglement can be unbounded or even divergent. 
For example, the entanglement entropy of a finite spatial region $A$ in relativistic QFT diverges, and two complementary regions $A$ and $B=A^c$ may share an infinite amount of entanglement. 
Such divergences are typically regulated using a UV cutoff (e.g., a lattice spacing or energy cutoff). 
The resulting entanglement is dominated by short-distance modes near the entangling surface, and is accordingly referred to as UV entanglement. 
This contribution indeed contains large amounts of bipartite, EPR-like entanglement between neighboring subregions.

At first sight, this conventional picture of QFT entanglement seems to be at odds with our result, which asserts the absence of EPR-like pairs. 
A natural interpretation is that the Haar random state result concerns the IR, long-distance structure of entanglement, after coarse-graining the UV correlations.
In this sense, the Haar random state may be viewed as a toy model for the entanglement structure that remains after all UV, short-range bipartite correlations have been removed.

This can be seen clearly in the holographic entanglement structure illustrated in Fig.~\ref{fig_connected}. In that geometry, neighboring regions $A$ and $C$, or $B$ and $C$, share large amounts of UV bipartite entanglement. By contrast, spatially separated regions $A$ and $B$ do not exhibit UV-divergent entanglement, although their total entanglement remains parametrically large, of order $1/G_N$. 
The absence of UV contributions is reflected in the fact that the mutual information $I(A:B)$ is independent of the cutoff scale. Equivalently, the minimal surfaces associated with $A$ and $B$ are spatially separated in the bulk, ensuring that short-distance modes near the boundary do not contribute to the $A$-$B$ correlation. 
The resulting entanglement is therefore purely IR in nature.
This IR contribution is encoded geometrically in the central bulk region bounded by the minimal surfaces of $A$, $B$, and $C$.

\subsection{Entanglement wedge reconstruction (and its converse)}

In the AdS/CFT correspondence, the physics of bulk quantum gravity is holographically encoded into boundary quantum systems like a quantum error-correcting code. 
The conceptual pillar behind this interpretation is \emph{entanglement wedge reconstruction}~\cite{Almheiri:2014lwa} which asserts that, \emph{if} a bulk operator $\phi$ lies inside the entanglement wedge $\mathcal{E}_A$ of a boundary subsystem $A$, then it can be expressed as some boundary operator $O_A$ which is supported exclusively on $A$:
\begin{align}\label{eq:ifstatement}
\text{$\phi$ can be reconstructed on $A$} 
\ \Leftarrow \ \text{$\phi$ is inside $\mathcal{E}_A$}. 
\end{align}

While the microscopic mechanisms of the bulk reconstruction still remain somewhat mysterious, random tensor network toy models can provide crucial insights on how the bulk operators may be reconstructed on the boundary subsystems. 
Let us illustrate the idea by using a Haar random state $|\Psi\rangle$ as a minimal toy model. 
Let us first assume that the bulk consists only of one qubit, which is encoded into $n-1$ boundary qubits by viewing an $n$-qubit Haar random state $|\Psi\rangle$ as an encoding isometry $1 \to n-1$ as schematically shown below:
\begin{align}
\figbox{2.0}{fig_reconstruction}  
\end{align}
where the bulk qubit is denoted by $C$ and the boundary qubits are partitioned into $AB$. 

In this toy model, the question of whether the bulk unitary operator $U_C$ can be reconstructed on a subsystem $A$ is equivalent to whether a logical unitary operator $\overline{U_C}$ can be supported on $A$ in the $C\to AB$ quantum error-correcting code. 
It is well known that $A$ supports a non-trivial logical operator $\overline{U_C}$ when $n_A > \frac{n}{2}$ (and does not support one when $n_A < \frac{n}{2}$). 
\footnote{
This can be done by applying the Petz recovery map~\cite{Barnum:2002bfd}.
}

This standard result on Haar encoding can be understood as entanglement wedge reconstruction. 
Recall that, for static cases in the AdS/CFT correspondence, the entanglement wedge is computed by minimizing the generalized entropy
\begin{align}
S_A = \min_{\gamma_A} \frac{ \text{Area}(\gamma_A) }{4 G_N} + S_{\text{bulk}},
\end{align}
where $S_{\text{bulk}}$ is a bulk entropy on a subregion surrounded by $\gamma_A$. 
When $A$ occupies more than half of the total system, we have
\begin{align}
S_A = 
\figbox{2.0}{fig_reconstruction_A_large}
\end{align}
where the bulk qubit $C$ is inside the $\mathcal{E}_A$, suggesting its recoverability on $A$. 
Here, $+1$ comes from $ S_{\text{bulk}}=S_C = 1$.
On the other hand, when $A$ occupies less than half of the total system, the minimal surface is given by 
\begin{align}
S_A = 
\figbox{2.0}{fig_reconstruction_A_small}  
\end{align}
where the bulk qubit is outside the $\mathcal{E}_A$. 
Instead, in this case, the bulk qubit is inside $\mathcal{E}_B$, suggesting the recoverability on $B$. 
Hence, the bulk information about $C$ can be recovered from either $A$ or $B$ (unless the sizes of $A$ and $B$ match exactly).

An analogous setup can be considered in the AdS$_{3}$/CFT$_{2}$ correspondence as depicted in Fig.~\ref{fig_reconstruction}(a). 
Here, observe that when the bulk degree of freedom (DOF) $C$ carries a subleading entropies ($O(1)$ or more generally $O(1/G_N^{a})$ with $0<a<1$), the minimal surface $\gamma_A^{EW}$ for defining the entanglement wedge $\mathcal{E}_A$ matches with the minimal area (Ryu-Takayanagi) surface $\gamma_A^{RT}$:
\begin{align}
\gamma_A^{EW} \approx \gamma_A^{RT} \qquad \text{at leading order in $1/G_N$}.
\end{align}
Hence, if the bulk DOF $C$ is enclosed by $\gamma_A^{RT}$, an operator acting on $C$ can be reconstructed on $A$.  
On the other hand, if $C$ lies outside $\gamma_A^{RT}$, it will be enclosed by $\gamma_B^{RT}$ since $\gamma_A^{RT}=\gamma_B^{RT}$ (unless we fine-tune the sizes of $A,B$ so that there are multiple minimal area surfaces).
Thus, an operator acting on $C$ can be reconstructed on $B$.
Recalling the no-cloning theorem, this also implies that an operator on $C$ cannot be reconstructed on $A$ if $C$ lies outside $\gamma_A^{RT}$. 
(Otherwise, quantum information encoded on $C$ could be reconstructed on both $A$ and $B$, creating two copies of $C$.)
Hence, unless the bulk DOF $C$ lies exactly at the minimal surface $\gamma_A^{EW}$, it can always be reconstructed on \emph{one and only one} subsystem, $A$ or $B$. 

\begin{figure}[h]
\centering
a)\raisebox{5.5mm}{\includegraphics[width=0.30\textwidth]{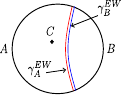}}
b)\includegraphics[width=0.35\textwidth]{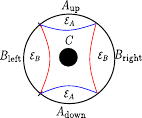}
\caption{
Entanglement wedge reconstruction.
a) When $C$ carries subleading entropy, the minimal surfaces $\gamma_A^{EW}$ and $\gamma_B^{EW}$ coincide. 
Hence, $C$ can be reconstructed on one and only one subsystem $A$ or $B$. 
b) Here, $A=A_{\text{up}} \cup A_{\text{down}}$ and $B=B_{\text{left}}\cup B_{\text{right}}$. 
When $C$ carries leading order entropy, the minimal surfaces $\gamma_A^{EW}$ and $\gamma_B^{EW}$ do not necessarily coincide, and $C$ may not be contained inside $\mathcal{E}_A$ or $\mathcal{E}_B$.  
Can $A$ or $B$ reconstruct $C$? 
}
\label{fig_reconstruction}
\end{figure}

In the above arguments for a Haar random state as well as holography, we have observed that, when the bulk $C$ carries a subleading entropy, entanglement wedge reconstruction is an \emph{if and only if} statement at the leading order in $1/G_N$ (or $n$):
\begin{align}
\text{$\phi$ can be reconstructed on $A$} 
\ \Leftrightarrow \ \text{$\phi$ is inside $\mathcal{E}_A$} \qquad (\text{when $S_{\text{bulk}}$ is subleading}).
\end{align}
A naturally arising question concerns whether this remains the case when $C$ carries a leading order entropy. 
Indeed, it is worth emphasizing that the original entanglement wedge reconstruction \cref{eq:ifstatement} is an \emph{if} statement, implying that a bulk operator can be reconstructed on $A$ \emph{if} it is contained inside $\mathcal{E}_A$. 
Some of previous works shows that the converse statement also holds if the complementary recovery is possible~\cite{Akers:2021fut}.
However, the complementary recovery generally fails when the bulk region $C$ carries leading order entropies.
In such cases, physical or analytical evidence supporting the converse statement has thus far been lacking.
In other words, whether the converse statement holds or not remains unclear:
\begin{align}
\text{$\phi$ can be reconstructed on $A$} 
\ \overset{?}{\Rightarrow} \ \text{$\phi$ is inside $\mathcal{E}_A$} .
\end{align}
When the bulk $C$ is subleading, we were able to promote it to an if and only if statement since $\gamma_A^{EW}=\gamma_B^{EW}$ at the leading order.
However, this may fail when $C$ is not subleading.

One particular holographic setup, highlighting this subtlety, was considered in~\cite{Akers:2020pmf} (Fig.~\ref{fig_reconstruction}(b)). 
Here, the boundary is divided into four segments of roughly equal sizes and organized into $A$ and $B$. 
In the absence of the bulk DOF, the minimal surfaces of $A$ and $B$ would be the same and given by the smaller of the two geodesic lines, colored in red and blue in Fig.~\ref{fig_reconstruction}(b). 
However, when $S_{C} = O(1/G_{N})$, the minimal surface locations may change at the leading order as $S_{\text{bulk}}=S_C$ needs to be included in the evaluation of the generalized entropy.
In particular, we can observe that $\gamma_A^{EW}\not=\gamma_B^{EW}$ at the leading order when the length difference for the red and blue geodesics, multiplied by $1/4G_N$, is smaller than $S_C$ (Fig.~\ref{fig_reconstruction}(b)). 
In this case, the bulk $C$ lies outside $\mathcal{E}_A$ or $\mathcal{E}_B$.~\footnote{ 
The reason why we consider a partition into four segments is that it leaves a sufficiently large bulk region (AdS size) where a bulk DOF with $1/G_N$ entropy can be placed without worrying about backreaction. 
One may replace $C$ with a small (sub-AdS size) black hole or a conical singularity in order to explicitly account for backreaction. We also emphasize that this setup does not require fine-tuning of the sizes of $A$ and $B$ as long as the condition on the two geodesic lengths is met. }

The key question is whether the bulk $C$ is recoverable from $A$ or $B$. 
The converse of entanglement wedge reconstruction would say no since $C$ lies outside $\mathcal{E}_A$ or $\mathcal{E}_B$. 
However, we can observe that $I(C:A), I(C:B) = O(1/G_N)$ in the Choi state, which suggests the presence of leading order correlations between the input $C$ and the output subsystems $A$ and $B$. 
Our result on logical operators supports the converse of entanglement wedge reconstruction as a Haar random encoding with $n_A,n_B,n_C < \frac{n}{2}$ mimics the situation considered in Fig.~\ref{fig_reconstruction}(b); 
\begin{align}
\figbox{2.0}{fig_reconstruction_shadow_A}  
\end{align}
where $\mathcal{E}_A$ and $\mathcal{E}_B$ do not contain $C$. 
Theorem~\ref{thm:LOGICAL} from \cref{sec:LOGICAL} then suggests that a non-trivial logical unitary operator $U_C$ cannot be reconstructed on either $A$ or $B$. 

In summary, our result suggests that the converse of entanglement wedge reconstruction is true even when the bulk DOF carries a leading-order entropy. 
One interesting consequence is that there can be a bulk region whose information cannot be reconstructed on either $A$ or $B$. 
In~\cite{Mori:2024gwe}, such a bulk region was referred to as shadow of entanglement wedge. 
This is in strong contrast with random stabilizer states, where the bulk information $C$ can be recovered from either $A$ or $B$ at the leading order. 
Note that it is essential to restrict to unitary logical operators in the above argument, as discussed in section~\ref{sec:LOGICAL}.

\section{Outlook}

In this paper, we showed that a Haar random state does not contain bipartite entanglement under non-trivial tripartition satisfying $n_A, n_B,n_C < \frac{n}{2}$ at the limit of large $n$. 
In the quantum error-correction picture ($C\to AB$), this implies that neither subsystem $A$ nor $B$ can support any logical operator if $n_C < n_A + n_B$ and $|n_A - n_B|< n_C$. 
We also discussed two particular applications of our results in the AdS/CFT correspondence: one about the presence/absence of bipartite entanglement in holographic mixed states and the other about the converse of entanglement wedge reconstruction and the possible extensive bulk region that cannot be reconstructed from bipartite subsystems. 

%It will be interesting to explore the implications of these results in many-body physics and quantum gravity, as well as applications to quantum information processing tasks. 

A natural next step is to consider more structured toy models of holography, such as Haar random tensor networks. 
While Haar random states serve as minimal toy models, tensor networks incorporate geometric locality and are therefore closer in spirit to AdS/CFT.
An important question is whether the absence of bipartite distillable entanglement persists once such geometric structures are introduced.
Technically, this would require understanding whether concentration-of-measure arguments, underlying our results, remain valid in random tensor network ensembles.

The possible breakdown of complementary recovery in holography may also have important implications.
Our result predicts the existence of a semiclassical region that is not contained in the entanglement wedge of any bipartite boundary subsystem.
As an example, in an accompanying work we proposed that the breakdown of complementary recovery for Haar random isometries provides a possible mechanism for the emergence of semiclassical closed baby universes~\cite{Mori:2025jej}.
We expect that this perspective may prove useful in understanding the black hole interior and the role of black hole singularities, where Haar random (or pseudorandom) ideas have already played an important role.

Another important direction is to determine how much randomness is actually required for our conclusions to hold.
Haar randomness is an extreme assumption, and physically relevant states may only approximate it.
It would be interesting to investigate whether similar suppression results hold for states drawn from less random ensembles, such as approximate unitary $k$-designs or $T$-doped Clifford circuits.
Technically, this amounts to asking whether the concentration phenomena underlying our proofs persist under weaker pseudorandomness assumptions, and to identifying the minimal degree of scrambling necessary to eliminate bipartite distillable entanglement.

Furthermore, our analysis places no restriction on the computational complexity of the local operations used in the distillation protocol.
In practice, however, implementable operations are constrained by circuit depth and gate count.
It would therefore be interesting to study entanglement distillation under bounded-complexity restrictions and to understand how computational constraints modify the suppression behavior.
%Clarifying the interplay between randomness, concentration, and circuit complexity may reveal a deeper connection between multipartite entanglement structure and computational scrambling.

\subsection*{Acknowledgment}

We thank the anonymous referees for their stimulating comments.
Research at Perimeter Institute is supported in part by the Government of Canada through the Department of Innovation, Science and Economic Development and by the Province of Ontario through the Ministry of Colleges and Universities. 
This work is supported by JSPS KAKENHI Grant Number 23KJ1154, 24K17047.
This work is supported by the Applied Quantum Computing Challenge Program at the National Research Council of Canada.

\bibliographystyle{JHEP}
\bibliography{ref.bib}
\end{document}